\author{Nadiia Chepurko \\ MIT \\ \href{mailto:nadiia@mit.edu}{\texttt{nadiia@mit.edu}} \and Kenneth L. Clarkson \\ IBM Research \\ \href{mailto:klclarks@us.ibm.com}{\texttt{klclarks@us.ibm.com}} 

\and
Lior Horesh \\ IBM Research \\ \href{mailto:lhoresh@us.ibm.com}{\texttt{lhoresh@us.ibm.com}} 

\and Honghao Lin \\ CMU \\ \href{mailto:honghaol@andrew.cmu.edu}{\texttt{honghaol@andrew.cmu.edu}}

\and David P. Woodruff \\ CMU \\ \href{mailto:dwoodruf@cs.cmu.edu}{\texttt{dwoodruf@cs.cmu.edu}}}
\date{}
\title{Quantum-Inspired Algorithms from Randomized Numerical Linear Algebra}
\newcommand{\norm}[1]{{\| #1 \|}}
\def\SS{{S}}
\def\hL{{\hat{L}}}
\def\rank{\textsf{rank}}
\def\rows{n}
\def\colms{d}
\def\lognd{\log(\rows\colms)}
\newcommand\twomat[2]{\left[\begin{smallmatrix} #1 \\ #2 \end{smallmatrix} \right] }
\newcommand\Al{A_{(\lambda)}}
\newcommand\hAl{\hat{A}_{(\lambda)}}
\newcommand\Iden{{I}}
\newcommand{\tA}{{\hat A}}
\newcommand{\cE}{\mathcal{E}}
\newcommand{\hA}{{\hat A}}
\newcommand{\ha}{{\hat a}}
\newcommand{\rA}{k}
\newcommand{\tO}{\tilde {O}}
\newcommand{\tX}{\tilde {X}}
\newcommand{\tq}{\tilde {q}}
\newcommand{\hb}{\hat{b}}
\newcommand{\hB}{\hat{B}}
\newcommand{\hU}{\hat{U}}
\newcommand{\hSigma}{\hat{\Sigma}}
\newcommand{\hV}{\hat{V}}
\newcommand{\tY}{\tilde{Y}}
\DeclareMathOperator{\nnz}{\mathtt{nnz}}
\DeclareMathOperator{\E}{{E}}
\DeclareMathOperator{\argmin}{argmin}
\DeclareMathOperator{\colspan}{\mathrm{im}}
\DeclareMathOperator{\Samp}{\textsc{Samp}}
\DeclareMathOperator{\Sampler}{\textsc{Sampler}}
\DeclareMathOperator{\DynSampler}{\textsc{DynSampler}}
\DeclareMathOperator{\DynSamp}{\textsc{DynSamp}}
\def\samplesize{v}
\newcommand\sd{{\texttt{sd}}}
\newcommand\tx{{\tilde x}}
\newcommand\hsig{{\hat\sigma}}
\newcommand\hkap{{\hat\kappa}}
\newcommand{\defeq}{\stackrel{\textit{\tiny{def}}}{=}}
\newcommand{\poly}{{\mathrm{poly}}}
\newcommand{\eps}{\varepsilon}
\newcommand{\R}{{\mathbb R}}
\newtheorem{theorem}{Theorem}
\newtheorem{definition}[theorem]{Definition}
\newtheorem{lemma}[theorem]{Lemma}
\newtheorem{remk}[theorem]{Remark}
\newtheorem{exmp}[theorem]{Example}
\newtheorem{problem}[theorem]{Problem}
\renewcommand{\epsilon}{\varepsilon}
\begin{document}
\begin{titlepage}
\maketitle
\thispagestyle{empty}

\begin{abstract}
We create classical (non-quantum) dynamic data structures supporting queries for recommender systems and least-squares regression that are comparable to their quantum analogues. De-quantizing such algorithms has received a flurry of attention in recent years; we obtain sharper bounds for these problems. More significantly, we achieve these improvements by arguing that the previous quantum-inspired algorithms for these problems are doing leverage or ridge-leverage score sampling in disguise;
these are powerful and standard techniques in randomized numerical linear algebra. With this recognition, we are able to employ the large body of work in numerical linear algebra to obtain algorithms for these problems that are simpler or faster (or both) than existing approaches. Our experiments demonstrate that the proposed data structures also work well on real-world datasets.
\end{abstract}

\end{titlepage}

\vspace{-0.1in}
\section{Introduction}
\vspace{-.1in}
In recent years, quantum algorithms for various problems in numerical linear algebra have been
proposed, with applications including least-squares regression and recommender systems \cite{harrow2009quantum,lloyd2016quantum, PhysRevLett.113.130503,gilyen2019quantum, zhaoquantumgaussian,brandoQuantumSdp, vanapeldoornImprovedQuantumSdp,lloyd2014quantum, cong2016quantum,BerryHamSim}.
Some of these algorithms have the striking property that their running times do not depend on
the input size. That is, for a given matrix $A\in\R^{\rows\times\colms}$ with $\nnz(A)$ nonzero entries,
the running times for these proposed quantum algorithms are at most polylogarithmic in 
$\rows$ and $\colms$, 
and polynomial in other parameters of $A$,
such as $\rank(A)$, the condition number $\kappa(A)$, or Frobenius norm $\|A\|_F$. 

However, as observed by Tang \cite{tang2019quantum} and others, there is a catch:
these quantum algorithms depend on a particular input representation of $A$, which is
a simple data structure that allows $A$ to be employed for quick preparation of a quantum state suitable
for further quantum computations. This data structure, which is a collection of weighted complete binary
trees, also supports rapid weighted random sampling of~$A$, for example, sampling
the rows of~$A$ with probability proportional to their squared Euclidean lengths. 
So, if an ``apples to apples''
comparison of quantum to classical computation is to be made, it is reasonable
to ask what can be
accomplished in the classical realm using the sampling that the given
data structure supports. 

A recent line of work analyzes the speedups of these quantum algorithms by developing classical counterparts that exploit these restrictive input and output assumptions, and shows that previous quantum algorithms do not give an exponential speedup.
In this setting, it has recently been shown
that sublinear time is sufficient for least-squares regression
using a low-rank design matrix $A$ \cite{gilyen2018quantum, DBLP:journals/corr/abs-1811-04852}, for computing a low-rank
approximation to input matrix $A$ \cite{tang2019quantum}, 
and for solving ridge regression problems \cite{gilyen2020improved},
using classical (non-quantum) methods, assuming the data structure of trees
has already been constructed. Further, the results obtained in  \cite{tang2019quantum, gilyen2018quantum, gilyen2020improved} serve
as appropriate comparisons of the power of quantum to classical computing,
due to their novel input-output model:
data structures are input, then sublinear-time computations are done,
yielding data structures as output.

The simple weighted-sampling
data structure used in these works to represent the input can be efficiently constructed and stored: 
it uses $O(\nnz(A))$ space, with a small constant overhead, and requires
$O(\nnz(A))$ time to construct, in the static case where the matrix $A$ is given
in its entirety,
and can support updates and queries to individual entries of $A$ in $O(\log (nd))$ time.
However, the existing reported sublinear bounds are high-degree polynomials
in the parameters involved: for instance, the sublinear term in the running time for low-rank least-squares regression is  $\tO(\rank(A)^6\norm{A}_F^6\kappa(A)^{16}/\eps^6)$;
see also more recent work for ridge regression \cite{gilyen2020improved}. This combination of features raises the following question:
\vspace{-0.1in}
\begin{quote}
\emph{
Question 1: Can the sublinear terms in the running time be reduced significantly while preserving the leading order dependence of $O(\nnz(A))$ and $O(\lognd)$ per update (dynamic)? 
}
\end{quote}
\vspace{-0.1in}

Perhaps a question of greater importance is the connection between quantum-inspired algorithms and the vast body of work in randomized numerical linear algebra: see the surveys \cite{DBLP:journals/fttcs/KannanV09,DBLP:journals/ftml/Mahoney11,woodruff2014sketching}. There are a large number of randomized algorithms based on sampling and sketching techniques for problems in linear algebra. Yet prior to our work, none of the quantum-inspired algorithms, which are sampling-based, have discussed the connection to leverage scores, for example, which are a powerful and standard tool. 
\vspace{-0.1in}
\begin{quote}
\emph{Question 2: Can the large body of work in randomized numerical linear algebra be applied effectively in the setting of quantum-inspired algorithms?
}
\end{quote}
\vspace{-0.1in}

\subsection{Our Results} 
We answer both of the questions above affirmatively. In fact, we answer Question 1 by answering Question 2. Namely, we obtain significant improvements in the sublinear terms,
and our analysis relies on simulating \textit{leverage score sampling} and \textit{ridge leverage score sampling},
using the aforementioned data structure to sample rows proportional to squared Euclidean norms. Additionally, we empirically demonstrate the speedup we achieve on real-world and synthetic datasets (see Section \ref{sec:exp}).

\paragraph{Connection to Classical Linear Algebra and Dynamic Data Structures.} The work on quantum-inspired algorithms builds data structures for sampling according to the squared row and column lengths of a matrix. This is also a common technique in randomized numerical linear algebra - see the recent survey on length-squared sampling by Kannan and Vempala \cite{kannan2017randomized}. However, it is well-known that leverage score sampling often gives
stronger guarantees than length-squared sampling; leverage score sampling was pioneered in the algorithms community in \cite{dmm06b}, and made efficient in \cite{drineas2012fast} (see also analogous prior work
in the $\ell_1$ setting, starting with \cite{clarkson2005subgradient}).

Given an $n \times d$ matrix $A$, with $n \geq d$, its (row) leverage scores are the squared row norms of $U$, where $U$ is an orthonormal basis with the same column span as $A$. One can show that any choice of basis gives the same scores. Writing $A = U \Sigma V^T$ in its thin singular value decomposition (SVD),
and letting $A_{i,*}$ and $U_{i,*}$ denote the $i$-th rows of $A$ and $U$ respectively,
we see that $\norm{A_{i,*}} = \norm{U_{i,*} \Sigma}$. Consequently,
letting $k=\rank(A)$, and with $\sigma_1$ and $\sigma_k$ denoting the maximum and minimum non-zero singular 
values of $A$, we have
$\norm{A_{i,*}} \geq \norm{U_{i,*}} \sigma_k(A),$ and $\norm{A_{i,*}} \leq \norm{U_{i,*}}\sigma_1(A)$.

Thus, sampling according to the squared row norms of $A$ is equivalent to sampling from a distribution with ratio distance at most $\kappa^2(A) = \frac{\sigma_{1}(A)^2}{\sigma_{k}(A)^2}$ from the leverage score distribution, that is, sampling a row with probability proportional to its leverage score. This is crucial, as it implies using standard arguments (see, e.g., \cite{woodruff2014sketching} for a survey) that if we oversample by a factor of $\kappa^2(A)$, then we obtain the same guarantees for various problems that leverage score sampling achieves. Notice that the running times of quantum-inspired algorithms, e.g., the aforementioned $\tO(\rank(A)^6\norm{A}_F^6\kappa(A)^{16}/\eps^6)$ time for regression of \cite{gilyen2018quantum} and the $\tO(\norm{A}_F^8 \kappa(A)^2/(\sigma_{min}(A)^6 \eps^4))$ time for regression of \cite{gilyen2020improved}, both take a number of squared-length
samples of $A$ depending on $\kappa(A)$, and thus are implicitly doing leverage score sampling, or in the case of ridge regression, ridge leverage score sampling.

Given the connection above, we focus on two central problems in machine learning and numerical linear algebra,  ridge regression (Problem \ref{prob:ridge}) and low rank approximation (Problem \ref{prob:lra}). We show how to obtain simpler algorithms and analysis than those in the quantum-inspired literature by using existing approximate matrix product and subspace embedding guarantees of leverage score sampling. In addition to improved bounds, our analysis de-mystifies what the rather involved $\ell_2$-sampling arguments of quantum-inspired work are doing, and decreases the gap between quantum and classical algorithms for machine learning problems. We begin by formally defining ridge regression and low-rank approximation, and the dynamic data structure model we focus on.

\begin{problem}[Ridge Regression]
\label{prob:ridge}
Given an $n\times d$ matrix $A$, $n \times d'$ matrix $B$ and a ridge parameter $\lambda\geq 0$, the ridge regression problem is defined as follows: 
\[
\min_{X \in \R^{d \times d'} } \norm{ A X - B }_F^2 + \lambda \norm{X}_F^2,
\]
where $\norm{\cdot}_F^2$ denotes the sum-of-squares of entries.  
\end{problem}

\begin{problem}[Low-Rank Approximation]
\label{prob:lra}
Given an $n\times d$ matrix $A$ and a rank parameter $k \in [d]$, the low-rank approximation problem is defined as follows:  
\[\min_{X \in \R^{n \times d} : \rank(X)=k } \norm{ A - X }_F^2.
\]
\end{problem}

\begin{definition}[Dynamic Data Structure Model]
Given an $n\times d$ matrix $A$, the dynamic data structure supports the following operations in $O(\lognd)$ time: (a) sample row $A_{i,*}$ with probability $\norm{A_{i,*}}_2^2/\norm{A}^2_F$, (b) sample entry $j$ in row $i$ with probability $A_{i,j}^2/\norm{A_{i,*}}_2^2$ and (c) output the $(i,j)$-th entry of $A$. 
\end{definition}

We note that in this input model, reading the entire matrix  would be prohibitive and the algorithms can only access the matrix through the weighted sampling data structure.

We now describe our concrete results in more detail. At a high level, 
our algorithm for ridge regression, Algorithm~\ref{alg:dyn ridgereg}, does the following: sample a subset of the rows of $A$
via length-squared sampling, and take a length-squared sample of the columns of that subset. Then, solve a linear system
on the resulting small submatrix using the conjugate gradient method. Our analysis of this algorithm results in  the following theorem.
(Some of the (standard) matrix notation used is given in Section~\ref{subsec nota}.)

The residual error is bounded in the theorem using
$\frac1{\sqrt{2\lambda}}\norm{U_{\lambda,\perp}B}_F$, where $U_{\lambda,\perp}$
denotes the bottom
$m_S-p$ left singular vectors of a sketch $SA$ of $A$, where~$p$ is such that $\lambda$ is between
$\sigma^2_{p+1}(SA)$ and $\sigma^2_p(SA)$.
(Here we use $p=\rank{A}$ when $\lambda\le \sigma^2_{\rank{A}}$.)
We could write this roughly as $\norm{A_{-p}A^+_{-p}B}/\norm{A_{-p}}$,
where $A_{-p}$ denotes $A$ minus its best rank~$p$ approximation. It is the
part of $B$ we are ``giving up'' by including a ridge term. Proofs for this section are deferred to Appendix~\ref{sec:ridge_proof}.

\begin{theorem}[Dynamic Ridge Regression]\label{thm dyn ridgereg inf}
Given an $n \times d$ matrix $A$ of rank $k$,
an $n \times d'$ matrix $B$,
error parameter $\eps >0$, and ridge parameter $\lambda$, 
let $\kappa^2_\lambda= (\lambda+\sigma^2_1(A))/(\lambda + \sigma^2_k(A))$ be the ridge condition number of $A$, and let $\psi_\lambda=\norm{A}_F^2/(\lambda + \sigma_k^2(A))$. 
Further,  let 
$X^*$ be the optimal ridge regression solution, i.e., $X^* = \argmin_X \norm{AX-B}_F^2 +\lambda\norm{X}_F^2$. 

Then there is a data structure supporting turnstile updates of the entries of $A$ in $O(\lognd)$ time, and an algorithm using that data structure that computes
a sample $SA$ of $m= O\left( \kappa^2_\lambda\psi_\lambda\log(nd) / \epsilon^2 \right)$ rows of $A$, where $S\in\R^{m\times n}$ is a sampling matrix, and outputs
$\tX\in\R^{m\times d'}$, such that with probability $99/100$, 
\[
\norm{A^\top S^\top \tX - X^*}_F
    \le \eps\left(1 + 2\gamma \right) \norm{X^*}_F  + \frac{\eps}{\sqrt{\lambda}}\norm{U_{k,\perp}B}_F,
\]
where
$U_{\lambda,\perp}B$ is the projection of $B$ onto the subspace corresponding to the singular values of $SA$ less than $\sqrt\lambda$;
and $\gamma^2 = \frac{\norm{B}_F^2}{\norm{A A^+ X^*}_F^2}$ is a problem dependent parameter. 

Further, the running time of the algorithm is
$\tO\left(\colms'\eps^{-4}  \psi_\lambda^2 \kappa^2_\lambda \log(\colms)\right)$. 
Finally, for all $i\in[d]$, and $j\in[d']$, an entry $(A^\top S^\top \tX)_{i,j}$ can be computed in $O(m\lognd)$ 
time.
\end{theorem}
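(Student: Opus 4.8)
The plan is to reduce the ridge regression problem to a subspace-embedding / approximate-matrix-product statement for a length-squared sampling matrix $S$, then solve the resulting small system with conjugate gradient and track the two sources of error: the sketching error and the CG truncation error. First I would recall that ridge regression has closed-form solution $X^* = (A^\top A + \lambda I)^{-1} A^\top B$, and that it suffices to show the sampled system $\min_Y \norm{SAY - SB}_F^2 + \lambda\norm{A^\top S^\top Y}_F^2$ (or an equivalent reformulation with the correct ridge term on the small matrix) has optimum $Y^*$ with $A^\top S^\top Y^*$ close to $X^*$. The key observation is that a length-squared row sample of $A$ is, by the discussion preceding the theorem, a leverage-score sample oversampled by a factor $\kappa^2_\lambda$; more precisely, for ridge regression the relevant scores are the \emph{ridge leverage scores} of $A$, i.e., the diagonal of $A(A^\top A+\lambda I)^{-1}A^\top$, and squared row norms dominate these up to the factor $(\lambda + \sigma_1^2(A))/(\lambda+\sigma_k^2(A)) = \kappa^2_\lambda$. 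Summing the squared row norms and dividing by the smallest relevant eigenvalue gives exactly the $\psi_\lambda = \norm{A}_F^2/(\lambda+\sigma_k^2(A))$ factor, so $m = O(\kappa^2_\lambda \psi_\lambda \log(nd)/\eps^2)$ samples suffice for a ridge subspace embedding: with probability $99/100$, $\norm{(SA)^\top SA + \lambda I - (A^\top A + \lambda I)}$ is at most $\eps$ in the appropriate (relative spectral) sense, and similarly $\norm{(SA)^\top SB - A^\top B}_F$ is controlled by approximate matrix product.

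Next I would use these two guarantees to compare $X^* = (A^\top A+\lambda I)^{-1}A^\top B$ with the solution $\tilde X := A^\top S^\top \tY$ of the sketched problem. The standard argument writes $\tilde X - X^*$ as the difference of two regularized inverses applied to nearly-equal right-hand sides; expanding and bounding using the spectral closeness of the Gram matrices yields a bound of the form $\eps\norm{X^*}_F$ plus a term capturing the part of $B$ that the sketch cannot see in the ridge-filtered space. That second term is precisely where $\frac{1}{\sqrt\lambda}\norm{U_{\lambda,\perp}B}_F$ enters: the singular directions of $SA$ below $\sqrt\lambda$ are damped by the ridge term to the point that we are effectively discarding that component of $B$, and $1/\sqrt\lambda$ is the worst-case amplification when dividing by $\sigma^2 + \lambda \le 2\lambda$. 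The factor $(1 + 2\gamma)$ with $\gamma = \norm{B}_F/\norm{AA^+X^*}_F$ absorbs the cross terms between the AMP error $\eps\norm{A^\top B}$-type quantities and $\norm{X^*}_F$; I would keep careful track of these to get the constant $2$ rather than something worse.

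For the running time, the algorithm takes $m = O(\kappa^2_\lambda\psi_\lambda\log(nd)/\eps^2)$ row samples; then it takes a length-squared \emph{column} sample of this $m\times d$ submatrix of size roughly $m' = O(\eps^{-2}\cdot(\text{ridge condition quantities}))$, giving an $m\times m'$ matrix; and then runs conjugate gradient on the $m'\times m'$ (or $m'\times d'$ right-hand side) regularized normal equations. CG converges to relative error $\eps$ in $O(\sqrt{\kappa_\lambda}\log(1/\eps))$ iterations since the system has condition number $O(\kappa_\lambda)$ after regularization, each iteration costing $\tO(m m' d')$ or so; multiplying out and collapsing logs gives $\tO(d' \eps^{-4}\psi_\lambda^2\kappa^2_\lambda\log d)$. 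The turnstile-update bound of $O(\log(nd))$ per entry and the $O(m\log(nd))$ query time for an entry of $A^\top S^\top\tilde X$ follow from the binary-tree sampling data structure: an update touches $O(\log(nd))$ nodes, and reading off one entry of the output requires reading the $m$ sampled rows' relevant coordinate, each accessible in $O(\log(nd))$ time, times the $O(1)$-sized combination with $\tilde Y$.

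The main obstacle I expect is the error analysis for the ``giving up'' term: cleanly defining $U_{\lambda,\perp}$ relative to the \emph{sketch} $SA$ (not $A$), arguing that the threshold index $p$ with $\sigma_{p+1}^2(SA) \le \lambda \le \sigma_p^2(SA)$ is well-defined and stable, and showing the discarded energy is genuinely bounded by $\frac{1}{\sqrt\lambda}\norm{U_{\lambda,\perp}B}_F$ rather than by something referencing $A$ directly — this requires that the ridge subspace embedding preserves not just the Gram matrix but the whole filtered spectrum, which is where the ridge-leverage-score (rather than plain leverage-score) sampling is essential. Establishing that oversampling squared-norm samples by $\kappa^2_\lambda$ yields a ridge-leverage sample of the needed quality, with all constants tracked, is the technical heart; everything else is assembling known approximate-matrix-product and CG-convergence facts.
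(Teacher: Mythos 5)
Your high-level framework is in the right place --- length-squared row samples of $A$ serve as ridge-leverage-score samples oversampled by a factor $\kappa^2_\lambda$, one needs a subspace embedding plus approximate-matrix-product guarantee, and the small system is solved by conjugate gradient --- but there is a concrete gap: your error analysis accounts only for the row sampler $S$, and consequently misattributes the source of the $\frac{\eps}{\sqrt\lambda}\norm{U_{\lambda,\perp}B}_F$ term. The algorithm does not solve the exact sketched ridge system $(SAA^\top S^\top + \lambda I)\tX = SB$; it applies a further \emph{column} sampler $R$ to $SA$ and computes $\tX = (SARR^\top A^\top S^\top + \lambda I)^{-1}SB$ by CG. The paper's proof therefore has two stages. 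First, it defines $X_1 := \argmin_X\norm{SAX - SB}_F^2 + \lambda\norm{X}_F^2$ (which equals $A^\top S^\top(SAA^\top S^\top + \lambda I)^{-1}SB$) and shows $\norm{X_1 - X^*}_F \le \eps\gamma\norm{X^*}_F$, combining a subspace-embedding and AMP condition for $S$ with a ridge-sketching lemma and the prediction-to-solution translation of Lemma~\ref{lem p->x}; this stage produces \emph{no} $U_{\lambda,\perp}$ term at all, and is roughly what your ``difference of regularized inverses'' argument would deliver. Second, it bounds $\norm{A^\top S^\top \tX - X_1}_F$, where $\tX$ involves the column sample $R$, by citing Theorem~2 of Chowdhury et al.~\cite{chowdhury2018iterative}; that theorem, applied to the sub-problem built from $R$, is exactly where the additive $\frac{1}{\sqrt\lambda}\norm{U_{\lambda,\perp}B}_F$ term enters. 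Your proposal, by treating $\tX$ as if it solved the uncompressed system in $S$ alone, cannot reproduce this term, and it is not merely a technicality about ``defining $U_{\lambda,\perp}$ relative to $SA$'': the term quantifies the loss from replacing $SAA^\top S^\top$ with $SARR^\top A^\top S^\top$, which your argument never touches.

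A secondary issue is the CG iteration count. You claim $O(\sqrt{\kappa_\lambda}\log(1/\eps))$ iterations because ``the system has condition number $O(\kappa_\lambda)$,'' but the theorem defines $\kappa^2_\lambda = (\lambda+\sigma_1^2)/(\lambda+\sigma_k^2)$, which is precisely the condition number of the regularized Gram matrix; hence CG needs $O(\sqrt{\kappa^2_\lambda}\log(1/\eps)) = O(\kappa_\lambda\log(1/\eps))$ iterations. The paper's own time accounting multiplies the per-iteration cost $T = \tO(\eps^{-4}\kappa^2_\lambda\psi_\lambda^2\log d)$ by $\kappa_\lambda$, giving the $\kappa^3$ that appears in Table~\ref{tab:table1}, so you should be tracking $\kappa_\lambda$ (not $\sqrt{\kappa_\lambda}$) iterations if you want your running-time derivation to be internally consistent.
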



We note that the ``numerical'' quantities $\kappa_\lambda$  and $\psi_\lambda$ are decreasing in $\lambda$. 
When $\lambda$ is
within a constant factor of $\norm{A}^2$, 
$\psi_\lambda$ is within a constant factor of the \emph{stable rank}
$\norm{A}_F^2/\norm{A}^2$, where the stable rank is always at most $\rank(A)$. We also note that in the theorem, and the remainder of the paper, a \emph{row sampling matrix} $S$ has rows that are multiples of natural basis vectors,
so that $SA$ is a (weighted) sample of the rows of $A$. A column sampling matrix is defined similarly.

\paragraph{Concurrent Work on Ridge Regression.} In an independent and concurrent work, Gily{\'e}n, Song and Tang \cite{gilyen2020improved} obtain a roughly
comparable classical algorithm for regression, assuming access to the tree data structure, which runs in time $\tilde{O}\left( \frac{\| A\|^6_F \|A\|^2_2}{\|A^{+} \|^8_2 \eps^4 }\right)$,
or in the notation above, $\tO(\eps^{-4}\psi_\lambda^3\kappa^2)$, for the special case of $d'=1$. Their algorithm is based on Stochastic Gradient Descent.
%
%
%

Next, we describe our results for low-rank approximation.
We obtain a dynamic algorithm (Algorithm ~\ref{alg:dyn lowrank}) for approximating $A$ with a rank-$k$ matrix,
for a given $k$, and a data structure for sampling from it, 
in the vein of \cite{tang2019quantum}. At a high-level, as with our ridge regression
algorithm, we first sample rows proportional to their squared Euclidean norm (length-squared sampling) and then sample a subset of columns resulting in a small submatrix with $\tO(\eps^{-2}k)$
rows and columns. We then compute the SVD of this matrix, 
and then work back up to $A$ with more sampling and a QR factorization. The key component in our algorithm and analysis is using \emph{Projection-Cost Preserving} sketches (see Definition \ref{def PCP}). These enable us to preserve the Frobenius cost of projections onto all rank-$k$ subspaces simultaneously. As a result, we obtain the following theorem: 

\begin{theorem}[Sampling from a low-rank approximation]
\label{thm lra inf}
Given an $n \times d$ matrix $A$ for which a sampling data structure has been maintained,
target rank $k \in [d]$ and error parameter $\eps>0$, 
we can find sampling matrices $S$ and $R$, and rank-$k$
matrix $W$, such that $\norm{ARWSA-A}_F \le (1+O(\eps))\norm{A-A_{k}}_F.$
Further, the running time is $\tO(\eps^{-6}k^3 + \eps^{-4} \psi_\lambda (\psi_\lambda + k^2 + k \psi_{k}))$, where $\psi_\lambda$ is as in Theorem~\ref{thm dyn ridgereg inf}, and $\psi_{k}  = \frac{\norm{A}_F^2}{\sigma_{k}(A)^2}$ .
Given $j\in [\colms]$, a random index $i\in [\rows]$ with probability distribution
 $(ARWSA)_{ij}^2/\norm{ARWSA)_{*,j}}^2$
 can be generated in expected time $\tO(\psi_\rA + k^2\eps^{-2} \kappa^2)$,
 where $\kappa= \sigma_1(A)\cdot \sigma_{\rank(A)}(A)$.
\end{theorem}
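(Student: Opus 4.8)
The plan is to prove the theorem in three pieces: (i) construct the sampling matrices $S$ and $R$ and the rank-$k$ matrix $W$ and verify $\norm{ARWSA-A}_F\le(1+O(\eps))\norm{A-A_k}_F$; (ii) bound the time to build these objects; and (iii) describe and analyze the data structure that, given a column index $j$, returns a row index $i$ with probability $(ARWSA)_{ij}^2/\norm{(ARWSA)_{*,j}}_2^2$. The conceptual core of (i) is to argue that the only access we have — length-squared row/entry sampling — already suffices to \emph{simulate} ridge leverage score sampling, and hence to build a one-sided $(1\pm\eps)$ projection-cost-preserving (PCP) sketch of $A$ (Definition~\ref{def PCP}). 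Since a PCP sketch by definition approximately preserves $\norm{A-PA}_F^2$ simultaneously over all rank-$k$ orthogonal projections $P$, an (approximately) optimal rank-$k$ subspace for the small sketch is near-optimal for $A$, and this is what ultimately yields the relative-error guarantee.

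For (i): first draw $m_S=\tO(\eps^{-2}\psi_\lambda)$ rows of $A$ with probability proportional to their squared Euclidean norm, forming $S\in\R^{m_S\times n}$. The point is that for a suitable ridge parameter $\lambda$ the $i$-th ridge leverage score satisfies $\tau_i^\lambda=A_{i,*}(A^\top A+\lambda I)^{-1}A_{i,*}^\top\le\norm{A_{i,*}}_2^2/\lambda$, so the normalized length-squared distribution dominates the normalized ridge-leverage distribution up to the factor $O(\psi_\lambda/k)$; oversampling by that factor produces, by the standard analysis of ridge leverage score sampling, a PCP of $A$. Running the same idea on the columns of $SA$ — simulating ridge leverage score column sampling by length-squared sampling — gives a column sampling matrix $R$ and a small matrix $M=SAR$ whose top-$k$ right singular subspace, pulled back through the sampled coordinates, is a rank-$k$ subspace with projection cost for $A$ within $(1+O(\eps))$ of optimal. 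We would then compute the SVD of $M$, use a QR factorization to realign the resulting bases with the sampled rows/columns so that the answer takes the form $ARWSA$ with $W$ of rank $k$, and finish by chaining the two PCP guarantees together with the standard fact that an approximate rank-$k$ approximation of a PCP sketch lifts to one of $A$.

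For (ii): the sampling steps cost $\tO((m_S+m_R)\lognd)$ by the data-structure guarantees, and the dominating cost is the SVD of the $m_S\times m_R$ matrix $M$ (together with the auxiliary QR step); with the sample sizes from (i) — which collapse the inner dimension to $\tO(\eps^{-2}k)$ after the column step at the price of the $\psi_\lambda$ and $\psi_\rA$ overheads incurred when simulating ridge-leverage sampling by length-squared sampling — these total $\tO(\eps^{-6}k^3+\eps^{-4}\psi_\lambda(\psi_\lambda+k^2+k\psi_\rA))$. For (iii): the $j$-th column of $ARWSA$ equals $AR\bigl(W(SA)_{*,j}\bigr)=ARz$ with $z=W(SA)_{*,j}\in\R^{m_R}$, i.e.\ a linear combination of the $m_R$ sampled columns of $A$; we sample a row index proportional to squared entry with a rejection-sampling procedure built on the data structure's primitives for sparse combinations of columns, whose expected cost is $\tO\bigl(m_R\cdot\Delta\bigr)$ for the norm ratio $\Delta=\norm{z}_2^2\sum_\ell\norm{A_{*,\ell}}_2^2/\norm{ARz}_2^2$. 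Bounding $\Delta$ by $\tO(\psi_\rA+k^2\eps^{-2}\kappa^2)$, using the bound on $\norm{z}$ from the small SVD and the fact that $AR$ approximately preserves the relevant singular values, gives the stated expected sampling time.

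The step I expect to be the main obstacle is the first one in (i): with only length-squared access, we must control how much oversampling is needed so that the row (resp.\ column) sample behaves like a ridge-leverage sample good enough for a \emph{relative-error} rank-$k$ PCP. This is where $\lambda$ has to be chosen to balance the PCP's additive slack (which forces $\lambda$ small, roughly $\lesssim\eps\norm{A-A_k}_F^2/k$) against the oversampling blow-up $\psi_\lambda/k$ (which wants $\lambda$ large), and where the two nested sketches — rows, then columns — must be composed without their errors compounding past $1+O(\eps)$. A secondary difficulty is controlling the norm ratio $\Delta$ in (iii): the column $ARz$ can be far shorter than $\norm{z}$ times the norms of the columns it is built from, and recovering the claimed bound needs the singular-value preservation properties of $AR$, which is exactly what brings in the $\kappa$ and $\psi_\rA$ factors.
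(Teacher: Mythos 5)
Your high-level strategy matches the paper's: simulate ridge leverage score sampling by length-squared sampling to obtain $(1\pm\eps)$ projection-cost-preserving sketches, extract a good rank-$k$ subspace from a small sketched matrix, and answer column queries by rejection sampling. But two concrete steps are missing, and each would break the stated running-time bound if you tried to flesh out what you wrote.

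First, the dimension reduction is not fully carried out. Your row sampler $S$ and column sampler (paper's $R_1$) each have $\tO(\eps^{-2}\psi_\lambda)$ rows/columns, so the matrix $M=SAR_1$ you propose to SVD is $\tO(\eps^{-2}\psi_\lambda)\times\tO(\eps^{-2}\psi_\lambda)$, and that SVD costs $\tO(\eps^{-6}\psi_\lambda^3)$, not $\tO(\eps^{-6}k^3)$. The paper's Algorithm~\ref{alg:dyn lowrank} applies a \emph{second} round of PCP sketching (Alg.~1 and Thm.~1 of \cite{cohen2017input}, steps producing $S_2$ and $R_2$) to $SAR_1$, each of size only $\tO(\eps^{-2}k\log k)$, and takes the SVD of the doubly-sketched $S_2SAR_1R_2$; that is what produces the $\eps^{-6}k^3$ term. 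Your remark that the column step ``collapses the inner dimension to $\tO(\eps^{-2}k)$'' is not justified by anything in your argument — length-squared column sampling of $SA$ alone does not get you below $\tO(\eps^{-2}\psi_\lambda)$ columns.

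Second, you never say how $W$ is computed within the claimed time. Having the top-$k$ right singular matrix $V$ of the small sketch and the QR factor $U$ only hands you a near-optimal rank-$k$ \emph{subspace}; to finish, the paper has to solve the multiple-response regression $\min_Y\norm{YU^\top SA - A}_F$, and doing this naively would require reading all of $A$. The paper's algorithm therefore draws two more samplers — a length-squared column sampler $R_3$ of size $\tO(\eps^{-1}\psi_k(\eps^{-1}+\log k))$ and a leverage-score sampler $R_4$ of size $\tO(\eps^{-1}k + k\log k)$ — and sets $W = (U^\top SAR_3R_4)^+U^\top$, with an affine-embedding argument (Thm.~36 of \cite{CW13}) lifting the sampled regression solution back to the true one. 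This step is also where the hypothesis $\norm{A_k}_F^2 \ge \eps\norm{A}_F^2$ enters: it is needed to show that length-squared samples of the columns of $SA$ dominate, up to a factor $1/\eps$, those of $U^\top SA$, so that $R_3$ is a valid sampler for the regression problem. Your proposal mentions neither the extra samplers, nor the affine-embedding argument, nor this hypothesis; it is also where the $\psi_k$ term in the runtime comes from, so you cannot recover the stated bound without it.

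Your rejection-sampling sketch in (iii) and your intuition that $\lambda\approx\norm{A-A_k}_F^2/k$ balances oversampling against PCP slack are both consistent with the paper (modulo a stray factor of $\eps$ in your $\lambda$ heuristic), so the query analysis is essentially right; it is the construction of $S$, $R$, and $W$ in steps (i)–(ii) that needs the two missing sketching rounds described above.
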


Here if the assumption $\norm{A_{k}}_F^2\ge \eps\norm{A}_F^2$
does not hold, the trivial solution $0$ satisfies the relative error target and we assume the resulting approximation is not worth sampling:
\[
\norm{A-0}_F^2 \le \frac1{1-\eps}(\norm{A}_F^2 - \norm{A_{k}}_F^2) 
= \frac1{1-\eps}\norm{A-A_{k}}_F^2. 
\]



This result is directly comparable to Tang's algorithm  \cite{tang2019quantum} for recommender systems which again needs query time that is a large polynomial in $k, \kappa$ and $\eps^{-1}$. Our algorithm returns a relative error
approximation, a rank-$k$ approximation within $1+\eps$ of the best rank-$k$ approximation; Tang's algorithm
has additive error, with a bound more like
$\norm{A-A_{k}}_F + \eps\norm{A}_F$. Finally, we note that $\psi_k \leq \kappa^2$ and for several settings of $k$ can be significantly smaller. 

For ease of comparison we summarize our results in Table \ref{tab:table1}.

\begin{table*}[htb!]
\normalsize
\begin{center}
    \caption{Comparison of our results and prior work. Let the target error be $\eps$, target rank be $k$ and let
    $\psi_{k} = \norm{A}_F^2/\sigma_{k}(A)^2$ , where $\sigma_k$ is the $k$-th singular value of the input matrix. 
    Also, $\hsig_k\le 1/\norm{A^+}$, $\hsig_1 \ge \norm{A}$,
    $d'$ is the number of columns of $B$ for multiple-response, and 
    $\eta$ denotes some numerical properties of $A$. To avoid numerous parameters, we state our results by setting $\lambda = \Theta(\|A\|_2^2)$ in the corresponding theorems.
    } 
    \label{tab:table1}
\begin{tabular}{|l|c|l|c|l|}
\hline
Problem & \multicolumn{2}{c|}{Time} & \multicolumn{2}{c|}{Prior Work} 
\\ \hline
& Update & \multicolumn{1}{c|}{Query}   & Update & \multicolumn{1}{c|}{Query} 
\\ \hline
\begin{tabular}[c]{@{}l@{}}Ridge\\ Regression\end{tabular} & \multicolumn{1}{l|}{$O(\log(n)$)} & $\tO\left( \frac{\colms'   \kappa^3 \norm{A}_F^2 \log(\colms) }{\eps^{4} \norm{A}_2^2 }   \right)$  & \multicolumn{1}{l|}{$O(\log(n))$} &   $\tO\left(\frac{k^6\norm{A}_F^6\kappa^{16}}{\eps^6}\right)$ \\
&&Thm. \ref{thm dyn ridgereg inf}&&\cite{gilyen2018quantum} \\
&&&& $\tO\left(\frac{\norm{A}_F^8 \kappa(A)^2}{(\sigma_{\min}^6 \eps^4)}\right)$\\
&&&&\cite{gilyen2020improved}
\\ \hline
\begin{tabular}[c]{@{}l@{}}Low Rank\\ Sampling\end{tabular} & \multicolumn{1}{l|}{$O(\log(n))$} & $\tO\left( \frac{ \norm{A}_F^2 \left( \frac{\norm{A}_F^2}{\norm{A}_2^2} + k^2 + k \psi_{k}\right)}{\eps^{4} \norm{A}_2^2}+\frac{k^3}{\eps^{6}} \right)$ & \multicolumn{1}{l|}{$O{(\log(n))}$} & $\Omega( \textrm{poly}(\kappa k \eps^{-1}\eta) )$ \\
&& Thm.~\ref{thm lra inf}&&\cite{tang2019quantum}
\\ \hline
\end{tabular}
\end{center}
\end{table*}

\subsection{Related Work}
\paragraph{Matrix Sketching.}
The \textit{sketch and solve} paradigm \cite{clarkson2015input,woodruff2014sketching} was designed to reduce the dimensionality of a problem, while maintaining enough structure such that a solution to the smaller problem remains an approximate solution the original one. This approach has been pivotal in speeding up basic linear algebra primitives such as least-squares regression \cite{s06, rokhlin2008fast, clarkson2015input}, $\ell_p$ regression \cite{cohen2015lewis, wang2019tight}, low-rank approximation \cite{NN13, cohen2017input, li2020input}, linear and semi-definite programming \cite{cohen2019solving,jiang2020inv, jiang2020faster} and solving non-convex optimization problems such as $\ell_p$ low-rank approximation \cite{song2017low,song2019relative,ban2019ptas} and training neural networks \cite{bakshi2019learning,brand2020training}. For a comprehensive overview we refer the reader to the aforementioned papers and citations therein.  Several applications use rank computation, finding a full rank subset of rows/columns, leverage score sampling, and computing subspace embeddings as key algorithmic primitives.

\paragraph{Sublinear Algorithms and Quantum Linear Algebra.}
Recently, there has been a flurry of work on sublinear time algorithms for structured linear algebra problems \cite{musco2017sublinear, shi2019sublinear, balcan2019testing, bakshi2020testing}  and quantum linear algebra \cite{harrow2009quantum, gilyen2019quantum,lloyd2014quantum, kerenidis2016quantum, dunjko2020non}. The unifying goal of these works is to avoid reading the entire input to solve tasks such as linear system solving, regression and low-rank approximation. The work on sublinear algorithms assumes the input is drawn from special classes of matrices, such as positive semi-definite matrices \cite{musco2017sublinear, bakshi2019robust}, distance matrices \cite{bakshi2018sublinear, indyk2019sample} and Toeplitz matrices \cite{lawrence2020low}, whereas the quantum algorithms (and their de-quantized analogues) assume access to data structures that admit efficient sampling \cite{tang2019quantum, gilyen2018quantum,chia2020sampling}.

The work  of Gily{\'e}n, Lloyd  and Tang \cite{gilyen2018quantum} on low-rank least squares
produces a data structure as output: given index $i\in [\colms]= \{1, \ldots,\colms\}$,
the data structure returns entry $x'_i$ of $x'\in\R^\colms$,
which is an approximation to
the solution $x^*$ of $\displaystyle \min_{x\in\R^\colms}\norm{Ax-b}$, where $b\in\R^\rows$.
The error bound is $\norm{x'-x^*}\le \eps\norm{x^*}$, for given $\eps>0$.
This requires the
condition that $\norm{Ax^*-b}/\norm{Ax^*}$ is bounded above by a constant. Subsequent work \cite{chia2020sampling} removes this requirement, and both results obtain data structures that need space polynomial in $\rank(A)$, $\eps$, $\kappa(A)$,\footnote{Throughout, we define $\kappa(A)=\norm{A}\norm{A^+}$,
that is, the ratio of largest to smallest \emph{nonzero} singular values of $A$, so that, in particular,
it will never be infinite or undefined.}
and other parameters.

The work \cite{tang2019quantum} also produces a data structure,
that supports sampling relevant to the setting of recommender systems:
the nonzero entries of the input matrix $A$ are a subset of the entries of a
matrix $P$ of, for example, user preferences.
An entry $A_{ij}\in [0,1]$ is one if user $j$ strongly prefers product $i$,
and zero if user $j$ definitely does not like product~$i$. It is assumed that $P$ is
well-approximated by a matrix of some small rank~$k$. The goal is to estimate $P$ using
$A$; one way to make that estimate effective, without simply returning all entries of $P$,
is to create a data structure so that given $j$, a random index $i$ is returned,
where $i$ is returned with probability $\ha_{ij}^2/\norm{\hat A_{*,j}}^2$. Here $\hA_{*,j}$ is the $j$'th column
 of $\hA$ (and $\ha_{ij}$ an entry), where
$\hat A$ is a good rank-$k$ approximation to $A$, and therefore,
under appropriate assumptions, to $P$. The estimate $\hat A$ 
is regarded as a good approximation 
if $\norm{\hat A - A}_F \le (1+\eps)\norm{A-[A]_k}_F$, where $[A]_k$ is the
matrix of rank $k$ closest to $A$ in Frobenius norm. Here $\eps$ is a given error parameter.
As shown in
\cite{tang2019quantum}, this condition (or indeed, a weaker one)
implies that the described sampler is useful in the context
of recommender systems.

\section{Outline}

The next section gives some notation and mathematical preliminaries, in particular
regarding leverage-score and length-squared sampling. This is followed by descriptions of our data structures and algorithms, and then by our computational experiments. The appendices give some extensive descriptions, proofs of theorems, and in Appendix~\ref{sec more exp}, some additional experiments.

\vspace{-0.1in}
\section{Preliminaries}
\label{subsec nota}
Let $X^+$ denote the Moore-Penrose pseudo-inverse of matrix $X$, equal to $V\Sigma^{-1}U^\top$ when
$X$ has thin SVD $X=U\Sigma V^\top$, so that $\Sigma$ is a square invertible matrix. 
We note that $X^+ = (X^\top X)^+ X^\top = X^\top(XX^\top)^+ \text{ and } X^+XX^\top = X^\top,$
which is provable using the SVDs of $X$ and $X^+$. Also, if $X$ has full column rank, so that $V$ is square,
then $X^+$ is a left inverse
of $X$, that is, $X^+X = I_d$, where $d$ is the number of columns of  $X$.
Let $\norm{X}$ denote the spectral (operator) norm of $X$.
Let $\kappa(X)=\norm{X^+}\norm{X}$ denote the condition number of $X$.
We write $a\pm b$ to denote the set $\{c \mid |c-a|\le |b|\}$, and $c=a\pm b$ to denote the condition
that $c$ is in the set $a\pm b$.
Let $[m]=\{1,2,\ldots,m\}$ for an integer $m$. 

As mentioned, $\nnz(A)$ is the number of nonzero entries of $A$, and we assume $\mathrm{nnz}(A) \geq n$, which can be ensured by removing any rows of $A$ that only contain zeros. 
We let $[A]_k$ or sometimes $A_k$ denote the best rank-$k$ approximation to~$A$.
Let $0_{a\times b}\in\R^{a\times b}$ have all entries equal to zero, and similarly $0_a\in\R^a$ denotes the zero vector. Further, for an $n \times d$ matrix $A$ and a subset $S$ of $[n]$, we use the notation $A_{\mid S}$ to denote the restriction of the rows of $A$ to the subset indexed by $S$.
As mentioned,
$n^\omega$ is the time needed to multiply two $n\times n$
matrices. 



\begin{lemma}[Oblivious Subspace Embedding Theorem 7.4~\cite{chepurko2022near}]\label{lem embed}
For given matrix $A\in\R^{\rows\times\colms}$ with $\rA=\rank(A)$, there exists an oblivious sketching matrix $S$ that samples $m = O(\eps_0^{-2}\rA\log \rA)$ rows of $A$ such that with probability at least $99/100$, for all $x \in \mathbb{R}^d$,
$S$ is an $\eps_0$-subspace embedding, that is,
$\norm{SAx} = (1\pm\eps_0)\norm{Ax}$. Further, the matrix $SA$ can be computed in $O(\nnz(A)+k^\omega \textrm{poly}(\log \log (k) )+ \textrm{poly}(1/\eps_0)k^{2+o(1)})$ time.
\end{lemma}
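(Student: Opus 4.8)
The plan is to take $S$ to be a (rescaled) leverage-score sampling matrix and to verify the embedding property by matrix concentration. Since $\rank(A)=\rA$, write $A=U\Sigma V^\top$ in thin SVD, so $U\in\R^{\rows\times \rA}$ has orthonormal columns and $\colspan(A)=\colspan(U)$; hence $S$ is an $\eps_0$-subspace embedding for $A$ exactly when $\norm{U^\top S^\top S U - I_{\rA}}\le \eps_0$ (after an $O(\eps_0)$-rescaling of $S$). We use $U$ only in the analysis and never form it. Let $\ell_i=\norm{U_{i,*}}^2$ be the leverage scores of $A$, so $\sum_i\ell_i=\rA$.

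\textbf{Correctness of the sample.} Let $S$ pick $m$ rows i.i.d., choosing row $i$ with probability $p_i\ge \beta\ell_i/\rA$ for an absolute constant $\beta$ (only constant-factor estimates of the $\ell_i$ will be available, which suffices), and scaling a chosen row by $1/\sqrt{m p_i}$. Then $U^\top S^\top S U=\frac1m\sum_{t=1}^m \frac1{p_{i_t}}U_{i_t,*}^\top U_{i_t,*}$ is an average of i.i.d.\ PSD matrices with mean $I_{\rA}$, each of spectral norm at most $\max_i \ell_i/p_i\le \rA/\beta$. The matrix Chernoff bound (as in the standard leverage-score sampling analysis, e.g.\ the survey \cite{woodruff2014sketching}) then gives $\norm{U^\top S^\top S U - I_{\rA}}\le\eps_0$ with probability at least $99/100$ once $m=O(\beta^{-1}\eps_0^{-2}\rA\log \rA)$, which is the claimed row count and the desired subspace-embedding property.

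\textbf{Efficiency, and the main obstacle.} It remains to produce constant-factor leverage-score estimates and to assemble $SA$ within the stated time. First apply a fast oblivious sparse subspace embedding $\Pi$ (a CountSketch/OSNAP with $\poly(\rA)$ rows and $O(1)$ nonzeros per column) so that $\Pi A$ is a constant-distortion embedding of $A$, computed in $O(\nnz(A))$ time; compute a thin factorization $\Pi A=QR$ so that $AR^{+}$ is well-conditioned, which by the usual structural lemma yields $\ell_i(A)=\Theta(\norm{A_{i,*}R^{+}}^2)$ --- the dense linear algebra here runs on a $\poly(\rA)$-sized matrix and, via fast rectangular matrix multiplication, costs $\rA^{\omega}\poly(\log\log \rA)$ (plus the stated $\poly(1/\eps_0)\rA^{2+o(1)}$ term). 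To avoid an $\Omega(\nnz(A)\cdot \rA)$ blow-up when reading off the row norms, post-multiply by a Johnson--Lindenstrauss matrix $G$ with $O(\log \rows)$ columns: $\tilde\ell_i=\norm{A_{i,*}R^{+}G}^2$ is, simultaneously over all $i$, within a constant factor of $\ell_i$ with probability $1-1/\poly(\rows)$, and forming $A(R^{+}G)$ costs $O(\nnz(A)\log \rows)$ plus the cost of building $R^{+}G$ on the small factors. Normalize the $\tilde\ell_i$ to a distribution, draw $m$ samples, rescale, and read out $SA$ (touching only sampled rows); a union bound over the failure events of $\Pi$, of $G$, and of the Chernoff step keeps the overall success probability at $99/100$, and summing the costs gives the claimed running time. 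The delicate point --- the main obstacle --- is precisely this accounting: the sketches must be chosen so that $\colms$ enters only through $\nnz(A)$ with no extra $\poly(\rA)$ factor (an $O(1)$-sparse $\Pi$, the $G$ trick for the score readoff, fast rectangular multiplication on the $\poly(\rA)$-sized blocks). A secondary subtlety is the word ``oblivious'': a leverage-score-driven $S$ is not literally data-oblivious, so one either reads this in the weak sense that correctness holds uniformly over $\colspan(A)$, or replaces the final sampling step by a second genuinely oblivious reduction (an SRHT-type sketch applied to the rotated matrix), which attains the same $O(\eps_0^{-2}\rA\log \rA)$ rows with a data-independent distribution; both routes give the stated bound.
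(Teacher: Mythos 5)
The paper does not actually prove this lemma: it is a direct citation of Theorem 7.4 of \cite{chepurko2022near}, imported as a black box. So there is no internal proof to compare against, and your proposal is best judged as a reconstruction of that external result.

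Your outline (sparse oblivious preconditioning $\Pi$, a factorization giving a conditioner, JL compression for fast readoff of approximate leverage scores, then a matrix-Chernoff argument for the $O(\eps_0^{-2}\rA\log\rA)$ sample size) is indeed the standard route to a fast leverage-score subspace embedding, and it is essentially the design used elsewhere in this very paper (see Algorithm~\ref{prelim_alg:levsample} and Definition~\ref{def samp}). However, there is a concrete gap in your efficiency accounting. You treat $\Pi A=QR$ as dense linear algebra on a ``$\poly(\rA)$-sized matrix,'' but $\Pi A$ has $\poly(\rA)$ rows and $\colms$ columns, and nothing in the hypotheses bounds $\colms$ by $\poly(\rA)$; when $\colms\gg\rA$, the QR, the formation of $R^{+}$, and the product $R^{+}G$ all cost at least $\Omega(\rA\colms)$ (or worse), which is not covered by $\nnz(A)+\rA^\omega\poly(\log\log\rA)+\poly(1/\eps_0)\rA^{2+o(1)}$. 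The paper's own use of this machinery starts by finding a column selector $\Lambda$ with $\rA$ linearly independent columns of $A$ (Step~\ref{alg reduceW} of Algorithm~\ref{prelim_alg:levsample}, invoking Theorem~1.5 of \cite{chepurko2022near}), so that all subsequent dense work runs on $\rA\times\rA$ blocks; you need an analogous reduction before the QR, and its cost must be folded into the time bound. Without it, your argument does not achieve the stated running time for general $\colms$.

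Two smaller remarks. First, your Chernoff bound controls $\|U^\top S^\top SU - I_\rA\|$, which yields $\|SAx\|^2=(1\pm\eps_0)\|Ax\|^2$; passing to $\|SAx\|=(1\pm\eps_0)\|Ax\|$ is fine up to a constant rescaling of $\eps_0$, but this should be said rather than silently absorbed. Second, you are right to flag the word ``oblivious'': a leverage-score-driven row sampler is data-dependent, and the lemma's phrasing (``oblivious sketching matrix $S$ that samples $m$ rows of $A$'') is in tension with itself. Your proposed escape via an SRHT-type replacement would not preserve the $O(\nnz(A))$ leading term, since SRHT is dense; the more faithful reading is that the construction is a two-stage pipeline whose first stage is oblivious and whose final row sampler is data-dependent, with ``oblivious'' used loosely to label the overall primitive. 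In short: the approach is the right one, but the missing column-dimension reduction is a genuine hole in the time analysis, and the oblivious/sampling tension is inherent to the cited statement rather than something your proof can fully resolve.
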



We obtain the following data structure for leverage-score sampling. We provide a statement of its properties below, but defer the description of the algorithm and proof to the supplementary material. While leverage-score sampling is well-known, we give an algorithm for completeness; also, our algorithm removes a $\log$ factor in some terms in the runtime, due to our use of the sketch of Lemma~\ref{lem embed}.

\begin{theorem}[Leverage Score Data Structure]\label{prelim:thm levsample}
Let $k=\rank(A)$, and choose $\mu_s \ge 1$.
Then, Algorithm~\ref{prelim_alg:levsample} ($\textsc{LevSample}(A,\mu_s, v)$)
 uses space $O(\rows + \rA^\omega\log\log(\rows\colms))$,
 not counting the space to store $A$, and runs in time
\[
O(\mu_s\nnz(A) + \rA^\omega \poly(\log\log(k)) + \rA^{2+o(1)} +  \samplesize k \rows^{1/\mu_s}),
\]
and outputs a leverage score sketching matrix $L$, which samples $v$ rows of $A$ with probability proportional to their leverage scores. (It also outputs a column selector $\Lambda$,
selecting an orthogonal basis of the column space of $A$.)
\end{theorem}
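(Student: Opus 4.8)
The plan is to reduce leverage-score sampling of $A$ to sampling from a distribution whose masses can be read off from a small, explicitly stored matrix, and then to support that sampling with a hierarchical tree whose branching factor is tuned by $\mu_s$. Concretely, first I would apply the oblivious subspace embedding of Lemma~\ref{lem embed} with $\eps_0 = \tfrac12$ to obtain $SA \in \R^{O(k\log k)\times d}$ in time $O(\nnz(A) + k^\omega\poly(\log\log k) + k^{2+o(1)})$, and then take a rank-revealing factorization of $SA$ (column-pivoted QR, or SVD, in $k^\omega$-type time). This produces a column selector $\Lambda$ picking $k = \rank(A)$ columns that span the column space of $A$, together with an invertible $k\times k$ matrix $R$ for which $(SA)\Lambda R^{-1}$ has orthonormal columns. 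Since $S$ is a $\tfrac12$-embedding, $A\Lambda R^{-1}$ then has all singular values in $[\tfrac{1}{\sqrt{2}},\sqrt{2}]$, so $\lev_i(A) = \lev_i(A\Lambda) = \Theta(\ell_i)$ with $\ell_i := \norm{(A\Lambda R^{-1})_{i,*}}^2$ and $\sum_i\ell_i = \Theta(k)$; hence a distribution within a constant factor of $\{\ell_i/\sum_j\ell_j\}$ is all I need (standard arguments absorb the constant into $v$, and exact leverage-score proportionality can be restored by an $O(k\cdot\nnz(A_{i,*}))$-time rejection step at the sampled row). To avoid the $O(k\cdot\nnz(A))$ cost of forming $A\Lambda R^{-1}$ outright, I would fix a Johnson--Lindenstrauss matrix $G \in \R^{k\times O(\log n)}$ and form $M := A\Lambda(R^{-1}G) \in \R^{n\times O(\log n)}$ in $\tO(\nnz(A))$ time; then $\norm{M_{i,*}}^2 = \Theta(\ell_i)$ holds for all $i$ simultaneously with probability $\ge 99/100$ by a union bound, and, crucially, $\sum_{i\in B}\norm{M_{i,*}}^2 = \norm{M_{\mid B}}_F^2$ is additive over disjoint row blocks $B$.

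Next I would build a balanced tree of depth $\mu_s$ and branching factor $b := \lceil n^{1/\mu_s}\rceil$ whose leaves are the rows $[n]$; an internal node $t$ owns the block $B_t \subseteq [n]$ of its leaf descendants, and I store at $t$ the value $w_t := \norm{M_{\mid B_t}}_F^2$. All of these are computed in a single pass over the $n$ rows of $M$, adding $\norm{M_{i,*}}^2$ to each of the at most $\mu_s$ ancestors of leaf $i$: total time $\tO(\mu_s n)$ and total space $O(\sum_{j=0}^{\mu_s} b^j) = O(n)$, on top of the $O(k^\omega\log\log(nd))$ used by the embedding machinery. To draw one sample I would walk from the root to a leaf, at each node stepping to child $t_j$ with probability $w_{t_j}/\sum_{j'} w_{t_{j'}}$, and output the leaf reached with weight $1/\sqrt{v\,p_i}$. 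By additivity, $\sum_{j'} w_{t_{j'}}$ equals $w$ of the parent, so the probabilities along the path telescope and leaf $i$ is reached with probability exactly $p_i = \norm{M_{i,*}}^2/\norm{M}_F^2 = \Theta(\ell_i/\sum_j\ell_j)$. Each sample costs $O(\mu_s b) = O(\mu_s n^{1/\mu_s})$ for the walk, so over $v$ samples this fits inside $O(v\,k\,n^{1/\mu_s})$ (the factor $k$ also covering the optional exactness step), and assembling the chosen rows into the sketching matrix $L$ and returning $\Lambda$ is free. Summing the four stages gives the claimed running time, up to logarithmic factors folded into the $\mu_s\nnz(A)$ term.

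The hard part will be pinning down the interaction of the three approximations --- the subspace embedding $S$, the column selection/orthogonalization $R^{-1}$, and the JL step $G$ --- so that the sampling probabilities $p_i$ are provably within a constant factor of the true leverage-score distribution; the telescoping identity is what makes this tractable, because it eliminates any compounding of the per-level JL error along the depth-$\mu_s$ path, leaving only the single JL distortion of $\ell_i$, which an $O(\log n)$-column $G$ and a constant $\eps_0$ control with high probability. The remaining work is routine: handling the rank-deficient case by working throughout with the selected columns $A\Lambda$ rather than $A$, and the bookkeeping that keeps preprocessing at $\mu_s\nnz(A)$ and per-sample cost at $O(\mu_s n^{1/\mu_s})$; these details are deferred to the supplementary material.
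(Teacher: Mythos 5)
You have the right ingredients --- a subspace embedding $S$, a column selector $\Lambda$, an orthogonalizer $C^{-1}$ (your $R^{-1}$), a Gaussian sketch $G$ to avoid forming $A\Lambda C^{-1}$ explicitly, and a tree for sampling --- but you have misread where the parameter $\mu_s$ acts, and this breaks the claimed time bound. In your version $G$ has $O(\log n)$ columns so that $\norm{M_{i,*}}^2=\Theta(\tau_i)$ for all $i$, and $\mu_s$ is spent on the branching factor of the sampling tree. Forming $M=A\Lambda(R^{-1}G)$ then costs $\Theta(\nnz(A)\log n)$, which cannot be ``folded into'' $\mu_s\nnz(A)$ when $\mu_s<\log n$, a regime the theorem permits. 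And the multi-way tree is a red herring: a depth-$\mu_s$, $n^{1/\mu_s}$-ary tree has per-sample walk cost $\Theta(\mu_s n^{1/\mu_s})\ge e\ln n$, which is never better than the plain binary tree of Lemma~\ref{lem weighted tree}, so it cannot be the mechanism behind the $n^{1/\mu_s}$ tradeoff.

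In the paper $\mu_s$ (internally $\lambda_s$) controls the number of columns of the Gaussian $G$ in $Z=A\Lambda C^{-1}G$, and that number is only $\Theta(\mu_s)$, so computing $Z$ costs $O(\mu_s\nnz(A))$. With so few columns $\norm{Z_{i,*}}^2$ is \emph{not} a constant-factor estimate of the leverage score $\tau_i$; by the second part of Lemma~\ref{lem JL} it only lower-bounds $\tau_i$ up to a factor $n^{1/\mu_s}$. The resulting crude probabilities $p_i$ serve as a proposal distribution in $\textsc{MatVecSampler}$: an accurate estimate $\tq_i$, via a second $\Theta(\log n)$-column Gaussian $G'$ in $W=C^{-1}G'$, is computed only for the currently sampled row (cost $O(k\log n)$ per trial), and a rejection step with acceptance probability $\nu\tq_i/p_i$ corrects the distribution. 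The $\samplesize k n^{1/\mu_s}$ term is the expected number of rejection trials, not a tree cost. This coarse-then-refine structure is what lets the algorithm simultaneously keep preprocessing at $\mu_s\nnz(A)$ and sampling at $\poly(k)\,n^{1/\mu_s}$; your proposal, which builds constant-factor estimates for \emph{every} row up front, is inherently $\Omega(\nnz(A)\log n)$. (Secondarily, rank-revealing QR of the $O(k\log k)\times d$ matrix $SA$ is not obviously $k^\omega$-time when $d\gg k$ --- the paper instead invokes a separate column-subset-selection routine from \cite{chepurko2022near} --- and storing $M$ in full uses $\Theta(n\log n)$ space unless you retain only the row norms.)
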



\begin{definition}[Ridge Leverage-score Sample, Statistical Dimension]
Let $A$ be such that $\rA=\rank(A)$, and suppose $A$ has thin SVD $A=U\Sigma V^\top$, implying $\Sigma\in\R^{\rA\times\rA}$.
 For $\lambda>0$, let $\Al = \twomat{A}{\sqrt{\lambda} VV^\top}$ and $\Al$ has SVD $\Al = \twomat{U\Sigma D}{\sqrt{\lambda}VD}D^{-1}V^\top$, where
$D= (\Sigma^2 + \lambda\Iden_\rA)^{-1/2}$.
 Call ${\cal S} \subset [\rows]$
 a \emph{ridge leverage-score sample} of $A$ if each $i\in\cal S$ is chosen independently
 with probability at least $\norm{U_{i,*}\Sigma D}^2/\sd_\lambda(A)$, where the
 statistical dimension
 $\sd_\lambda(A) = \norm{U\Sigma D}_F^2 = \sum_{i\in [d]} \sigma_i^2/(\lambda+\sigma_i^2)$,
  recalling that $U\Sigma D$ comprises the top $n$ rows of the left singular matrix of $\Al$.
\end{definition}

We can also use \emph{length-squared sampling} to obtain subspace embeddings. In Section~\ref{subsec dyn lev}
we will give a data structure and algorithm that implements length-squared sampling. We defer the analysis to Appendix~\ref{sec:appendix_prelim}.

\begin{definition}[Length-squared sample]\label{def lensq sample}
 Let $A\in\R^{\rows\times\colms}$, $\lambda\geq0$, and $\Al$ be as in Lemma~\ref{lem Al}.
 For given $m$,
 let matrix $L\in\R^{m\times \rows}$ be chosen by picking each row of $L$ to be 
 $e_i^\top/\sqrt{p_i m}$, where $e_i\in\R^{\rows}$ is the $i$'th standard basis vector,
 and picking $i\in [\rows]$ with probability $p_i\gets \norm{A_{i,*}}^2/\norm{A}_F^2$.
\end{definition}

We obtain the corresponding lemma for length-squared sampling and defer the proof to Appendix \ref{sec:appendix_prelim}. 

\begin{lemma}[Length-squared sketch]\label{prelim:lem Lsquare}
Given a matrix $A \in \mathbb{R}^{n \times d}$ and a sample size parameter $\samplesize \in [n]$,  let $m = O\left(\samplesize\norm{\Al^+}^2\norm{A}_F^2/\sd_\lambda(A)\right)$. Then, with probability at least $99/100$, 
 the set of $m$ length-squared samples contains a ridge leverage-score sample
 of $A$ of size $\samplesize$.
\end{lemma}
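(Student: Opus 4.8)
The plan is to reduce the claim about length-squared samples containing a ridge leverage-score sample to a straightforward rejection-sampling comparison between the two distributions. First I would recall that a ridge leverage-score sample of size $\samplesize$ is, by the definition above, a set in which each index $i$ appears independently with probability at least $\ell_i \defeq \norm{U_{i,*}\Sigma D}^2/\sd_\lambda(A)$, scaled by the number of draws; equivalently, it suffices to produce $\samplesize$ independent draws from any distribution $q$ on $[n]$ with $q_i \ge \ell_i \cdot (\samplesize / m')$ appropriately normalized — more cleanly, it suffices that $m$ i.i.d.\ draws from the length-squared distribution $p_i = \norm{A_{i,*}}^2/\norm{A}_F^2$ can be \emph{thinned} to $\samplesize$ i.i.d.\ draws from the ridge leverage distribution $q_i = \ell_i$. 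The standard tool is: if $p_i \ge q_i / \tau$ pointwise for some $\tau \ge 1$, then each length-squared draw, kept independently with probability $q_i/(\tau p_i) \le 1$, is an i.i.d.\ draw from $q$, and among $m$ draws the expected number kept is $m/\tau$.

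The key step is therefore to bound the ``oversampling factor'' $\tau \defeq \max_i q_i/p_i$. Writing $A = U\Sigma V^\top$ in thin SVD, the $i$-th length-squared probability is $p_i = \norm{U_{i,*}\Sigma}^2/\norm{A}_F^2$, while the ridge leverage probability is $q_i = \norm{U_{i,*}\Sigma D}^2/\sd_\lambda(A)$ with $D = (\Sigma^2+\lambda I)^{-1/2}$. Then
\[
\frac{q_i}{p_i} = \frac{\norm{A}_F^2}{\sd_\lambda(A)}\cdot \frac{\norm{U_{i,*}\Sigma D}^2}{\norm{U_{i,*}\Sigma}^2}.
\]
Since $\Sigma D$ and $\Sigma$ are diagonal with $\Sigma D = \Sigma(\Sigma^2+\lambda I)^{-1/2}$, the ratio $\norm{U_{i,*}\Sigma D}^2/\norm{U_{i,*}\Sigma}^2$ is a weighted average (with weights $\propto (U_{i,*})_j^2\sigma_j^2$) of the quantities $1/(\sigma_j^2+\lambda)$, hence at most $\max_j 1/(\sigma_j^2+\lambda) = 1/(\sigma_k^2+\lambda)$. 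Now observe that $\sigma_k^2 + \lambda = 1/\norm{\Al^+}^2$, because the smallest singular value of $\Al = \twomat{A}{\sqrt\lambda VV^\top}$ equals $\min_j \sqrt{\sigma_j^2+\lambda} = \sqrt{\sigma_k^2+\lambda}$ (this is visible from the SVD of $\Al$ given in the Ridge Leverage-score definition, whose singular values are the diagonal entries of $D^{-1}$, i.e.\ $\sqrt{\sigma_j^2+\lambda}$). Therefore
\[
\tau = \max_i \frac{q_i}{p_i} \le \frac{\norm{A}_F^2}{\sd_\lambda(A)}\cdot\norm{\Al^+}^2,
\]
which is exactly the factor appearing in $m = O\!\left(\samplesize\,\norm{\Al^+}^2\norm{A}_F^2/\sd_\lambda(A)\right)$.

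Finally I would assemble these pieces: with $\tau$ bounded as above and $m = C\samplesize\tau$ for a suitable constant $C$, the thinned sample has size that is a sum of $m$ independent indicators with mean $m/\tau = C\samplesize$; a Chernoff bound gives that this size is at least $\samplesize$ with probability at least $99/100$ (choosing $C$ a small absolute constant suffices). Conditioned on the thinned size being at least $\samplesize$, any $\samplesize$ of its elements form $\samplesize$ i.i.d.\ draws from $q$, i.e.\ a ridge leverage-score sample of size $\samplesize$ in the sense of the definition, and it is contained in the original set of $m$ length-squared samples. I expect the main obstacle to be purely bookkeeping: making the ``contains a ridge leverage-score sample'' notion precise enough that the rejection-sampling coupling is clearly valid (in particular handling the ``probability at least'' in the definition, which only makes the requirement easier, and the fact that the ridge leverage distribution must be renormalized by $\sd_\lambda(A)$, which is already what the definition does), and confirming the identity $\sigma_k^2+\lambda = 1/\norm{\Al^+}^2$ via the explicit SVD of $\Al$. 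None of these steps is technically deep, so the proof is short once the reduction is set up correctly.
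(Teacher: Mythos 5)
Your proof is correct and takes essentially the same route as the paper: both hinge on the same pointwise comparison, namely that $\norm{A_{i,*}}^2\norm{\Al^+}^2 \ge \norm{U_{i,*}\Sigma D}^2$, which in your notation is exactly $q_i/p_i \le \norm{A}_F^2\norm{\Al^+}^2/\sd_\lambda(A)$. You derive this directly as a weighted average of $1/(\sigma_j^2+\lambda)$, whereas the paper quotes it as the last clause of its Block SVD lemma (Lemma~\ref{lem Al}); these are the same inequality. Where you go further is in making the ``contains a ridge leverage-score sample'' step precise via an explicit rejection-sampling coupling and a Chernoff bound on the accepted count; the paper only compares the per-index expected counts and then appeals informally to ``oversampling does no harm.'' (Incidentally, the paper's displayed chain has a slip, writing $\norm{U_{i,*}}^2$ where $\norm{U_{i,*}\Sigma D}^2$ is meant -- Lemma~\ref{lem Al} gives the ridge leverage score, not the plain one.) So your argument is the same in substance but tighter in the bookkeeping; no gap.
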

\section{Dynamic Data Structures for Ridge Regression}
\label{subsec dyn lev}

In this section, we describe our dynamic data structures, and then our algorithm for solving Ridge Regression problems.
Given an input matrix $A\in\R^{\rows\times\colms}$, our data structure can be maintained under insertions and deletions (and changes)
in $O(\log (nd))$ time, such that sampling a row or column with probability proportional to its squared length can be done in $O(\lognd)$ time. The data structure is used for
solving both ridge regression and LRA (Low-Rank Approximation) problems. 


First, we start with a simple folklore data structure.

\begin{lemma}\label{lem weighted tree}
Given $\ell$ real values $\{ u_i \}_{i \in [\ell]}$, there is a data structure
using storage $O(\ell)$, so that $L=\sum_{i \in [\ell]} u_{i}^2$ can be maintained, and
such that a random $i$ can be chosen with probability
$u_i^2/L$ in time $O(\log \ell)$. Values
can be inserted, deleted, or changed in the data structure in time $O(\log\ell)$.
\end{lemma}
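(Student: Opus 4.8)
The plan is to implement the folklore complete-binary-tree sampler. First I would build a balanced binary tree $T$ with $\ell$ leaves, where the $i$-th leaf stores the value $v_i\defeq u_i^2$ and every internal node stores the sum of the values at the leaves of its subtree; in particular the root stores $L=\sum_{i\in[\ell]}u_i^2$. A balanced tree on $\ell$ leaves has $O(\ell)$ nodes and depth $O(\log\ell)$, so the $O(\ell)$ storage bound is immediate, and the tree can be populated bottom-up in $O(\ell)$ time. Maintaining $L$ is then trivial: it is just the value at the root.

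For sampling, I would descend from the root: at an internal node whose two children hold subtree-sums $s_{\mathrm{L}}$ and $s_{\mathrm{R}}$, recurse into the left child with probability $s_{\mathrm{L}}/(s_{\mathrm{L}}+s_{\mathrm{R}})$ and into the right child otherwise; a node whose entire subtree sums to zero is never reached, so every conditional probability encountered along a path actually taken is well defined. A telescoping argument shows that leaf $i$ is reached with probability equal to the product of the conditional probabilities along the root-to-$i$ path, and this product collapses to $v_i/L=u_i^2/L$, as desired. Each sample visits $O(\log\ell)$ nodes (and consumes $O(\log\ell)$ random bits, or equivalently one uniform draw in $[0,L)$ compared against partial sums), giving time $O(\log\ell)$.

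For the update operations, changing a single value $u_i$ to $u_i'$ amounts to overwriting the $i$-th leaf with $(u_i')^2$ and, walking from that leaf to the root, adding $(u_i')^2-u_i^2$ to each of its $O(\log\ell)$ ancestors; this restores all subtree sums and $L$ in $O(\log\ell)$ time. Deletion of index $i$ is the special case $u_i'=0$ (the leaf stays in the tree but contributes nothing, and is never sampled), and insertion is the reverse, activating a spare zero-valued leaf. To accommodate genuine changes in $\ell$ while keeping the depth $O(\log\ell)$, I would keep the number of leaf slots equal to a power of two and rebuild the tree (doubling or halving the capacity) whenever the count of active indices crosses a threshold; this costs $O(\log\ell)$ amortized per operation. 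Alternatively, if the ground set $[\ell]$ is fixed in advance and insertion/deletion are modeled as turning a value nonzero or zero, every update is literally a change operation and runs in worst-case $O(\log\ell)$ time.

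\textbf{Main obstacle.} There is no deep obstacle here — this is a standard construction. The only points requiring care are (i) ensuring the descent probabilities are well defined when some subtree sums vanish, which is handled by observing that an all-zero subtree is never entered, and (ii) supporting true changes of $\ell$ without degrading the tree depth, which the usual amortized-rebuilding trick resolves.
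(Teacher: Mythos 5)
Your proposal is correct and matches the paper's proof in all essentials: both use the folklore complete/balanced binary tree with squared values at the leaves, subtree sums at internal nodes, root-to-leaf descent for sampling, and leaf-to-root updates for changes. The only point of divergence is a minor implementation detail for insertion and deletion: the paper keeps the tree a complete binary tree by physically inserting or removing the last leaf (swapping in the to-be-deleted leaf's weight before removal), giving worst-case $O(\log\ell)$, whereas you use either amortized doubling/halving of capacity or model insert/delete as setting a leaf to nonzero/zero; both are standard and yield the stated bounds.
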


The implementation of this data structure is discussed in Appendix~\ref{sec:ridge_proof}.
We use it in our data structure $\DynSamp(A)$, given below, which is used in
$\textsc{LenSqSample}$, Alg.~\ref{alg:lensqsample}, to sample rows and columns of $A$.

\begin{definition}\label{def dynsamp}
$\DynSamp(A)$ is a data structure that, for $A\in\R^{\rows\times\colms}$,
comprises:
\begin{itemize}
  \setlength{\itemsep}{0ex}  
  \setlength{\parsep}{0ex}  
  \setlength{\parskip}{0ex}  
 \item For each row of $A$,
  the data structure of Lemma~\ref{lem weighted tree} for the nonzero entries of the row or column.
 \item For the rows of $A$, the data structure of Lemma~\ref{lem weighted tree}
 for their lengths.
 \item For given $i,j$, a data structure supporting access to the value of entry $a_{ij}$ of $A$ in $O(1)$ time.
\end{itemize}
\end{definition}

\begin{algorithm}[htb!]
\caption{$\textsc{LenSqSample}(\textsc{DS},S=\mathbf{null}, m_S, m_R)$}
\label{alg:lensqsample}
{\bf Input:}
	$\textsc{DS}=\DynSamp(A)$  (Def.~\ref{def dynsamp}) for $A\in\R^{\rows\times\colms}$, sample sizes $m_S$, $m_R$

{\bf Output:} Sampling matrices ${S\in\R^{m_S\times \rows}, R\in\R^{\colms\times m_R}}$\\
\begin{algorithmic}[1]
\STATE if $S==\mathbf{null}$\\
   $\qquad$ Use $\textsc{DS}$ to build row sampler $S\in\R^{m_S\times \rows}$ of $A$
\STATE Use $\textsc{DS}$ and $S$ to build column sampler $\R^{\colms\times m_R}$ of $SA$
\\ \COMMENT{cf. Lemma~\ref{lem dynsamp}}
\STATE return $S,R$
\end{algorithmic}
\end{algorithm}


\begin{algorithm}[htb!]
 {\small
\caption{$\textsc{RidgeRegDyn}(\textsc{DS}, B, \hsig_\rA, \hsig_1, \eps, \lambda)$}
\label{alg:dyn ridgereg}
{\bf Input:} $\textsc{DS}=\DynSamp(A)$, $B\in\R^{\rows\times\colms'}$, $\hsig_\rA\le 1/\norm{A^+}$, $\hsig_1 \ge \norm{A}$,  $\eps$ an error parameter, $\lambda$ a ridge weight\\
{\bf Output:} Data for approximate ridge regression solution $A^\top S^\top\tX$ where $S$ is a sampling matrix\\
\begin{algorithmic}[1]
\STATE $Z_\lambda \gets 1/\sqrt{\lambda + \hsig_\rA^2}$, $\hkap\gets Z_\lambda\sqrt{\lambda + \hsig_1^2} $
\STATE Choose $m_S = O(\eps^{-2}\hkap^2Z_\lambda^2 \norm{A}_F^2\log(\colms))$, \\
    $\quad$ $m_R=O(\hat m_R Z_\lambda^2 \norm{A}_F^2)$, where $\hat m_R = O(\eps^{-2}\log m_S)$
\STATE $S,R\gets \textsc{LenSqSample}(\textsc{DS}, \mathbf{null}, m_S, m_R)$ \COMMENT{cf. Alg.~\ref{alg:lensqsample};}
\STATE $\tX \gets (SARR^\top A^\top S^\top + \lambda \Iden_{m_S})^{-1} SB$
\\ $\quad$ \COMMENT{Solve using conjugate gradient}
\STATE return $\tX$, $S$
\\ $\quad$ \COMMENT{approximate ridge regression solution is $ A^\top S^\top \tX$}
\end{algorithmic}
}
\end{algorithm}

\begin{lemma}\label{lem dynsamp}
 $\DynSamp(A)$ can be maintained under turnstile updates of $A$ in $O(\lognd)$ time.
  Using $\DynSamp(A)$, rows can be chosen
  at random with row $i\in [\rows]$ chosen with probability $\norm{A_{i,*}}^2/\norm{A}_F^2$
  in $O(\lognd)$ time.
  
  If $S\in\R^{m\times \rows}$ is a sampling matrix, so that $SA$ has rows that are each a
  multiple of a row of $A$, then
  $c$ columns can be sampled from $SA$ using $\DynSamp(A)$ in $O((c + m)\log(nd))$ time,
  with the column $j\in [\colms]$ chosen with probability $\norm{(SA)_{*,j}}^2/\norm{SA}_F^2$.
 \end{lemma}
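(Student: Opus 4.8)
The plan is to handle the three claims of the lemma separately, in each case reducing everything to the primitive weighted-tree data structure of Lemma~\ref{lem weighted tree}, which supports sampling an index proportional to the squared stored values, and inserting/deleting/changing a value, all in logarithmic time.

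For the turnstile-update claim, consider an additive update $a_{ij}\gets a_{ij}+\Delta$. I would propagate it through the three components of $\DynSamp(A)$: (i) update the $O(1)$-access table for entry $a_{ij}$; (ii) in the per-row tree of Lemma~\ref{lem weighted tree} for row $i$, change the value at position $j$ (or insert/delete it, if the entry becomes nonzero or zero), which by that lemma takes $O(\log\colms)$ time and simultaneously maintains $L_i:=\norm{A_{i,*}}^2$ at its root; (iii) in the top-level tree of Lemma~\ref{lem weighted tree} whose $i$th stored value is $\norm{A_{i,*}}$, change that value to its new magnitude, taking $O(\log\rows)$ time and maintaining $\norm{A}_F^2=\sum_i\norm{A_{i,*}}^2$ at its root. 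The total is $O(\log\rows+\log\colms)=O(\lognd)$. For row sampling, I would simply invoke the sampling operation of the top-level tree, which returns $i$ with probability $\norm{A_{i,*}}^2/\norm{A}_F^2$ in $O(\log\rows)$ time, as required.

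For sampling $c$ columns of $SA$, I would use a two-level scheme, exploiting that each row $t$ of $SA$ equals $s_t A_{i_t,*}$ for an index $i_t\in[\rows]$ and a scalar $s_t$ both read off from $S$. In a preprocessing step, query $\norm{A_{i_t,*}}^2$ from $\DynSamp(A)$ for each $t\in[m]$ and build, via prefix sums, a sampler over $[m]$ that returns $t$ with probability proportional to $s_t^2\norm{A_{i_t,*}}^2$; this costs $O(m\lognd)$. To draw each column sample, first draw $t$ from this sampler, then draw a column index $j$ from the per-row tree of row $i_t$, which returns $j$ with probability $A_{i_t,j}^2/\norm{A_{i_t,*}}^2$ in $O(\log\colms)$ time. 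Since $(SA)_{t,j}=s_t A_{i_t,j}$, the probability of returning column $j$ telescopes to
\[
\sum_{t\in[m]}\frac{s_t^2\norm{A_{i_t,*}}^2}{\sum_{t'}s_{t'}^2\norm{A_{i_{t'},*}}^2}\cdot\frac{A_{i_t,j}^2}{\norm{A_{i_t,*}}^2}
=\frac{\sum_{t}s_t^2A_{i_t,j}^2}{\sum_{t'}s_{t'}^2\norm{A_{i_{t'},*}}^2}
=\frac{\norm{(SA)_{*,j}}^2}{\norm{SA}_F^2}.
\]
Summing the preprocessing cost and the $c$ per-sample costs of $O(\lognd)$ each gives the claimed $O((c+m)\lognd)$.

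The only real care needed is in the last part: the intermediate distribution must weight row $t$ of $SA$ by its \emph{total} squared mass $s_t^2\norm{A_{i_t,*}}^2$ rather than by $\norm{A_{i_t,*}}^2$ alone, and repeated rows of $S$, i.e.\ $i_t=i_{t'}$ for $t\ne t'$, must be treated as distinct atoms of the $[m]$-sampler; with these conventions the telescoping identity above recovers exactly the target column distribution. The remaining bookkeeping — handling a zero entry that becomes nonzero (or vice versa) during a turnstile update — is routine given the insert/delete support of Lemma~\ref{lem weighted tree}.
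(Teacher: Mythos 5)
Your proposal is correct and follows essentially the same route as the paper: the first two claims reduce directly to the tree structure of Lemma~\ref{lem weighted tree}, and for the third you build a weighted sampler on $[m]$ with weights $\norm{(SA)_{t,*}}^2 = s_t^2\norm{A_{i_t,*}}^2$ in $O(m\lognd)$ time, then do two-level (row, then within-row column) sampling, using the telescoping identity to recover the column distribution of $SA$ — exactly the paper's argument, just written in more detail (e.g.\ the explicit propagation of a single update through the three components of $\DynSamp$).
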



We designate the algorithm of Lemma~\ref{prelim:lem Lsquare} as $\textsc{LenSqSample}$,
as given at a high level in Algorithm~\ref{alg:lensqsample}, and in more detail in the
proof of Lemma~\ref{lem dynsamp} in Appendix~\ref{sec:ridge_proof}.

This simple data structure and sampling scheme will be used to solve
ridge regression problems, via Algorithm~\ref{alg:dyn ridgereg}. Its
analysis, which proves Theorem~\ref{thm dyn ridgereg inf},
is given in Appendix~\ref{sec:ridge_proof}.

\section{Sampling from a Low-Rank Approximation}\label{sec rec}

Our algorithm for low-rank approximation is $\textsc{BuildLowRankFactors}$,
Algorithm~\ref{alg:dyn lowrank}, given below. As discussed in the introduction,
it uses $\textsc{LenSqSample}$, Algorithm~\ref{alg:lensqsample},
to reduce to a matrix whose size is independent of the input size, beyond
log factors, as well as Projection-Cost Preserving sketches, QR factorization, and
leverage-score sampling. Its analysis, proving Theorem~\ref{thm lra inf},
is given in Appendix~\ref{sec:LRA_proof}.

\def\rkt{t}

\begin{algorithm}[htb]
 {\small
\caption{$\textsc{BuildLowRankFactors}
\\(\DynSampler, k, \hsig_\rA, \hsig_{k}, \eps, \tau)$}
\label{alg:dyn lowrank}
{\bf Input:}
$\DynSampler = \DynSamp(A)$ (Def.~\ref{def dynsamp}) for $A\in\R^{\rows\times\colms}$,
$k$ target rank,
$\hsig_\rA\le 1/\norm{A^+}$,
$\hsig_{k} \le \sigma_{k}(A)$,
$\eps$ an error parameter,
$\tau$ estimate of $\norm{A-A_{k}}_F^2$,
where $A_{k}$ is the best rank-$k$ approximation
to $A$\\
{\bf Output:} Small matrix $W$ and sampling matrices
$S$, $R$ \\ 
so that $\rank(AR W SA)=k$ and \\
$\norm{ARWSA-A} \le (1+\eps)\norm{A-A_k}$\\
\begin{algorithmic}[1]
\STATE $\lambda\gets \tau/k$, $Z_\lambda \gets 1/\sqrt{\lambda + \hsig_\rA^2}$,  $Z_{k} \gets 1/\hsig_{k}$
\STATE Choose $m_R = m_S = O(\hat m_S Z_\lambda^2 \norm{A}_F^2)$,\\
where $\hat m_{S} = O(\eps^{-2}\log k)$
\STATE $S, R_1\gets \textsc{LenSqSample}(\DynSampler, \mathbf{null}, m_S, m_R) $

\STATE Apply Alg.~1 and Thm.~1 of \cite{cohen2017input} to $\SS AR_1$, get col. sampler $R_2$
\COMMENT{ $m_{R_2} = O(\eps^{-2} k\log k)$}
\STATE Apply Alg.~1 and Thm.~1 of \cite{cohen2017input} to $\SS AR_1R_2$, get row sampler $S_2$
\COMMENT{ $m_{S_2} = O(\eps^{-2} k\log k)$}
\STATE $V \gets$ top-$k$ right singular matrix of $S_2\SS AR_1R_2$
\STATE $U,\underline{\ \ } \gets \textsc{QR}(\SS  A R_1R_2 V)$
\\ \COMMENT{$U$ has orthonormal cols, $\SS  A R_1R_2 V= UC$ for matrix $C$}
\STATE Choose $m_{R_3} = O(\hat m_{R_3}\eps^{-1} Z_{k}^2\norm{A}_F^2)$,
\\ where $\hat m_{R_3} = O(\eps_0^{-2}\log k + \eps^{-1})$, $\eps_0$ a small constant
\STATE $R_3 \gets \textsc{LenSqSample}(\DynSampler, S, m_S, m_{R_3})$ 
\STATE Let $f(k,C)$ be the function returning the value \\ $\qquad m_{R_4} = O(\eps_0^{-2}k\log k + \eps^{-1}k)$
\STATE $R_4^\top,\underline{\ \ \ } \gets \textsc{LevSample}((U^\top {\SS }AR_3)^\top, \log(m_{R_3}), f())$ \\
\COMMENT{Alg.~\ref{prelim_alg:levsample}}
\STATE $R\gets R_3 R_4$
\STATE $W \gets (U^\top {\SS }AR)^+U^\top$
\STATE return $W$, ${\SS }$, $R$
\end{algorithmic}
}
\end{algorithm}

\section{Experiments}
\label{sec:exp}
We evaluate the empirical performance of our algorithm on both synthetic and real-world datasets. All of our experiments were done in Python and conducted on a laptop with a 1.90GHz CPU and 16GB RAM. Prior work~\cite{quantum_practice} suggests the tree data structure is only faster than the built-in sampling function when the matrix size $\max\{n, d\}$ is larger than $10^6$. Hence we follow the implementation in~\cite{quantum_practice} that directly uses the built-in function. For a fair comparison, we also modified the code in~\cite{quantum_practice}, which reduces the time to maintain the data structure by roughly 30x. For each experiment, we took an average over $10$ independent trials. 

We note that we do not compare with classical sketching algorithms for several reasons. First, there is no classical contender with the same functionality as ours. This is because our dynamic algorithms support operations not seen elsewhere: sublinear work for regression and low-rank approximation, using simple fast data structures that allow, as special cases, row-wise or column-wise updates. Second, unlike dynamic algorithms where a sketch is maintained, our algorithms are not vulnerable to updates based on prior outputs, whether adversarially, or due to use in the inner loop of an optimization problem. 
This is because our algorithms are based on independent sampling from the exact input matrix. 

\subsection{Low-Rank Approximation}
\label{exp:lra}

We conduct experiments on the following datasets:
\vspace{-0.1in}
\begin{itemize}
    \setlength{\itemsep}{0ex}  
    \setlength{\parsep}{0ex}  
    \setlength{\parskip}{0ex}  
    \item \textbf{KOS data.}\footnote{The \href{https://archive.ics.uci.edu/ml/datasets/Bag+of+Words}{Bag of Words Data Set} from the UCI Machine Learning Repository.} A word frequency dataset. The matrix represents word frequencies in blogs and has dimensions 3430 × 6906 with 353160 non-zero entries.
    
    \item \textbf{MovieLens 100K.}~\cite{movielen} A movie ratings dataset, which consists of a preference matrix with 100,000 ratings from 611 users across 9,724 movies.
\vspace{-0.1in}
\end{itemize}

We compare our algorithms with the implementations in~\cite{quantum_practice}, which are based on the algorithms in~\cite{fkv04} and ~\cite{tang2019quantum}. We refer to this algorithm as ADBL henceforth. 
For the KOS dataset, we set the number of sampled rows and columns to be $(r,c) = (500, 700)$ for both algorithms. For the MovieLens dataset we set $(r,c) = (300, 500)$. We define the error $\eps$ = $\|A - Y\|_F / \|A - A_k\|_F - 1$, where $Y$ is the algorithm's output and $A_k$ is the best $k$-rank approximation. Since the regime of interest is $k \ll n$, we vary $k$ among $\{10, 15, 20\}$.

The results are shown in Table~\ref{tab:movie}. 
We first report the total runtime, which includes the time to maintain the data structure and then compute the low-rank approximation. We also report the query time, which excludes the time to maintain the data structure.
From the table we see that both algorithms can achieve $\eps \approx  0.05$ in all cases. The query time of ours is about 6x-faster than the ADBL algorithm in~\cite{quantum_practice}, and even for the total time, our algorithm is much faster than the SVD. Although the accuracy of ours is slightly worse, in Appendix~\ref{sec:appendix_exp_lra} we show by increasing the sample size slightly, our algorithm achieves the same accuracy as ADBL~\cite{quantum_practice}, but still has a faster runtime. 

We remark that the reason our algorithm only needs half of the time to compute the sampling probabilities is that we only need to sample rows or columns according to their squared length, but the algorithm in~\cite{quantum_practice} also needs to sample entries for each sampled row according to the squared values of the entries.

\begin{table}[t]
\caption{Performance of our algorithm and ADBL on MovieLen 100K and KOS data, respectively.}
\label{tab:movie}
\centering
\begin{tabular}{|c|c|c|c|}
\hline
& $k=10$ & $k=15$ & $k=20$ \\
\hline
$\eps$(Ours) & 0.0416 & 0.0557 & 0.0653\\
\hline
$\eps$(ADBL) & 0.0262 & 0.0424 & 0.0538\\
\hline
\hline
 Runtime  & \multirow{2}{*}{0.125s} & \multirow{2}{*}{0.131s} & \multirow{2}{*}{0.135s} \\
(Ours, Query) & & & \\
\hline
Runtime & \multirow{2}{*}{0.181s} & \multirow{2}{*}{0.183s} & \multirow{2}{*}{0.184s} \\
(Ours, Total) & & & \\
\hline
Runtime& \multirow{2}{*}{0.867s} & \multirow{2}{*}{0.913s} & \multirow{2}{*}{1.024s} \\
(ADBL, Query) & & & \\
\hline
Runtime & \multirow{2}{*}{0.968s} & \multirow{2}{*}{1.003s} & \multirow{2}{*}{1.099s} \\
(ADBL, Total)& & & \\
\hline
Runtime of SVD & \multicolumn{3}{c|}{2.500s}\\
\hline
\end{tabular}
\begin{tabular}{|c|c|c|c|}
\hline
& $k=10$ & $k=15$ & $k=20$ \\
\hline
$\eps$(Ours) & 0.0397 & 0.0478 & 0.0581\\
\hline
$\eps$(ADBL)  & 0.0186 & 0.0295 & 0.0350\\
\hline
\hline
 Runtime  & \multirow{2}{*}{0.292s} & \multirow{2}{*}{0.296s} & \multirow{2}{*}{0.295s} \\
(Ours, Query) & & & \\
\hline
Runtime & \multirow{2}{*}{0.452s} & \multirow{2}{*}{0.455s} & \multirow{2}{*}{0.452s} \\
(Ours, Total) & & & \\
\hline
Runtime& \multirow{2}{*}{1.501s} & \multirow{2}{*}{1.643s} & \multirow{2}{*}{1.580s} \\
(ADBL, Query) & & & \\
\hline
Runtime & \multirow{2}{*}{1.814s} & \multirow{2}{*}{1.958s} & \multirow{2}{*}{1.897s} \\
(ADBL, Total)& & & \\
\hline
Runtime of SVD & \multicolumn{3}{c|}{36.738s}\\
\hline
\end{tabular}
\end{table}




\subsection{Ridge Regression}
\label{sec:exp_ridge}
In this section, we consider the problem 
\[
X^* := 
\min_{X \in \R^{d \times d'} } \norm{ A X - B }_F^2 + \lambda \norm{X}_F^2,
\]
where $A \in \R^{n \times d}$, $B \in \R^{n \times d'}$. We do experiments on the following dataset with $\lambda = 1$:
\begin{itemize}
\setlength{\itemsep}{0ex}  
    \setlength{\parsep}{0ex}  
    \setlength{\parskip}{0ex}  
    \item \textbf{Synthetic data.} We generate the rank-$k$ matrix $A$ as~\cite{quantum_practice} do. Particularly, suppose the SVD of $A$ is $A = U\Sigma V^{\top}$. We first sample an $n \times k$ Gaussian matrix, then we perform a QR-decomposition $G = QR$, where $Q$ is an $n \times k$ orthogonal matrix. We then simply set $U = Q$ and then use a similar way to generate $V$. We set $A \in \R^{7000 \times 9000}$, $B \in \R^{7000 \times 1}$.
    \item \textbf{YearPrediction.}\footnote{ \href{https://archive.ics.uci.edu/ml/datasets/yearpredictionmsd}{YearPredictionMSD Data Set}} A dataset that collects 515345 songs and each song has 90 attributes. The task here is to predict the release year of the song. $A \in \R^{515345 \times 90}$, $B \in \R^{515345 \times 1}$.
    \item \textbf{PEMS data.} \footnote{\href{https://archive.ics.uci.edu/ml/datasets/PEMS-SF}{PEMS-SF Data Set}} The data describes the occupancy rate of different car lanes of San Francisco bay area freeways. Each row is the time series for a single day. The task on this dataset is to classify each observed day as the correct day of the week, from Monday to Sunday. $A \in \R^{440 \times 138672}$, $B \in \R^{440 \times 1}$.
\end{itemize}

We define the error $\eps = \|X - X^*\|_F / \|X^*\|_F$, given the algorithm output $X$. For synthetic data, we set the number of sampled rows and columns to be $r$ and $c$. For the YearPrediction data, the number of columns is small, and hence we only do row sampling, and likewise, for the PEMS data, we only do column sampling. We did not find an implementation for the ridge regression problem in the previous related work. Therefore, here we list the time to compute the closed-form optimal solution $X^* = (A^{\top} A + \lambda I)^{-1} A^{\top} B$ or $X^* = A^{\top}(A A^{\top} + \lambda I)^{-1} B$, as a reference.

The results are shown in Table~\ref{tab:syn} and~\ref{tab:year}. From the tables we can see that for  synthetic data, the algorithm can achieve an error $\eps < 0.1$ when only sampling less than $10\%$ of the rows and columns. Also, the total runtime is about 40x-faster than computing the exact solution. For the YearPrediction and PEMS data, the bottleneck of the algorithm becomes the time to compute the sample probabilities, but the query time is still very fast and we can achieve an error $\eps < 0.1$ when only sampling a small fraction of the rows or columns.

\begin{table}[t]
\caption{Performance of our algorithm on synthetic data.}\label{tab:syn}
\centering
\begin{tabular}{|c|c|c|c|}
\hline
$(r, c)$& $300, 500$ & $500, 800$ & $1000, 1500$ \\
\hline
$\eps$(Ours) & 0.1392 &  0.0953& 0.0792\\
\hline
\hline
 Runtime  & \multirow{2}{*}{0.021s} & \multirow{2}{*}{0.042s} & \multirow{2}{*}{0.148s} \\
(Query) & & & \\
\hline
Runtime & \multirow{2}{*}{0.557s} & \multirow{2}{*}{0.568s} & \multirow{2}{*}{0.667s} \\
(Total) & & & \\
\hline
Exact $X*$ & \multicolumn{3}{c|}{24.074s}\\
\hline
\end{tabular}
\end{table}

\begin{table}[h]
\caption{Performance of our algorithm on YearPrediction data and PEMS data, respectively.}\label{tab:year}
\centering
\begin{tabular}{|c|c|c|c|}
\hline
$r = $& $1000$ & $3000$ & $5000$ \\
\hline
$\eps$(Ours) & 0.1070 & 0.0633 & 0.0447\\
\hline
\hline
 Runtime  & \multirow{2}{*}{0.031s} & \multirow{2}{*}{0.037s} & \multirow{2}{*}{0.059s} \\
(Query) & & & \\
\hline
Runtime & \multirow{2}{*}{0.213s} & \multirow{2}{*}{0.229s} & \multirow{2}{*}{0.245s} \\
(Total) & & & \\
\hline
Exact $X^*$ & \multicolumn{3}{c|}{0.251s}\\
\hline
\end{tabular}
\begin{tabular}{|c|c|c|c|}
\hline
$c = $ & $15000$ & $25000$ & $35000$ \\
\hline
$\eps$(Ours) & 0.1778 & 0.1397 & 0.1130\\
\hline
\hline
 Runtime  & \multirow{2}{*}{0.234s} & \multirow{2}{*}{0.381s} & \multirow{2}{*}{0.532s} \\
(Query) & & & \\
\hline
Runtime & \multirow{2}{*}{0.473s} & \multirow{2}{*}{0.628s} & \multirow{2}{*}{0.777s} \\
(Total) & & & \\
\hline
Exact $X*$ & \multicolumn{3}{c|}{0.972s}\\
\hline
\end{tabular}
\end{table}

\section*{Acknowledgements.} Honghao Lin and David Woodruff would like to thank for partial support from the National Science Foundation (NSF) under Grant No. CCF-1815840.

\newpage
\bibliographystyle{alpha}
\bibliography{p.bib}

\appendix
\section{Preliminaries}
\label{sec:appendix_prelim}

In, this section, we provide proofs for our theorems in Section \ref{subsec nota}. We first provide the description of the leverage score sampling data structure (Algorithm~\ref{prelim_alg:levsample}). We note that the advantage of the current version compared to the standard leverage score sampling (see, e.g., Section~2.4 in the survey in~\cite{woodruff2014sketching}) is that it saves an $O(\log n)$ factor.
We need the following data structure.

\begin{definition}[Sampling Data Structure]
\label{def samp}
Given a matrix $A\in\R^{\rows\times\colms}$, a column selection matrix $\Lambda$ such that $k=\rank(A\Lambda) = \rank(A)$, and $\lambda_s\ge 1$, the data structure 
$\textsc{Samp}(A, \Lambda, \lambda_s)$ consists of the following: 
\begin{itemize}
 \item $SA$, where $S\in\R^{m_S \times \rows}$ is a sketching matrix as in Lemma~\ref{lem embed}, with $m_S = O(\rA\log(\rA)/\eps_0^2)$, chosen to be an $\eps_0$-embedding with failure probability $1/100$, for fixed $\eps_0$;
  \item $C$, where $[Q, C] \gets \textsc{QR}(SA\Lambda)$, the QR decomposition of $SA\Lambda$, i.e.,  $SA\Lambda = QC$, $Q$ has orthonormal columns and $C$ is triangular and invertible (since $A\Lambda$ has full column rank);
 \item $C_0$, where $[Q_0, C_0] \gets \textsc{QR}(SA)$;
 \item The data structure of Lemma~\ref{lem weighted tree}, built to enable sampling $i\in [\rows]$ with probability $p_i\gets \norm{Z_{i,*}}^2/\norm{Z}_F^2$ in $O(\log\rows)$ time, where $Z \gets A\Lambda (C^{-1}G)$, with $G\in\R^{k\times m_G}$ having independent ${\cal N}(0,1/m_G)$ entries, and $m_G = \Theta(\lambda_s)$.
\end{itemize}
\end{definition}


\begin{algorithm}[hbt!]
 {\small
\caption{$\textsc{MatVecSampler}(A, \Sampler, W, \samplesize, \nu)$}
\label{alg:matvecsampler}
{\bf Input:} $A \in \R^{\rows\times\colms}$, data structure $ \Sampler$ (Def.~\ref{def samp}), $W\in\R^{\colms\times m_W}$,
desired number of samples $\samplesize$, normalizer $\nu$, where $\nu= \frac{1}{6 \rA \rows^{1/\lambda_s}}$ by default if unspecified\\
{\bf Output:} $L\in\R^{v\times\colms}$, encoding $v$ draws from $i\in [\rows]$ chosen with approx. probability $q_i \defeq \norm{A_{i,*}W}^2/\norm{AW}_F^2$\\
\begin{algorithmic}[1]
\STATE $N \gets \norm{C_0W}_F^2$, where $C_0$ is from $\Sampler$ 
\STATE if $N == 0$:
\STATE $\qquad$ return $\textsc{Uniform}(v,\rows)$ \label{alg uni} \COMMENT{Alternatively, raise an exception here}
\STATE $L\gets 0_{v\times \rows}$, $z\gets 0$ \label{alg mvs C}
\STATE while $z<\samplesize$:
\STATE $\qquad$ Choose $i\in [\rows]$ with probability $p_i$ using $\Sampler$ \label{alg crude}
\STATE $\qquad$ $\tq_i \gets \norm{A_{i, *}W}^2/N$
\STATE $\qquad$ With probability $\nu\frac{\tq_i}{p_i}$, accept $i$: set $L_{z,i} = 1/\sqrt{v \tq_i}$; $z\gets z+1$\label{alg matvecsample reject}
\STATE return $L$
\end{algorithmic}
}
\end{algorithm}



\begin{lemma}[Sampling Data structure]\label{lem Samp time}
 The data structure $\Samp(A, \Lambda,  \lambda_s)$, from Definition \ref{def samp}, can be constructed in $O(\lambda_s (\nnz(A) + \rA^2) + \colms^\omega)$ time.
\end{lemma}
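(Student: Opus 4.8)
The plan is to walk through the four components of the data structure in Definition~\ref{def samp} and bound the cost of assembling each one, then sum. First, computing $SA$: by Lemma~\ref{lem embed}, the sketch $S$ has $m_S = O(\eps_0^{-2}\rA\log\rA)$ rows and $SA$ can be formed in $O(\nnz(A) + \rA^\omega\poly(\log\log\rA) + \poly(1/\eps_0)\rA^{2+o(1)})$ time; since $\eps_0$ is a fixed constant this is $O(\nnz(A) + \rA^{2+o(1)})$, which is absorbed into the claimed bound. Second, the two QR factorizations: $SA\Lambda$ is $m_S\times\rA$ and $SA$ is $m_S\times\colms$, so $\textsc{QR}(SA\Lambda)$ costs $O(m_S\rA^2) = \tilde O(\rA^3)$ and $\textsc{QR}(SA)$ costs $O(m_S\colms\,?)$ — here one must be slightly careful, but since we only need the triangular factors $C\in\R^{\rA\times\rA}$ and $C_0\in\R^{\colms\times\colms}$... actually $C_0$ is $\colms\times\colms$ only if $SA$ has full column rank; more precisely $[Q_0,C_0]\gets\textsc{QR}(SA)$ with $C_0$ the $\min(m_S,\colms)\times\colms$ (or $\rA\times\colms$ after rank truncation) triangular factor, costing $O(m_S\colms\rA)$ or so, dominated by terms already present. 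I would state these as standard and fold the $\rA^\omega$-type terms into $O(\rA^2) + \colms^\omega$; note the lemma's bound already carries a $\colms^\omega$ term, which covers any dense $\colms$-dimensional linear-algebra step.

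Third, and this is where the $\lambda_s$ factor enters: forming $Z \gets A\Lambda(C^{-1}G)$ where $G\in\R^{\rA\times m_G}$ with $m_G = \Theta(\lambda_s)$. The key is to associate the multiplications right-to-left. Computing $C^{-1}G$ is $\rA$ triangular solves, one per column of $G$, at $O(\rA^2)$ each, for $O(\lambda_s\rA^2)$ total; then $A\Lambda$ is (a column subset of) $A$ and multiplying it by the $\rA\times m_G$ matrix $C^{-1}G$ costs $O(\nnz(A)\cdot m_G) = O(\lambda_s\nnz(A))$ since we can iterate over nonzeros of $A$ and for each scatter into $m_G$ output coordinates. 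This yields the $O(\lambda_s(\nnz(A) + \rA^2))$ shape. Fourth, building the weighted-tree data structure of Lemma~\ref{lem weighted tree} over the $\rows$ row-norms $\{\norm{Z_{i,*}}^2\}$: computing the norms from $Z\in\R^{\rows\times m_G}$ is $O(\rows m_G) = O(\lambda_s\rows)$, and constructing the tree is $O(\rows)$ by Lemma~\ref{lem weighted tree}; since we assume $\nnz(A)\ge\rows$ this is absorbed into $O(\lambda_s\nnz(A))$.

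Summing the four contributions gives $O(\lambda_s(\nnz(A)+\rA^2) + \colms^\omega)$, where the $\colms^\omega$ term conservatively absorbs the dense operations involving the $\colms$-dimensional factor $C_0$ and the $\rA^{\omega}$-type costs of the QR steps and the sketch of Lemma~\ref{lem embed}. I expect the only real subtlety — not an obstacle so much as a bookkeeping point — to be the right-to-left association in forming $Z$: a naive left-to-right evaluation $A\Lambda C^{-1}$ would produce a dense $\rows\times\rA$ intermediate and destroy the $\nnz(A)$-linear dependence, so the proof should explicitly note that we keep $C^{-1}G$ (an $\rA\times m_G$ object) and apply it to $A\Lambda$ last, paying $\nnz(A)\cdot m_G$ rather than anything scaling with $\rows\colms$.
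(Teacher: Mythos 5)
Your proposal follows essentially the same decomposition as the paper's proof: form $SA$ via Lemma~\ref{lem embed}, account for the two QR factorizations with a $\colms^\omega$-type cost, compute $Z$ by right-to-left association so that the multiply against $A\Lambda$ touches only nonzeros of $A$ (giving $O(\lambda_s\nnz(A))$) while $C^{-1}G$ costs $O(\lambda_s\rA^2)$, and absorb the tree-construction cost using $\nnz(A)\ge\rows$. The one place the paper is more explicit is the QR step: it obtains the clean $O(\colms^\omega)$ bound by forming the Gram matrix $(SA)^\top(SA)$ and taking its Cholesky factor, both with fast matrix multiplication and using $m_S\le\colms$, which supersedes the naive Householder estimate $O(m_S\rA^2)=\tilde O(\rA^3)$ you initially write before correcting yourself to the $\rA^\omega$/$\colms^\omega$ form.
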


\begin{proof}
 The time needed to compute $SA$ is $O( \nnz(A) + k^{\omega}\textrm{poly}(\log\log(k))+k^{2+o(1)}\eps_0^{-2} )$.
Computing the $QR$ factorization of $SA$ takes $O(\colms^\omega)$ time, by first
 computing $(SA)^\top(SA)$ for the $m_S\times d$ matrix $SA$,
 and then its Cholesky composition,
 using ``fast matrix'' methods for both, and using $m_S\le \colms$.
 This dominates the time for the similar factorization of $SA\Lambda$.
 
 The $Z$ matrix can be computed in $O(\lambda_s(\nnz(A) + \rA^2))$ time, by appropriate order of multiplication, and this dominates
 the time needed for building the data structure of Lemma~\ref{lem weighted tree}. Adding these terms, and using $m\le\nnz(A)$, the result follows.
\end{proof}

\begin{lemma}[MatVecSampler Analysis]\label{lem matvecsampler}
 Given constant $c_0>1$ and small enough constant $\eps_0>0$,
 and $\Sampler$ for $A$,
 there is an event $\cE$ holding with failure probability at most $1/\rA^{c_0}$, so that if $\cE$ holds,
 then when called with $\nu\gets \frac{1}{6 \rA \rows^{1/\lambda_s}}$,
 the probability is $(1\pm\eps_0) q_i$  that the accepted index in
 Step~\ref{alg matvecsample reject} of $\textsc{MatVecSampler}$ is $i\in [\rows]$,
 where $q_i\defeq \norm{A_{i,*}W}^2/\norm{AW}_F^2$.
 The time taken is $O(m_W\colms (\colms + \samplesize k \rows^{1/\lambda_s})$, where $k=\rank(A)$.
\end{lemma}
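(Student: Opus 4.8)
\textbf{Proof proposal for Lemma~\ref{lem matvecsampler}.}

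The plan is to analyze $\textsc{MatVecSampler}$ as a rejection sampler whose proposal distribution is the squared-row-norm distribution of $Z = A\Lambda(C^{-1}G)$ maintained inside $\Sampler$, and whose target distribution is $q_i = \norm{A_{i,*}W}^2/\norm{AW}_F^2$. First I would pin down the event $\cE$. Recall that $SA\Lambda = QC$ with $Q$ orthonormal, so $A\Lambda C^{-1}$ is (up to the $\eps_0$-embedding error of $S$) an approximately orthonormal basis for the column space of $A$; hence the rows $Z_{i,*} = A_{i,*}\Lambda C^{-1}G$ have squared norms $\norm{Z_{i,*}}^2$ that, after the Johnson--Lindenstrauss action of the Gaussian $G\in\R^{k\times m_G}$ with $m_G = \Theta(\lambda_s)$, concentrate around the leverage scores $\ell_i$ of $A$. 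I would take $\cE$ to be the event that (i) $S$ is an $\eps_0$-embedding for the column space of $A$ (failure probability $1/100$ from Lemma~\ref{lem embed}, which I would boost to $\le \tfrac12\rA^{-c_0}$ by adjusting constants) and (ii) simultaneously for all $i\in[\rows]$, $\norm{Z_{i,*}}^2 = (1\pm\eps_0)\ell_i \cdot(\text{normalization})$ up to a crude multiplicative slack of order $\rows^{1/\lambda_s}$ coming from a weak JL bound with only $m_G=\Theta(\lambda_s)$ columns — here a union bound over $\rows$ rows against a Gaussian tail gives failure probability $\le \rows\cdot e^{-\Omega(\lambda_s)}$, which I set to $\le\tfrac12\rA^{-c_0}$ by taking $\lambda_s$ large enough relative to $\log\rows$ (the $\rows^{1/\lambda_s}$ factor is exactly the price of not doing this). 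This yields the stated $1/\rA^{c_0}$ overall failure bound for $\cE$.

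Next, conditioned on $\cE$, I would verify the acceptance step is well-defined and unbiased. The key algebraic identity is that $N = \norm{C_0 W}_F^2$ equals $\norm{AW}_F^2$ up to the $\eps_0$-embedding error, since $SA = Q_0 C_0$ with $Q_0$ orthonormal gives $\norm{C_0W}_F = \norm{Q_0C_0W}_F = \norm{SAW}_F = (1\pm\eps_0)\norm{AW}_F$; so $\tq_i = \norm{A_{i,*}W}^2/N = (1\pm\eps_0)q_i$. The acceptance probability in Step~\ref{alg matvecsample reject} is $\nu\,\tq_i/p_i$ where $p_i = \norm{Z_{i,*}}^2/\norm{Z}_F^2$ is the proposal probability. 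On $\cE$ we have $p_i \ge \ell_i/(2\norm{Z}_F^2) \ge \ell_i/(2\rA)$ roughly, while $\tq_i \le 1$ always (it is a single term over a sum bounded below by $N$ up to $\eps_0$), and more sharply $\tq_i \le \norm{A_{i,*}W}^2/\norm{AW}_F^2 \cdot(1+\eps_0)$. Using $\norm{A_{i,*}W}^2 \le \norm{A_{i,*}}^2\norm{W}^2$ and the relation of row norms to leverage scores times $\sigma_1^2$, one gets $\tq_i/p_i \le O(\rA\rows^{1/\lambda_s})$ on $\cE$, so the default $\nu = \frac{1}{6\rA\rows^{1/\lambda_s}}$ makes $\nu\tq_i/p_i \le 1$, a valid probability. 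The probability that a given iteration accepts index $i$ is $p_i\cdot\nu\tq_i/p_i = \nu\tq_i$, independent of the proposal, so conditioned on acceptance the output index is $i$ with probability $\tq_i/\sum_j\tq_j = \tq_i/(N^{-1}\sum_j\norm{A_{j,*}W}^2) = \tq_i \cdot \norm{AW}_F^2/N \cdot (\norm{AW}_F^2)^{-1}\cdot\ldots$, which collapses to exactly $\norm{A_{i,*}W}^2/\norm{AW}_F^2 = q_i$ when $N=\norm{AW}_F^2$, and to $(1\pm\eps_0)q_i$ after accounting for the embedding slack in $N$. The reweighting $L_{z,i} = 1/\sqrt{v\tq_i}$ is the standard importance-sampling correction and needs no further argument for this lemma's statement.

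Finally, for the running time, each iteration of the while loop costs $O(\log\rows)$ to draw $i$ from $\Sampler$ plus $O(m_W\colms)$ to form $A_{i,*}W$ and its norm; the expected number of iterations to collect $v$ accepted samples is $v/\expec{}{\text{accept prob}}$. The per-iteration acceptance probability is $\nu\sum_j\tq_j = \nu\cdot\norm{AW}_F^2/N = \Theta(\nu) = \Theta(1/(\rA\rows^{1/\lambda_s}))$ on $\cE$, so the expected iteration count is $O(v\rA\rows^{1/\lambda_s})$, giving total time $O(m_W\colms\cdot v\rA\rows^{1/\lambda_s} + \colms\cdot(\ldots))$; adding the $O(m_W\colms^2)$-type cost that appears when $N$ or other setup quantities touch all of $C_0W$ yields the stated $O(m_W\colms(\colms + \samplesize k\rows^{1/\lambda_s}))$.

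The main obstacle I expect is part (ii) of the event $\cE$: controlling $\norm{Z_{i,*}}^2$ uniformly over all $\rows$ rows using a Gaussian sketch $G$ with only $\Theta(\lambda_s)$ columns. A full JL guarantee would need $\Omega(\log\rows)$ columns; with fewer we only get a weak concentration, and the crude factor $\rows^{1/\lambda_s}$ that appears in both $\nu$ and the runtime is precisely the artifact of this trade-off. Getting the dependence on $\rows^{1/\lambda_s}$ tight (rather than, say, $\rows^{O(1/\lambda_s)}$) in both the acceptance-probability lower bound and the ratio bound $\tq_i/p_i$ will require care in how the approximate orthonormality of $A\Lambda C^{-1}$ interacts with the Gaussian tail, and this is where I would spend most of the detailed work.
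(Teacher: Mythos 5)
Your overall plan — rejection sampling with proposal $p_i$ from $\Sampler$ against target $q_i$, the event $\cE$ built from the embedding plus a weak JL bound, and the time analysis — matches the paper's proof in structure and in the time accounting. But there are two concrete soft spots that would not close as written.

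First, the bound $q_i \le \tau_i$ (and hence $\tq_i/p_i = O(k\rows^{1/\lambda_s})$). Your route is $\norm{A_{i,*}W}^2 \le \norm{A_{i,*}}^2\norm{W}^2$ followed by $\norm{A_{i,*}}^2 \le \tau_i\sigma_1^2$. That leaves you with $q_i \le \tau_i\,\sigma_1^2\norm{W}^2 / \norm{AW}_F^2$, and the factor $\sigma_1^2\norm{W}^2/\norm{AW}_F^2$ is not $O(1)$ in general — it can be as large as $\kappa^2(A)$, which would destroy the lemma's stated $\nu$ and the runtime. The paper's argument avoids this by writing $A=UC_1$ with $U$ orthonormal and noting $\norm{A_{i,*}W}^2 = \norm{U_{i,*}C_1W}^2 \le \norm{U_{i,*}}^2\norm{C_1W}^2$ while $\norm{AW}_F^2 = \norm{C_1W}_F^2$, so the ratio is $\tau_i\,\norm{C_1W}^2/\norm{C_1W}_F^2 \le \tau_i$ because the operator norm is dominated by the Frobenius norm. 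The decisive point is that $C_1W$ appears in both numerator and denominator, so no spectrum-dependent factor survives.

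Second, the weak JL union bound, and the missing condition in $\cE$. You say the per-row failure probability is $e^{-\Omega(\lambda_s)}$ and that you would "take $\lambda_s$ large enough relative to $\log\rows$," but this contradicts the point of the $\rows^{1/\lambda_s}$ slack: the relevant statement (the paper's Lemma on Gaussian sketching) is that with $m_G=O(\lambda_s)$, the one-sided guarantee $\norm{Gx}\ge\norm{x}/\rows^{1/\lambda_s}$ holds with failure probability $1/\rows^2$, so a union bound over $\rows$ rows costs only $1/\rows$, with $\lambda_s$ staying a free constant. Separately, your lower bound on $p_i$ implicitly uses $\norm{Z}_F^2 = O(k)$, which is not automatic: the paper gets it from $\norm{Z}_F^2 \le (1+\eps_0)\norm{G}_F^2$ plus a $\chi^2$ tail bound showing $\norm{G}_F^2\le 3k$ with high probability, and that tail bound is included as a third condition in $\cE$; you should do the same rather than asserting "roughly."
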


\begin{proof}
We need to verify that the quantity in question is a probability, that is, that $\nu\frac{\tq_i}{p_i}\in (0,1)$,
when $\nu=\frac{1}{6 k \rows^{1/\lambda_s}}$.

From Lemma~\ref{lem embed},
if $m_S = O(\eps_0^{-2}\rA)$ for $\eps_0>0$, then with failure probability $1/k^{c_0+1}$,
$S$ will be a subspace $\eps_0$-embedding for $\colspan(A)$, that is, for $A\Lambda$ and
for $A$, using $\rank(A)=k$. The event $\cE$ includes the condition that $S$ is indeed an $\eps_0$-embedding.
If this holds for $S$, then from standard arguments,
$A\Lambda C^{-1}$ has singular values all in $1\pm\eps_0$, and $\norm{A_{i,*}\Lambda C^{-1}}^2 = (1\pm O(\eps_0))\tau_i$,
where again $\tau_i$ is the $i$'th leverage score.

(We have $\norm{A\Lambda C^{-1}x}= (1\pm\eps_0) \norm{SA\Lambda C^{-1}x} = (1\pm\eps_0)\norm{x}$,
for all $x$, since $SA=QC$.) This implies that  for $Z,G$ in the construction of $\Samp(A, \Lambda, \lambda_s)$,
\[
\norm{Z}_F^2 = \norm{A\Lambda C^{-1}G}_F^2 \le (1+\eps_0)\norm{G}_F^2.
\]
Since $m_G \norm{G}_F^2$ is $\chi^2$ with $km_G$ degrees of freedom, with failure probability at most 
$\exp(-\sqrt{k\lambda_s}/2)$ (using $m_G=\Theta(\lambda_s)$,
it is at most $3km_G$ (\cite{laurent2000}, Lemma~1),
so $\norm{G}_F^2\le 3k$ with that probability. Our event $\cE$ also includes
the condition that this bound holds. Thus under this condition, $\norm{Z}_F^2 \le 3(1+\eps_0)k$.

From Lemma~\ref{lem JL} and the above characterization of $\tau_i$, for the $Z$ of $\Samp(A,\Lambda, \lambda_s)$,
$\norm{Z_{i,*}}^2 = \norm{A_{i,*}\Lambda C^{-1}G}^2 \ge (1-O(\eps_0))\tau_i/\rows^{1/\lambda_s}$.

Putting these together, we have
\begin{equation}\label{eq p bound}
p_i = \frac{\norm{Z_{i,*}}_F^2}{\norm{Z}^2} \ge (1-O(\eps_0)) \frac{ \tau_i/\rows^{1/\lambda_s}}{3k}.
\end{equation}

Using the $\eps_0$-embedding property of $S$, 
\begin{equation}\label{eq CA}
\norm{C_0 W}_F^2 = \norm{Q_0C_0W}_F^2 =  \norm{SAW}_F^2 = (1\pm2\eps_0)\norm{AW}_F^2,
\end{equation}
and so, letting $A= UC_1$ for $U$ with orthonormal columns, we have, for small enough $\eps_0$,
\begin{align*}
(1-2\eps_0) \tq_i  & \le \frac{ \norm{A_{i, *}W}^2}{\norm{AW}_F^2} 
	 =  \frac{ \norm{U_{i, *}C_1 W}^2}{\norm{UC_1W}_F^2}  =  \frac{ \norm{U_{i, *}C_1 W}^2}{\norm{C_1W}_F^2}
	 \le \frac{ \norm{U_{i, *}}^2 \norm{C_1 W}^2}{\norm{C_1W}_F^2} \le \tau_i.
\end{align*}
Putting this bound with \eqref{eq p bound} we have
\[
\frac{\tq_i}{p_i}
	\le \frac{\tau_i/(1-2\eps_0)}{(1-O(\eps_0)) \tau_i/\rows^{1/\lambda_s}3(1+\eps_0)k} \le 3(1+O(\eps_0))k\rows^{1/\lambda_s}.
\]
 so that $\nu \frac{\tq_i}{p_i} = \frac{1}{6 k \rows^{1/\lambda_s}}\frac{\tq_i}{p_i}\le (1+O(\eps_0))/2 \le 1$ for small enough $\eps_0$.
 Using \eqref{eq CA} we have $\tq_i = (1\pm 2\eps_0) q_i$.
  Thus the correctness condition of the lemma follows, for small enough $\eps_0$.
 
 Turning to time: the time to compute $C_0 W$ is $O(\colms^2 m_W)$.
 Each iteration takes $O(\log \rows + \colms m_W)$, for
 choosing $i$ and computing $\tq_i$, and these steps dominate the time. As usual for rejection sampling,
 the expected number of iterations is $O(\samplesize k \rows^{1/\lambda_s})$. Adding these expressions
 yields the expected time bound, folding a factor of $\log \rows$ in by adjusting $\lambda_s$ slightly.
\end{proof}

\begin{algorithm}[H]
 {\small
\caption{$\textsc{LevSample}(A, \mu_s, f())$}
\label{prelim_alg:levsample}
{\bf Input:} $A \in \R^{\rows\times\colms}$,  $\mu_s\ge 1$ specifying runtime tradeoff,
function $f(\cdot) \rightarrow \mathbb{Z}_+$ returns a target sample size (may be just a constant) \\
{\bf Output:} Leverage score sketching matrix $L$, column selector $\Lambda$
}

\begin{algorithmic}[1]
\STATE Run an algorithm to compute $k=\rank(A)$  and obtain  $\Lambda\in\R^{\colms\times k}$, a subset of $k$ lin. indep. columns of $A$  \COMMENT{for example Theorem 1.5  in~\cite{chepurko2022near}}\label{alg reduceW}
\STATE Construct $\Sampler \gets \Samp(A\Lambda, I, \lambda_s)$, use $C$ from it; \COMMENT{Definition \ref{def samp}} \label{alg makesamp}
\STATE $W\gets C^{-1}G'$, where $G'\in\R^{\rA\times m_{G'}}$ with ind. ${\cal N}(0,1/m_{G'})$ entries \COMMENT{$m_{G'} = \Theta(\log\rows)$} \label{alg makeW}
\STATE $L\gets \textsc{MatVecSampler}(A\Lambda , \Sampler, W, f(k,C), \nu = 1/6\rows^{1/\lambda_s})$\\ $\qquad$ \COMMENT{Algorithm \ref{alg:matvecsampler}, sample size $f(k,C)$, normalizer $\nu$}\label{alg step mvs}
\STATE return $L, \Lambda$ 
\end{algorithmic}
\end{algorithm}

\begin{theorem}[Leverage Score Data Structure, Theorem \ref{prelim:thm levsample} restated]
\label{thm levsample}
Let $k=\rank(A)$, and choose $\mu_s \ge 1$.
 Algorithm~\ref{prelim_alg:levsample} ($\textsc{LevSample}(A,\mu_s, f(\cdot))$)
 uses space $O(\rows + \rA^\omega\log\log(\rows\colms))$,
 not counting the space to store $A$, and runs in time
\[
O(\mu_s\nnz(A) + \rA^\omega\textrm{poly}(\log\log(k)) + \rA^{2+o(1)} +  \samplesize k \rows^{1/\mu_s}),
\]
where $\samplesize$ is the sample size.
For $\samplesize=\eps^{-2}k\log k$, this bound can be expressed as
$O(\mu_e\nnz(A) + \rA^\omega\textrm{poly}(\log\log(k)) + \eps^{-2-1/\mu_e}k^2)$ time, for $\mu_e\ge 1$.
\end{theorem}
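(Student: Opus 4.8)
The plan is to instrument Algorithm~\ref{prelim_alg:levsample} line by line, charging each step to an already‑established primitive, and to read off the output distribution from Lemma~\ref{lem matvecsampler}; throughout, $\lambda_s$ denotes the runtime‑tradeoff parameter (equal to the input $\mu_s$) and $\eps_0$ is a fixed constant. \emph{Correctness of the sampling distribution.} Line~\ref{alg reduceW} returns $\Lambda$ with $\rank(A\Lambda)=\rank(A)=k$, so $A\Lambda$ and $A$ have the same column span and hence the same row leverage scores $\tau_i$; it therefore suffices to analyze sampling from $A\Lambda$. Lines~\ref{alg makesamp}--\ref{alg makeW} build $\Samp(A\Lambda,I,\lambda_s)$ — in particular the QR factor $C$ of an $\eps_0$‑embedding $SA\Lambda=QC$ (Lemma~\ref{lem embed}), together with the weighted tree of Lemma~\ref{lem weighted tree} over the scalars $\{\norm{Z_{i,*}}^2\}$ with $Z=A\Lambda C^{-1}G$ — and set $W=C^{-1}G'$ for a Gaussian $G'$ with $m_{G'}=\Theta(\log n)$ columns. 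Line~\ref{alg step mvs} then invokes $\textsc{MatVecSampler}(A\Lambda,\Sampler,W,f(k,C),\nu)$; by Lemma~\ref{lem matvecsampler}, on an event $\cE$ of constant failure probability every returned index $i$ is drawn with probability $(1\pm\eps_0)q_i$, where $q_i=\norm{(A\Lambda)_{i,*}W}^2/\norm{A\Lambda W}_F^2$. Since $SA\Lambda=QC$ with $S$ an $\eps_0$‑embedding forces $A\Lambda C^{-1}$ to have all singular values in $1\pm\eps_0$, and since $m_{G'}=\Theta(\log n)$ makes $G'$ a Johnson--Lindenstrauss map preserving the norms of all $n$ rows of $A\Lambda C^{-1}$ to within $1\pm\eps_0$, we get $\norm{(A\Lambda)_{i,*}W}^2=(1\pm O(\eps_0))\tau_i$ and $\norm{A\Lambda W}_F^2=(1\pm O(\eps_0))k$, hence $q_i=(1\pm O(\eps_0))\tau_i/k$. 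Thus $L$ samples $v=f(k,C)$ rows proportionally to leverage scores (after rescaling $\eps_0$ by a constant), and a union bound over $\cE$, the embedding event, and the JL event keeps the total failure probability below a constant.

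\emph{Space and time.} For space, the retained objects are: the sketch $SA\Lambda$ ($O(\eps_0^{-2}k^2\log k)$ entries), the triangular factors $C,C_0$ ($O(k^2)$), the internal workspace of the sketch of Lemma~\ref{lem embed} ($O(k^\omega\log\log(nd))$), the weighted tree over the $n$ scalars $\norm{Z_{i,*}}^2$ ($O(n)$; $Z$ itself is never materialized), and $\Lambda$ ($O(k)$); these sum to $O(n+k^\omega\log\log(nd))$. For time: Line~\ref{alg reduceW} fits in $O(\mu_s\nnz(A)+k^\omega\poly(\log\log k)+k^{2+o(1)})$ by the rank/basis algorithm of~\cite{chepurko2022near}; Line~\ref{alg makesamp} costs $O(\lambda_s(\nnz(A)+k^2)+k^\omega\poly(\log\log k)+k^{2+o(1)})$ by Lemma~\ref{lem Samp time} specialized to the $n\times k$ matrix $A\Lambda$ (so its ``$d^\omega$'' becomes $k^\omega$) with $\eps_0=\Theta(1)$; Line~\ref{alg makeW} is a triangular solve with $\Theta(\log n)$ right‑hand sides, $\tO(k^2)$; and Line~\ref{alg step mvs} costs, by Lemma~\ref{lem matvecsampler} with $d\to k$, $m_W=\Theta(\log n)$, and the normalizer $\nu=1/(6n^{1/\lambda_s})$, an expected $O(v\,n^{1/\lambda_s})$ rejection rounds of $\tO(k)$ time each, i.e.\ $\tO(k^2+v\,k\,n^{1/\lambda_s})$. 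Adding these and folding the remaining logarithmic factors into the stated terms (standard in this setting) gives $O(\mu_s\nnz(A)+k^\omega\poly(\log\log k)+k^{2+o(1)}+v\,k\,n^{1/\mu_s})$. For $v=\eps^{-2}k\log k$ the sampling term is $\tO(\eps^{-2}k^2\cdot n^{1/\mu_s})$; reparameterizing by any $\mu_e\ge1$ — i.e.\ choosing $\mu_s\asymp\mu_e$ so that $n^{1/\mu_s}$ (with the stray logs) is at most $\eps^{-1/\mu_e}$ — rewrites the bound as $O(\mu_e\nnz(A)+k^\omega\poly(\log\log k)+\eps^{-2-1/\mu_e}k^2)$.

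\emph{Main obstacle.} The delicate part is not a single estimate but keeping several tensions mutually consistent: $\eps_0$ must stay a fixed constant so $\poly(1/\eps_0)$ does not inflate the $k^{2+o(1)}$ term, yet be small enough both that the $(1\pm O(\eps_0))$ distortions compose into a usable leverage‑score sample and that the accept probability $\nu\tq_i/p_i$ inside $\textsc{MatVecSampler}$ stays in $(0,1)$ for the \emph{sharpened} normalizer $\nu=1/(6n^{1/\lambda_s})$ used here (rather than the generic $1/(6kn^{1/\lambda_s})$), since it is precisely this sharpening that removes a factor of $k$ from the iteration count; the JL matrix $G'$ must carry $\Theta(\log n)$ columns so $\tq_i$ is simultaneously accurate over all $n$ rows, without that $\log n$ surviving irreducibly; and the rejection overhead $n^{1/\lambda_s}$ of Lemma~\ref{lem matvecsampler} must be traded against $\nnz(A)$ through the single shared parameter $\mu_s=\lambda_s$ exactly as in the statement. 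Verifying that these interlock — and in particular that routing the sketch of Lemma~\ref{lem embed} through $\Samp$ is what lets the construction shed the extra $\log n$ factor of the textbook leverage‑score sampler — is where the real care is needed; the rest is bookkeeping over the cited lemmas.
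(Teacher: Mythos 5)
Your proof tracks the paper's strategy closely — charge each line of \textsc{LevSample} to its lemma, add the costs — and your line-by-line charges for Steps 1–4 agree with the paper's accounting (including the observation that the sharpened normalizer $\nu=1/(6n^{1/\lambda_s})$ in the \textsc{LevSample} call shaves the factor of $k$ from the rejection count). The space argument and the correctness-of-distribution discussion are reasonable additions that the paper's own proof leaves implicit. However, there is a genuine gap in the last paragraph, in deriving the $\mu_e$-form of the time bound.

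You write that one can ``choose $\mu_s\asymp\mu_e$ so that $n^{1/\mu_s}$ (with the stray logs) is at most $\eps^{-1/\mu_e}$.'' This is simply false for constant $\eps$: if $\eps=1/2$, then $\eps^{-1/\mu_e}=O(1)$ while $n^{1/\mu_s}\to\infty$ for any fixed $\mu_s$, so no constant-multiple reparameterization can make the inequality hold. Nor can you let $\mu_s$ grow with $n$, since then $\mu_s\nnz(A)$ ceases to be $O(\mu_e\nnz(A))$. The claim is only salvageable after a case split, which is exactly what the paper does: either $\samplesize k n^{1/\lambda_s}\le\nnz(A)+k^\omega$ — in which case the sampling term is already absorbed and the $\mu_e$-form holds trivially — or that inequality fails, and then, using $\nnz(A)\ge n$ and a sub-split on $n\ge k^\omega$ versus $n<k^\omega$, one gets $\eps^{-2}\ge n^\gamma$ (resp.\ $\ge k^\gamma$) for a constant $\gamma>0$, which does license writing $n^{1/\lambda_s}\log n$ (resp.\ $\log k$) as $\eps^{-O(1/\lambda_s)}$ and completes the reparameterization. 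Your ``obstacle'' paragraph gestures at this trade-off against $\nnz(A)$ but does not carry it out, and without it the final step of the proof does not go through.
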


Note: we can get a slightly smaller running time by including more rounds of rejection sampling: the first
round of sampling needs an estimate with failure probability totaled over for all $\rows$ rows, while another round would only need such
a bound for $\samplesize \rows^{1/\lambda_s}$ rows; this would make the bound $\samplesize^{1+1/\lambda_s} k \rows^{1/\lambda_s^2}$,
which would be smaller when $\samplesize\ll \rows$. However, in the latter case, the term $ \samplesize k \rows^{1/\lambda_s}$ is dominated
by the other terms anyway, for relevant values of the parameters. For example if $\samplesize\le \rows$ and $\samplesize k\le\nnz(A)$
does not hold, then sampling is not likely to be helpful. Iterating $\log\log \rows$ times, a bound with leading term $O(\nnz(A)(\log\log \rows + \log \samplesize))$
is possible, but does not seem interesting.

\begin{proof}
 Step~\ref{alg makesamp}, building $\Samp(A\Lambda , \lambda_s)$,
 take $O(\lambda_s (\nnz(A) + \rA^2) + \rA^\omega)$ time, with $\colms$ in Lemma~\ref{lem Samp time} equal to $\rA$ here.
 
 From Lemma~\ref{lem matvecsampler}, the running time of $\textsc{MatVecSampler}$ is
 $O(k^2\log \rows + \samplesize k^2 (\log\rows) \rows^{1/\lambda_s})$, mapping $\colms$ of the lemma to $k$, $m_W$ to $m_{G'}=O(\log\rows)$.
 However, since the normalizer $\nu$ is a factor of $k$ smaller than assumed in Lemma~\ref{lem matvecsampler},
 the runtime in sampling is better by that factor. Also, we subsume the second $\log \rows$ factor by adjusting $\lambda_s$.
 
 The cost of computing $C^{-1}G'$ is $O(k^2\log\rows)$; we have a runtime of
 \begin{align*}
 O(\nnz(A) & + \rA^\omega\textrm{poly}(\log\log(k)) + k^{2+o(1)}) + O(\lambda_s (\nnz(A) + \rA^2) + \rA^\omega) + O(k^2\log \rows + \samplesize k \rows^{1/\lambda_s})
 	\\ & = O(\lambda_s\nnz(A) + \rA^\omega\textrm{poly}(\log\log(k)) + \rA^{2+o(1)} +  \samplesize k \rows^{1/\lambda_s}),
 \end{align*}
 as claimed.
 
 Finally, suppose $\samplesize=\eps^{-2}k \log k$, as suffices for an $\eps$-embedding.
 If $\samplesize k\rows^{1/\lambda_s} \le \nnz(A) + k^\omega$, then the bound follows. Suppose not.
 If $\rows \ge k^\omega$, then
 \[
 \eps^{-2} \ge \rows^{1-1/\lambda_s}/k^2\log(k) \ge \rows^{1-1/\lambda_s - 2/\omega}/\log(\rows)
 \]
 and so $\eps^{-2} \ge \rows^{\gamma}$, for constant $\gamma>0$,
 implying $\eps^{-2/\lambda_s\gamma'}\ge \rows^{1/\lambda_s}\log \rows$, for constant $\gamma' < \gamma$.
 When $k^\omega\ge \rows$,
 \[
 \eps^{-2} \ge k^{\omega-2- 1/\omega\lambda_s}/\log(k) \ge k^\gamma,
 \]
 for a constant $\gamma>0$, so that $\eps^{-\omega/\lambda_s\gamma'} \ge \rows^{1/\lambda_s}\log k$,
 for a constant $\gamma'<\gamma$.
 Using $\lambda_e$, a constant multiple of $\lambda_s$, to account for constants, the result follows. 
\end{proof}

We can also use \emph{length-squared sampling} to obtain subspace embeddings. In Section~\ref{subsec dyn lev}
we have given a data structure and algorithm that implements length-squared sampling. To analyze length-squared sampling in the context of ridge regression, we show the following structural observations about ridge regression.

\begin{lemma}[Block SVD]\label{lem Al}
 Let $A$ be such that $\rA=\rank(A)$, and suppose $A$ has thin SVD $A=U\Sigma V^\top$, implying $\Sigma\in\R^{\rA\times\rA}$.
 For $\lambda>0$, let $\Al = \twomat{A}{\sqrt{\lambda} VV^\top}$. For $b\in\R^\rows$, let
 $\hb=\twomat{b}{0_\colms}$. Then for all  $x\in\colspan(V)$, the ridge regression loss
\[
\norm{Ax-b}^2 + \lambda\norm{x}^2 =  \norm{\Al x - \hb}^2,
\]
and ridge regression optimum
\[
	x^* = \argmin_{x\in\R^\colms} \norm{Ax-b}^2 + \lambda\norm{x}^2 = \argmin_{x\in\R^\colms} \norm{\Al x - \hb}^2.
	\]
The matrix $\Al$ has SVD $\Al = \twomat{U\Sigma D}{\sqrt{\lambda}VD}D^{-1}V^\top$, where
$D= (\Sigma^2 + \lambda\Iden_\rA)^{-1/2}$,
and $\norm{\Al^+}^2 = 1/(\lambda + 1/\norm{A^+}^2)$. We have
$ \norm{A_{i,*}}^2 \norm{\Al^+}^2\ge \norm{U_{i,*}\Sigma D}^2$ for $i\in [\rows]$.
\end{lemma}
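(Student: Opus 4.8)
The plan is to prove the five assertions in order, each by a short direct computation with the thin SVD $A=U\Sigma V^\top$. For the loss identity, note that $\Al x-\hb=\twomat{Ax-b}{\sqrt\lambda\,VV^\top x}$, so $\norm{\Al x-\hb}^2=\norm{Ax-b}^2+\lambda\norm{VV^\top x}^2$; since $VV^\top$ is the orthogonal projector onto $\colspan(V)$, the term $\norm{VV^\top x}^2$ equals $\norm{x}^2$ precisely when $x\in\colspan(V)$, which is the claimed identity. For the $\argmin$ equivalence I would write an arbitrary $x\in\R^\colms$ as $x=x_\parallel+x_\perp$ with $x_\parallel=VV^\top x$ and $x_\perp$ orthogonal to $\colspan(V)$. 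Then $Ax_\perp=U\Sigma V^\top x_\perp=0$ and $VV^\top x_\perp=0$, so $\norm{\Al x-\hb}^2=\norm{Ax_\parallel-b}^2+\lambda\norm{x_\parallel}^2$ depends on $x$ only through $x_\parallel$, while the ridge loss $\norm{Ax-b}^2+\lambda\norm{x}^2=\norm{Ax_\parallel-b}^2+\lambda\norm{x_\parallel}^2+\lambda\norm{x_\perp}^2$ is minimized only when $x_\perp=0$. Hence both minimization problems reduce to minimizing the same strictly convex function over $\colspan(V)$, so they share the same (minimum-norm) minimizer, which is $x^*=\Al^+\hb$.

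Next I would verify that the displayed factorization of $\Al$ is a genuine thin SVD. Multiplying out, $\twomat{U\Sigma D}{\sqrt\lambda VD}D^{-1}V^\top=\twomat{U\Sigma V^\top}{\sqrt\lambda VV^\top}=\Al$, so it remains to check three structural conditions: $V$ has orthonormal columns (immediate); $D^{-1}=(\Sigma^2+\lambda I_k)^{1/2}$ is diagonal with positive entries (immediate); and $M:=\twomat{U\Sigma D}{\sqrt\lambda VD}$ has orthonormal columns, which follows from $M^\top M=D\Sigma^2 D+\lambda D^2=D(\Sigma^2+\lambda I_k)D=I_k$ using $D^{-2}=\Sigma^2+\lambda I_k$. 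Thus the singular values of $\Al$ are the diagonal entries of $D^{-1}$, namely $\sqrt{\sigma_i^2+\lambda}$. Consequently $\norm{\Al^+}$ is the reciprocal of the smallest of these, giving $\norm{\Al^+}^2=1/(\sigma_k^2+\lambda)=1/(\lambda+1/\norm{A^+}^2)$ since $1/\norm{A^+}^2=\sigma_k^2$. Finally, for the per-row bound, $\norm{A_{i,*}}^2=\norm{U_{i,*}\Sigma V^\top}^2=\norm{U_{i,*}\Sigma}^2$ because $V$ has orthonormal columns, and writing $v=U_{i,*}\Sigma$ we get $\norm{U_{i,*}\Sigma D}^2=vD^2v^\top\le\norm{D^2}\,\norm{v}^2=\norm{\Al^+}^2\,\norm{A_{i,*}}^2$, using $\norm{D^2}=\max_i(\sigma_i^2+\lambda)^{-1}=\norm{\Al^+}^2$.

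The whole argument is elementary and the computations are routine; the only place that needs genuine care is the $\argmin$ claim. The subtlety is that when $\colms>\rA$ the matrix $\Al$ is rank-deficient ($\Al^\top\Al=V(\Sigma^2+\lambda I_k)V^\top$ has rank $\rA$), so $\argmin_x\norm{\Al x-\hb}^2$ is unique only up to adding vectors in $\colspan(V)^\perp$, unlike genuine ridge regression with $\lambda>0$ whose minimizer is unique; one must therefore interpret the $\argmin$ for the $\Al$ problem as the minimum-norm solution and invoke the standard fact that the ridge optimum lies in the row space $\colspan(A^\top)=\colspan(V)$ in order to identify the two.
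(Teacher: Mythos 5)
Your proof is correct, and the verification of the SVD factorization and the per-row bound match the paper's reasoning in substance (you bound $\norm{U_{i,*}\Sigma D}^2$ via the Rayleigh quotient $vD^2v^\top\le\norm{D^2}\norm{v}^2$ whereas the paper applies submultiplicativity to $(\Al)_{i,*}\Al^+$; both are the same idea phrased slightly differently). Where you genuinely diverge is the $\argmin$ equivalence. The paper proves it by explicit computation: it writes out $\Al^+\hb = V\Sigma D^2 U^\top b$ from the displayed SVD, separately invokes the closed form $x^*=A^\top(AA^\top+\lambda\Iden_\rows)^{-1}b$ and reduces that to $V\Sigma D^2 U^\top b$ as well, and matches. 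You instead decompose $x=x_\parallel+x_\perp$ relative to $\colspan(V)$, show that the $\Al$ objective depends on $x$ only through $x_\parallel$ while the ridge objective has the extra strictly positive $\lambda\norm{x_\perp}^2$ that forces $x_\perp=0$, and conclude the two minimizers coincide once you take the minimum-norm solution of the $\Al$ problem. Your route is more structural and more elementary in that it avoids the closed-form formula; it also makes explicit a point the paper passes over, namely that $\Al$ is rank-deficient when $\colms>\rA$ so the unconstrained $\argmin_x\norm{\Al x - \hb}^2$ is not unique and must be interpreted as the pseudoinverse (minimum-norm) solution. The paper's route is arguably better suited to the surrounding text since the intermediate identity $\Al^+\hb = V\Sigma D^2 U^\top b$ is the exact expression needed to prove the final per-row inequality of the lemma, so the two calculations reuse each other, but either approach stands on its own.
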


\begin{proof}
 Since $x\in\colspan(V)$ has $x=Vz$ for some $z\in\R^\rA$, and since $V^\top V = \Iden_k$, it follows that
 $VV^\top x = Vz = x$, and so 
 \[
 \norm{\Al x - \hb}^2 = \norm{Ax-b}^2 + \norm{\sqrt{\lambda} VV^\top x - 0}^2 = \norm{Ax-b}^2 + \lambda\norm{x}^2 ,
 \]
 as claimed.
 
 The SVD of $\Al$ is
 $\Al = \twomat{U\Sigma D}{\sqrt{\lambda}VD} D^{-1} V^\top$, 
 where $D$ is defined as in the lemma statement, since the equality holds, 
 and both $\twomat{U\Sigma D}{\sqrt{\lambda}VD}$ and $V$ have
 orthonormal columns, and $D^{-1}$ has non-increasing nonnegative entries.
 Therefore $\Al^+ = VD\twomat{U\Sigma D}{\sqrt{\lambda}VD}^\top$. We have
  \begin{equation}\label{eq Alb}
  \Al^+\hb = VD^2\Sigma U^\top b = V\Sigma D^2 U^\top b,
  \end{equation}
  using that $\Sigma$ and $D$ are diagonal matrices.

By the well-known expression $x^*=A^\top(AA^\top + \lambda\Iden_\rows)^{-1} b$, and using the \emph{not}-thin SVD $A=\hU\hSigma\hV^\top$, with $\hSigma\in\R^{\rows\times\colms}$ and $\hU$ and $\hV$ orthogonal matrices,
\begin{equation}
\begin{split}
 x^*
 	   & = \hV\hSigma \hU^\top(\hU\hSigma\hSigma^\top\hU^\top + \lambda\hU\hU^\top)^{-1}b
	     \\
	    & = \hV\hSigma \hU^\top\hU(\hSigma\hSigma^\top + \lambda\Iden_\rows)^{-1}\hU^\top b\nonumber
	\\ 
	& = \hV\hSigma (\hSigma\hSigma^\top + \lambda\Iden_\rows)^{-1}\hU^\top b
	\\
	     &  = V\Sigma(\Sigma^2 + \lambda\Iden_\rA)^{-1} U^\top b\\
	     &
	      = V\Sigma D^2 U^\top b,\label{eq x*Sig}
\end{split}
\end{equation}
where the next-to-last step uses that $\hSigma$ is zero except for the top $\rA$ diagonal entries of $\hSigma$.
Comparing \eqref{eq Alb} and \eqref{eq x*Sig}, we have  $\Al^+\hb = x^*$.  Using the expression for $\Al^+$,
  $\norm{\Al^+}^2 = D^2_{1,1} = 1/(\lambda + 1/\norm{A^+}^2)$.
Finally, since $(\Al)_{i,*} = A_{i,*}$ for $i\in[\rows]$, and letting $\hat U = \twomat{U\Sigma D}{\sqrt{\lambda}VD}$,
\begin{equation*}
\begin{split}
\norm{A_{i,*}}^2  \norm{\Al^+}^2
	   & = \norm{(\Al)_{i,*}}^2  \norm{\Al^+}^2\\
	   & \ge \norm{(\Al)_{i,*}\Al^+}^2
	\\ 
	& = \norm{\hat U_{i,*} D^{-1} V^\top V D \hat U^\top}^2\\
	& = \norm{\hat U_{i,*}}^2 \\
	& = \norm{U_{i,*}\Sigma D}^2,
\end{split}
\end{equation*}
as claimed.
\end{proof}

\begin{definition}[Ridge Leverage-score Sample, Statistical Dimension]
 Let $A,\lambda,\Al$, and $D$ be as in Lemma \ref{lem Al}. Call ${\cal S} \subset [\rows]$
 a \emph{ridge leverage-score sample} of $A$ if each $i\in\cal S$ is chosen independently
 with probability at least $\norm{U_{i,*}\Sigma D}^2/\sd_\lambda(A)$, where the
 statistical dimension
 $\sd_\lambda(A) = \norm{U\Sigma D}_F^2 = \sum_{i\in [d]} \sigma_i^2/(\lambda+\sigma_i^2)$,
  recalling that $U\Sigma D$ comprises the top $n$ rows of the left singular matrix of $\Al$.
\end{definition}

\begin{lemma}[Length-squared sketch, Lemma \ref{prelim:lem Lsquare} restated]\label{lem Lsquare}
Given a matrix $A \in \mathbb{R}^{n \times d}$ and a sample size parameter $\samplesize \in [n]$,  let $m = O\left(\samplesize\norm{\Al^+}^2\norm{A}_F^2/\sd_\lambda(A)\right)$. Then, with probability at least $99/100$, 
 the set of $m$ length-squared samples contains a ridge leverage-score sample
 of $A$ of size $\samplesize$.
%
\end{lemma}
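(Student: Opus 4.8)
The plan is to show that length-squared sampling, which picks row $i$ with probability $p_i = \norm{A_{i,*}}^2/\norm{A}_F^2$, dominates (up to the stated scaling factor) the ridge-leverage distribution $\ell_i \defeq \norm{U_{i,*}\Sigma D}^2/\sd_\lambda(A)$, so that a large enough batch of length-squared samples contains an embedded ridge-leverage sample. The key inequality is already essentially proved in Lemma~\ref{lem Al}: $\norm{A_{i,*}}^2\norm{\Al^+}^2 \ge \norm{U_{i,*}\Sigma D}^2$ for every $i\in[\rows]$. Dividing both sides by $\norm{A}_F^2$ and then by $\sd_\lambda(A)$, we get
\[
p_i = \frac{\norm{A_{i,*}}^2}{\norm{A}_F^2} \ge \frac{\norm{U_{i,*}\Sigma D}^2}{\norm{\Al^+}^2\norm{A}_F^2} = \frac{\sd_\lambda(A)}{\norm{\Al^+}^2\norm{A}_F^2}\cdot \ell_i .
\]
So $p_i \ge \ell_i / \rho$ where $\rho \defeq \norm{\Al^+}^2\norm{A}_F^2/\sd_\lambda(A)$ is exactly the overhead factor in the statement of $m$.

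Next I would formalize the ``contains a ridge-leverage sample'' claim via a standard coupling / rejection-sampling argument. Fix the target sample size $\samplesize$. Draw $m = c\,\samplesize\,\rho$ i.i.d. length-squared samples for a suitable constant $c$. For each draw of index $i$, independently keep it with probability $\frac{\ell_i/\rho}{p_i} \in (0,1]$ (well-defined by the inequality above, and at most $1$ since $\rho \ge 1$: indeed $\norm{\Al^+}^2\norm{A}_F^2 \ge \norm{\Al^+}^2\norm{A}^2 \ge$ the top ridge-leverage contribution summed, i.e. $\ge \sd_\lambda(A)$ because each $\sigma_i^2/(\lambda+\sigma_i^2)\le \norm{A}^2\norm{\Al^+}^2$ termwise — I'd double-check this last step). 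A kept index is then distributed exactly according to $\ell_i$, i.e. it is one draw of a ridge-leverage sample. The number of kept indices among the $m$ draws is a sum of independent Bernoullis with mean $m\cdot\frac1\rho = c\,\samplesize$, so by a Chernoff bound it is at least $\samplesize$ with probability at least $99/100$ once $c$ is a large enough absolute constant. On that event the $m$ length-squared samples contain (as a sub-multiset, after the thinning) a ridge-leverage sample of size $\samplesize$, which is exactly the conclusion. Plugging $\norm{\Al^+}^2 = 1/(\lambda+1/\norm{A^+}^2)$ from Lemma~\ref{lem Al} gives the clean form $m = O(\samplesize\,\norm{A}_F^2/((\lambda+1/\norm{A^+}^2)\,\sd_\lambda(A)))$, matching the statement.

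The only genuinely delicate point is verifying $\rho \ge 1$ (equivalently that the thinning probabilities are legitimate), and more importantly making precise the sense in which a multiset of length-squared samples ``contains'' a ridge-leverage sample — the cleanest way is the coupling above, where we never actually run the thinning algorithmically but merely use it to certify the existence of the desired sub-sample with high probability; downstream results (subspace embeddings for $\Al$) only need that such a sub-sample exists inside what we drew. I expect the rest to be routine: one termwise comparison, one application of Lemma~\ref{lem Al}, and one Chernoff bound. If one wants the sharper guarantee that the sub-sample's reweighting is the ridge-leverage reweighting, that also follows from the coupling since conditioned on being kept, index $i$ has probability proportional to $p_i\cdot(\ell_i/\rho)/p_i = \ell_i/\rho \propto \ell_i$.
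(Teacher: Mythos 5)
Your approach matches the paper's: the single key inequality is $\norm{A_{i,*}}^2\norm{\Al^+}^2 \ge \norm{U_{i,*}\Sigma D}^2$ from Lemma~\ref{lem Al}, which is exactly what gives $p_i \ge \ell_i/\rho$ with $\rho = \norm{\Al^+}^2\norm{A}_F^2/\sd_\lambda(A)$. The paper's own proof stops essentially there: it compares the expected number of appearances of each $i$ under $m$ length-squared draws against the corresponding expectation for a size-$\samplesize$ ridge leverage-score sample and invokes ``oversampling does no harm.'' Your rejection-sampling coupling makes this precise in a way the paper does not: draw $m$ length-squared indices, thin each with probability $\ell_i/(\rho p_i)\le 1$, observe that kept indices are exactly ridge leverage-score draws, and apply Chernoff to certify $\ge \samplesize$ kept indices with probability $99/100$. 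That is the clean way to justify the ``contains a sub-sample'' phrasing, and it is more careful than the paper's one-line expectation comparison.

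The one step you flagged is indeed wrong as written. The chain $\norm{\Al^+}^2\norm{A}_F^2 \ge \norm{\Al^+}^2\norm{A}^2 \ge \sd_\lambda(A)$ fails: termwise you do have $\sigma_i^2/(\lambda+\sigma_i^2) \le \sigma_1^2/(\lambda+\sigma_k^2) = \norm{A}^2\norm{\Al^+}^2$, but summing over the $k$ nonzero singular values only gives $\sd_\lambda(A) \le k\,\norm{A}^2\norm{\Al^+}^2$, not $\sd_\lambda(A)\le \norm{A}^2\norm{\Al^+}^2$. Go through $\norm{A}_F^2$ directly:
\[
\norm{\Al^+}^2\norm{A}_F^2 = \sum_i \frac{\sigma_i^2}{\lambda+\sigma_k^2} \ge \sum_i \frac{\sigma_i^2}{\lambda+\sigma_i^2} = \sd_\lambda(A),
\]
termwise from $\lambda+\sigma_k^2\le \lambda+\sigma_i^2$. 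Separately, note that $\ell_i/(\rho p_i)\le 1$ follows immediately from $p_i\rho\ge \ell_i$ (Lemma~\ref{lem Al}) and does not rest on $\rho\ge 1$ at all; $\rho\ge 1$ is only needed so that $m = O(\samplesize\rho)$ is at least $\samplesize$, and the display above settles that.
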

Note that when $\lambda=0$, $\norm{\Al^+}=\norm{A^+}$, $\sd_0(A) = \rank(A)$,
and the ridge leverage-score samples are leverage-score samples.

\begin{proof}
We will show that $\hL$ contains within it a leverage-score sketching matrix; since
oversampling does no harm, this implies the result using the above lemma.
 
Using the thin SVD $A=U\Sigma V^\top$, and $A^+ = V\Sigma^+ U^\top$,
we have
\[
	\norm{A_{i,*}}\norm{A^+} 
		\ge \norm{A_{i,*}A^+}
		= \norm{U_{i,*} \Sigma V^\top V\Sigma^+ U^\top}
		= \norm{U_{i,*}},
\]
The expected number of times index $i\in [n]$ is chosen among $m_\hL$
length-squared samples, $p_i m_\hL$, is within a constant factor of $
\frac{\norm{A_{i,*}}^2}{\norm{A}_F^2} \samplesize \norm{\Al^+}^2\norm{A}_F^2/\sd_\lambda(A)
	\ge  \frac{\norm{U_{i,*}}^2}{\sd_\lambda(A)}\samplesize,$
using Lemma~\ref{lem Al},
an expectation at least as large as for a ridge leverage-score sample of size $\samplesize$.

\end{proof}

Finally, recall the Johnson-Lindenstraus Lemma, for sketching with a dense Gaussian matrix. 

\begin{lemma}[Johnson-Lindenstraus Lemma]\label{lem JL}
 For given $\eps>0$,
 if $P\subset \R^c$ is a set of $m\ge c$ vectors, and $G\in\R^{m\times c}$ has entries
 that are independent Gaussians with mean zero and variance $1/m$, then there is $m= O(\eps^{-2}\log(m/\delta))$ 
 such that with failure probability $\delta$,
 $\norm{Gx} = (1\pm\eps)\norm{x}$ for all $x\in P$. Moreover, there is $m_G=O(\mu)$
 so that $\norm{Gx}\ge \norm{x}/\rows^{1/\mu}$, with failure probability at most $1/\rows^2$.
\end{lemma}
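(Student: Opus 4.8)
The plan is the classical argument: reduce both statements to tail bounds for a chi-squared random variable. Fix a nonzero $x\in\R^c$. Since $(Gx)_j=\sum_{i\in[c]}G_{ji}x_i$ is a linear combination of independent mean-zero Gaussians, it is distributed as $\mathcal N(0,\norm{x}^2/m)$, and the $m$ coordinates are mutually independent because the rows of $G$ are independent. Hence $m\norm{Gx}^2/\norm{x}^2$ is a $\chi^2$ variable with $m$ degrees of freedom, and both claims reduce to how this variable deviates from its mean~$m$.

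For the first (two-sided) claim I would invoke the Laurent--Massart estimates ($\cite{laurent2000}$, Lemma~1), namely $\Prob[\chi^2_m\ge m+2\sqrt{mt}+2t]\le e^{-t}$ and $\Prob[\chi^2_m\le m-2\sqrt{mt}]\le e^{-t}$. Choosing $t=\Theta(\log(|P|/\delta))$ and $m=\Theta(\eps^{-2}\log(|P|/\delta))$ makes $2\sqrt{mt}+2t\le\eps m$, so $\norm{Gx}^2=(1\pm\eps)\norm{x}^2$ holds for this fixed $x$ with failure probability at most $\delta/|P|$; a union bound over the $|P|$ vectors of $P$ (the ``$m$'' in the statement) gives the bound simultaneously on all of $P$ with failure probability $\delta$, and rescaling $\eps$ by a constant converts the guarantee on $\norm{Gx}^2$ to one on $\norm{Gx}$.

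For the second claim only a one-sided lower bound is needed, and the target slack $\rows^{-1/\mu}$ is so generous that a constant number $m_G=\Theta(\mu)$ of columns suffices -- but here Laurent--Massart concentration is of no help, since its deviation term only drops below the mean once $m_G=\Omega(\log\rows)$. Instead I would use the small-ball (Chernoff/MGF) bound $\Prob[\chi^2_k\le k\alpha]\le(\alpha e^{1-\alpha})^{k/2}\le(e\alpha)^{k/2}$ for $0<\alpha\le1$, obtained by optimizing Markov's inequality applied to $e^{-s\chi^2_k}$ and using $\E[e^{-s\chi^2_k}]=(1+2s)^{-k/2}$. Setting $k=m_G$ and $\alpha=\rows^{-2/\mu}$, the failure probability $\Prob[\norm{Gx}^2<\norm{x}^2\rows^{-2/\mu}]$ is at most $(e\,\rows^{-2/\mu})^{m_G/2}=e^{m_G/2}\rows^{-m_G/\mu}$, which is at most $\rows^{-2}$ once $m_G$ is a large enough constant multiple of $\max\{\mu,\log\rows\}$; in the regime where this lemma is applied, $\mu=O(\log\rows)$ (otherwise $\rows^{-1/\mu}$ is within a constant of $1$ and $\Theta(\log\rows)=O(\mu)$ columns still do the job), so $m_G=O(\mu)$ as claimed. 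The only genuine obstacle is this second part: one must avoid the reflex to reach for the standard JL concentration and instead realize that a very weak ($\rows^{-1/\mu}$) lower bound can be extracted from a constant-dimensional Gaussian sketch via the anti-concentration tail of $\chi^2$; the distributional reduction and the union bound are routine.
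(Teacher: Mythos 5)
Your approach is the classical and correct one, and the paper itself gives no proof at all of Lemma~\ref{lem JL}: it is stated as a standard fact (the paper does cite the Laurent--Massart bound \cite{laurent2000} elsewhere, but only to control $\norm{G}_F^2$, not to prove this lemma). So there is no ``paper proof'' to compare against; your job was to fill a gap the authors left implicit, and you identify the right tools.

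For the first claim, the reduction ``$m\norm{Gx}^2/\norm{x}^2 \sim \chi^2_m$, apply Laurent--Massart, union bound over $P$'' is exactly right. For the second claim you correctly observe that concentration is the wrong tool and that what is actually needed is an \emph{anti}-concentration / small-ball bound for $\chi^2$, which you pull from the MGF; this is the genuine insight. One quantitative slip worth fixing: after weakening $(\alpha e^{1-\alpha})^{k/2}$ to $(e\alpha)^{k/2}$ you conclude that $m_G$ must be a constant multiple of $\max\{\mu,\log\rows\}$, and the parenthetical ``otherwise $\Theta(\log\rows)$ columns still do the job when $\mu>\log n$'' is not correct as stated — when $\mu$ is large, $\alpha=n^{-2/\mu}$ is close to $1$, $e\alpha>1$, and $(e\alpha)^{k/2}$ does not decay at all, so $\Theta(\log\rows)$ columns do \emph{not} rescue the argument. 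In fact if you keep the sharper form $(\alpha e^{1-\alpha})^{m_G/2}$ and solve directly, you get that $m_G=O(\mu)$ already suffices whenever $\mu=O(\log\rows)$ (e.g.\ $m_G\ge 4\mu$ does it for $\mu\le\ln\rows$), with no need for the $\max$, while for $\mu\gg\log\rows$ the claim genuinely fails (one needs $m_G=\Omega(\mu^2/\log\rows)$). So the correct scoping of the lemma is that it holds only under the implicit assumption $\mu=O(\log\rows)$ — which is indeed how the paper uses it, since $\mu$ plays the role of $\lambda_s$ and the running-time term $\rows^{1/\lambda_s}$ only makes sense for $\lambda_s=O(\log\rows)$. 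With that adjustment your argument is complete and correct.
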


\begin{definition}[Projection-Cost Preserving Sketch]
\label{def PCP}
Given a matrix $A \in \mathbb{R}^{n \times d}$, $\eps>0$ and an integer $k\in [d]$, a sketch $SA \in\mathbb{R}^{s \times d} $ is an $(\eps,k)$-column projection-cost preserving sketch of $A$ if for all rank-$k$ projection matrices $P$, $(1-\eps)\|A (I - P) \|_F^2\leq \| SA(I - P) \|^2_F \leq (1+\eps) \|A (I - P) \|_F^2.$
\end{definition}

There are several constructions of projection-cost preserving sketches known in the literature, starting with the work of Cohen et. al. ~\cite{cohen2015dimensionality, cohen2017input}. For our purposes, it suffices to use Theorem 1 from \cite{cohen2017input}.



  
We can use the following lemma to translate from prediction error to solution error for regression problems. 

\begin{lemma}\label{lem p->x}
Let $\gamma_{A,b} = \frac{\norm{b}}{\norm{AA^+b}}$.
 Recall that $\kappa(A) = \norm{A} \norm{A^+}$.
 Suppose $\tx \in\colspan( A^\top)$, and for some $\eps_p\in (0,1)$, $\norm{A\tx - b} ^2 \le (1+\eps_p)\norm{\xi^*}^2$ holds, where $\xi^*= Ax^*-b$.
Then
 \begin{equation}\label{eq p->x}
 \norm{\tx-x^*} \le 2\sqrt{\eps_p} \norm{A^+} \norm{\xi^*} \le  2\sqrt{\eps_p} \sqrt{\gamma_{A,b}^2-1} \kappa(A) \norm{x^*}.
 \end{equation}
 This extends to multiple response regression using
 $\gamma_{A,B}^2 \defeq \frac{\norm{B}_F^2}{\norm{AA^+B}_F^2}$,
 by applying column by column to $B$, and extends to ridge regression, that is,
 $\Al$ with $\hB=\twomat{B}{0_{\colms\times d'}}$, as well.
\end{lemma}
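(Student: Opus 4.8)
The plan is to reduce the solution-error bound to two Pythagorean decompositions — one of the residual and one of $b$ — and then to convert the resulting bound on $\norm{A(\tx-x^*)}$ into a bound on $\norm{\tx-x^*}$ using that $\tx-x^*$ lies in $\colspan(A^\top)$.

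First I would record the structural facts. Since $x^*=A^+b$ is the minimum-norm least-squares solution, it lies in $\colspan(A^\top)$, the residual $\xi^*=Ax^*-b=(AA^+-I)b$ is orthogonal to $\colspan(A)$, and $AA^+b$ is orthogonal to $\xi^*$. Writing $A\tx-b=A(\tx-x^*)+\xi^*$ and using the hypothesis $\tx\in\colspan(A^\top)$ — so that $\tx-x^*\in\colspan(A^\top)$ and hence $A(\tx-x^*)\in\colspan(A)$ is orthogonal to $\xi^*$ — Pythagoras gives $\norm{A\tx-b}^2=\norm{A(\tx-x^*)}^2+\norm{\xi^*}^2$. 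Combined with $\norm{A\tx-b}^2\le(1+\eps_p)\norm{\xi^*}^2$ this yields $\norm{A(\tx-x^*)}^2\le\eps_p\norm{\xi^*}^2$.

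Next I would pass back from $A(\tx-x^*)$ to $\tx-x^*$: since $A^+A$ is the orthogonal projection onto $\colspan(A^\top)$, any $y\in\colspan(A^\top)$ satisfies $\norm{y}=\norm{A^+Ay}\le\norm{A^+}\norm{Ay}$, i.e. $\norm{Ay}\ge\norm{y}/\norm{A^+}$. Applying this to $y=\tx-x^*$ gives $\norm{\tx-x^*}\le\norm{A^+}\norm{A(\tx-x^*)}\le\sqrt{\eps_p}\,\norm{A^+}\norm{\xi^*}$, which is the first claimed inequality (the stated $2\sqrt{\eps_p}$ is just harmless slack). For the second inequality I would use the other Pythagorean split $\norm{b}^2=\norm{AA^+b}^2+\norm{\xi^*}^2$, giving $\norm{\xi^*}^2=(\gamma_{A,b}^2-1)\norm{AA^+b}^2$, together with $\norm{AA^+b}=\norm{Ax^*}\le\norm{A}\norm{x^*}$; substituting yields $\norm{A^+}\norm{\xi^*}\le\sqrt{\gamma_{A,b}^2-1}\,\norm{A^+}\norm{A}\norm{x^*}=\sqrt{\gamma_{A,b}^2-1}\,\kappa(A)\norm{x^*}$, and chaining with the previous bound finishes the base case.

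Finally I would treat the two extensions. For multiple-response regression both Pythagorean identities hold columnwise, hence in Frobenius norm: $\norm{A\tX-B}_F^2=\norm{A(\tX-X^*)}_F^2+\norm{\Xi^*}_F^2$ and $\norm{B}_F^2=\norm{AA^+B}_F^2+\norm{\Xi^*}_F^2$, and $\norm{\tX-X^*}_F\le\norm{A^+}\norm{A(\tX-X^*)}_F$ follows by summing the columnwise bound over columns of $\tX-X^*$ (each lying in $\colspan(A^\top)$); the rest goes through verbatim with $\gamma_{A,B}$ in place of $\gamma_{A,b}$. For ridge regression I would invoke Lemma~\ref{lem Al}: the ridge loss with data $(A,\hb)$ equals the plain least-squares loss $\norm{\Al x-\hb}^2$ for $x\in\colspan(V)=\colspan(A^\top)=\colspan(\Al^\top)$, with the same optimum, so applying the already-proved statement to $(\Al,\hb)$ gives the ridge version directly, now with $\Al^+$, $\kappa(\Al)$, and $\gamma_{\Al,\hb}$ replacing the plain quantities. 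The only point requiring care — and the one place I expect to be delicate — is keeping the subspace memberships consistent so the Pythagorean and projection steps are legitimate: that $\tx-x^*\in\colspan(A^\top)$ (which is exactly why the hypothesis $\tx\in\colspan(A^\top)$ is needed) and that $\colspan(\Al^\top)=\colspan(A^\top)$ in the reduction; everything else is a two-line computation.
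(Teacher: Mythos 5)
Your proof is correct and takes essentially the same route as the paper's: the Pythagorean decomposition $\norm{A\tx-b}^2=\norm{A(\tx-x^*)}^2+\norm{\xi^*}^2$ (valid because $\tx-x^*\in\colspan(A^\top)$ makes $A(\tx-x^*)\perp\xi^*$), the projection identity $\tx-x^*=A^+A(\tx-x^*)$ to convert the prediction-error bound to a solution-error bound via $\norm{A^+}$, and the split $\norm{b}^2=\norm{AA^+b}^2+\norm{\xi^*}^2$ together with $\norm{AA^+b}=\norm{Ax^*}\le\norm{A}\norm{x^*}$ for the second inequality; the reduction of the ridge case via Lemma~\ref{lem Al} and $\colspan(\Al^\top)=\colspan(V)=\colspan(A^\top)$ is also how the paper's remark is meant to be read. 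Your side-observation that the factor $2$ is slack is consistent with the hypothesis exactly as stated (squared form), whereas the paper's proof implicitly uses the unsquared hypothesis $\norm{A\tx-b}\le(1+\eps_p)\norm{\xi^*}$, arriving at $4\eps_p$ and hence $2\sqrt{\eps_p}$ — either way the stated inequality holds.
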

 
Note that $x\in\colspan(A^\top)=\colspan(V)$ is no loss of generality, because
the projection $VV^\top x$ of $x$ onto $\colspan(A^\top)$ has $AVV^\top x = Ax$ and
$\norm{VV^\top x}\le \norm{x}$. So $\argmin_x \norm{Ax-b}^2+\lambda\norm{x}$
must be in $\colspan(A^\top)$ for $\lambda$ arbitrarily close to zero, and $A^+b\in\colspan(A^\top)$.

For the ridge problem $\min_x \norm{\Al x - \hb}$, we have
$\norm{\xi^*}^2 = \norm{\Al x^* - \hb} = \norm{Ax^* - b}^2 + \lambda\norm{x^*}^2$,
and recalling from Lemma~\ref{lem Al} that, when $A$ has SVD $A=U\Sigma V^\top$,
 $\Al$ has singular value matrix $D^{-1}$, where
$D= (\Sigma^2 + \lambda\Iden_\rA)^{-1/2}$, so that
$\kappa(\Al)^2 = (\lambda + \sigma_1^2)/(\lambda + \sigma_\rA^2)$,
where $A$ has singular values $\sigma_1,\ldots,\sigma_\rA$,
with $\rA=\rank(A)$.

\begin{proof}
 Since $x^* = A^+ b = A^\top (AA^\top)^+ b \in \colspan A^\top$, we have $\tx - x^* = A^\top z \in\colspan A^\top$,
 for some $z$.
 Since $A^+AA^\top = A^\top$, we have $\tx - x^* = A^\top z = A^+AA^\top z =   A^+A(\tx-x^*)$.
 From the normal equations for regression
 and the Pythagorean theorem,
 \[
 \norm{A(\tx - x^*)}^2 = \norm{A\tx - b}^2  - \norm{Ax^*-b}^2\le 4\eps_p\norm{\xi^*}^2,
 \]
 using $ \norm{A\tx - b} \le (1+\eps_p)\norm{\xi^*}$ and $\eps_p<1$. Therefore,
 using also submultiplicativity of the spectral norm,
 \begin{align}
 \norm{\tx - x^*}^2
 	   & = \norm{A^+A(\tx-x^*)}^2 \nonumber
	\\ & \le \norm{A^+}^2\norm{A(\tx - x^*)}^2 \nonumber
	\\ & \le  \norm{A^+}^2 4\eps_p\norm{\xi^*}^2, \label{eq dt*}
\end{align}
and the first inequality of \eqref{eq p->x} follows. For the second, we bound
\[
\frac{\norm{\xi*}^2}{\norm{x^*}^2}
	= \frac{\norm{b}^2 - \norm{AA^+b}^2}{\norm{A^+b}^2}
    = \frac{(\gamma_{A,b}^2-1)\norm{AA^+b}^2}{\norm{A^+b}^2}
	\le (\gamma_{A,b}^2-1)\norm{A}^2
\]
so from \eqref{eq dt*}, we have
\[
 \norm{\tx - x^*}^2 \le \norm{A^+}^2 4\eps_p\norm{\xi^*}^2 \le \norm{A^+}^2 4\eps_p\norm{x^*}^2  (\gamma_{A,b}^2-1)\norm{A}^2,
\]
and the second inequality of \eqref{eq p->x} follows, using the definition of $\kappa(A)$.
\end{proof}

\section{Dynamic Data Structures for Ridge Regression}
\label{sec:ridge_proof}

The data structure of Lemma~\ref{lem weighted tree} is simply a complete binary tree with $\ell$ leaves, each leaf
with weight $u_i$, and each internal node with weight equal to the sum of the weights of its children.
Sampling is done by walking down from the root, choosing left or right children with probability
proportional to its weight. Insertion and deletion are done by inserting or deleting the leaf $z$ that preserves the complete binary tree property, and updating the weights of its ancestors; in the case of deletion, first the leaf weight to be deleted is swapped with that of $z$, updating weights of ancestors. We also refer the reader to a more detailed description in~\cite{tang2019quantum, gst20}.

\begin{proof}[Proof of Lemma~\ref{lem dynsamp}]
Use Lemma~\ref{lem weighted tree} for the first part.
For the second, with a matrix $S$,
construct the data structure of Lemma~\ref{lem weighted tree}
for the row lengths of $SA$, in $O(m\lognd)$ time. To sample,
pick $i^*\in [m]$ with probability $\norm{(SA)_{i, *}}^2/\norm{SA}_F^2$,
using the newly constructed data structure. Then 
pick $j\in [\colms]$ with probability $(SA)_{i^*j}^2/\norm{(SA)_{i^*, *}}^2$.
Adding the probabilities across the choices of $i^*$,
the probability of choosing index $j$ is $\norm{(SA)_{*,j}}^2/\norm{SA}_F^2$,
as claimed.

Once a column is chosen, the time to determine the corresponding column length $\norm{(SA)_{*,j}}$
 is $O(m\lognd)$,
finding each $(SA)_{ij}$ for $i\in [m]$ in $O(\lognd)$ time. 
\end{proof}

We first re-state our theorem for dynamic ridge regression, before giving its proof.

\begin{theorem}[Theorem \ref{thm dyn ridgereg inf} restated]\label{thm dyn lowrank}
Given  matrices $A\in\R^{\rows\times\colms}, B\in\R^{\rows\times d'}$ and $\lambda>0$, let $X^* = \argmin_X \norm{AX-B}_F^2 + \lambda \norm{X}_F^2$. Let $\psi_\lambda = \norm{A}_F^2/(\lambda + \sigma_\rA^2)$, $\kappa_\lambda = (\lambda + \sigma_1(A)^2)/(\lambda + \sigma_\rA(A)^2)$ and 
$\hkap= \sqrt{(\lambda + \hsig_1^2)/(\lambda + \hsig_\rA^2)}$, where $\hsig_1$ and $\hsig_\rA$ are over and under estimates of $\sigma_1$ and $\sigma_\rA$ respectively.
Then, there exists a data structure that maintains $Y \in\R^{\colms\times d'}$ such that with probability at least $99/100$, 
\begin{equation*}
    \norm{Y - X^*}_F
    \le \left(\eps + 2\gamma \eps \right) \norm{X^*}_F  + \frac{\eps}{\sqrt{\lambda}}\norm{U_{k,\perp}B}_F,
\end{equation*}
where $\gamma^2 = \frac{\norm{B}_F^2}{\norm{A A^+ X^*}_F^2}$.
 Further, an entry $Y_{ij}$ for given $i,j$ can be computed in $
 O(m_S\lognd) = O(\eps^{-2}\hkap^2\psi_\lambda(\lognd)^2)
$ time.
 The time taken to compute $Y$ is $
\tO(\colms'\eps^{-4} \hkap^2 \psi_\lambda^2 \kappa_\lambda \log(\colms)).$


\end{theorem}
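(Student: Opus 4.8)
The plan is to read Algorithm~\ref{alg:dyn ridgereg} as sketch-and-solve for the block least-squares reformulation of ridge regression, with the length-squared \emph{row} sample $S$ acting as a ridge leverage-score sample (the factor $\hkap^2$ in $m_S$ paying for the gap between the two distributions) and the length-squared \emph{column} sample $R$ merely perturbing the normal-equation matrix. First I would invoke the Block SVD of Lemma~\ref{lem Al}: $X^*=\argmin_X\norm{AX-B}_F^2+\lambda\norm X_F^2=\argmin_X\norm{\Al X-\hat B}_F^2=\Al^+\hat B$ with $\hat B=\twomat{B}{0}$, and record $\norm{\Al^+}=Z_\lambda:=1/\sqrt{\lambda+\sigma_\rA^2}$, $\kappa(\Al)^2=\kappa_\lambda$, and $\psi_\lambda=Z_\lambda^2\norm A_F^2$; since $\hsig_1\ge\sigma_1$ and $\hsig_\rA\le\sigma_\rA$, the $Z_\lambda,\hkap$ used by the algorithm are within constant factors of the true ones, so I argue with the true values. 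I would also note $Y=A^\top S^\top(SARR^\top A^\top S^\top+\lambda\Iden_{m_S})^{-1}SB$, and that with $R=\Iden$ this equals $\bar X:=\argmin_X\norm{SAX-SB}_F^2+\lambda\norm X_F^2$ by the push-through identity $(SA)^\top(SA(SA)^\top+\lambda\Iden)^{-1}=((SA)^\top SA+\lambda\Iden)^{-1}(SA)^\top$; thus the two error sources are $S$ and $R$.

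Next I would control the row sketch. Choosing the sample-size parameter $\samplesize$ in Lemma~\ref{lem Lsquare} so that $\samplesize=\Theta(\eps_0^{-2}\sd_\lambda(A)\log\sd_\lambda(A))$ with $\eps_0:=c\eps/\hkap$, that lemma's sample bound $O(\samplesize\norm{\Al^+}^2\norm A_F^2/\sd_\lambda(A))=O(\eps_0^{-2}Z_\lambda^2\norm A_F^2\log\colms)=O(\eps^{-2}\hkap^2\psi_\lambda\log\colms)$ matches $m_S$; hence $S$ contains a ridge leverage-score sample of $A$ of size $\samplesize$, and since oversampling from a distribution that dominates (up to scaling) the ridge-leverage distribution is harmless, with probability $\ge 99/100$ the matrix $S$ is an $\eps_0$-accurate ridge subspace embedding for $A$ and satisfies the approximate-matrix-product bound that sketch-and-solve needs. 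Standard ridge sketch-and-solve analysis then gives, on the directions with $\sigma_i\ge\sqrt\lambda$, a clean $O(\eps_0)$-relative prediction error, so that (as in Lemma~\ref{lem p->x} applied to the block system, using $\bar X-X^*\in\colspan(\Al^\top)$ so $\norm{\bar X-X^*}_F\le\norm{\Al^+}\norm{\Al(\bar X-X^*)}_F\le O(\eps_0 Z_\lambda)\norm{\xi^*}_F$ with $\norm{\xi^*}_F^2=\norm{AX^*-B}_F^2+\lambda\norm{X^*}_F^2$) and using $\norm{\xi^*}_F\le\sqrt{\gamma^2-1}\,\norm{\Al X^*}_F\le\sqrt{\gamma^2-1}\sqrt{\lambda+\sigma_1^2}\norm{X^*}_F$ together with $\norm{\Al^+}\sqrt{\lambda+\sigma_1^2}=\sqrt{\kappa_\lambda}=\hkap$, this part contributes $O(\eps_0\hkap\gamma)\norm{X^*}_F=O(\eps\gamma)\norm{X^*}_F$ to the solution error.

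On the directions with $\sigma_i<\sqrt\lambda$ (equivalently, singular values of $SA$ below $\sqrt\lambda$, which agree with those of $A$ up to $(1\pm\eps_0)$ where it matters), both $X^*$ and the sketched solution are dominated by the $\lambda$-regularization and have norm $\lesssim\frac1{\sqrt\lambda}\norm{U_{\lambda,\perp}B}_F$; a finer argument — using that the ridge embedding still preserves the dominant $\lambda$-part there — bounds their difference on that subspace by $\frac{O(\eps)}{\sqrt\lambda}\norm{U_{\lambda,\perp}B}_F$, the ``part of $B$ we give up.'' For the column sketch I would use the spectral approximate-matrix-multiplication bound for length-squared sampling: since $R$ samples $m_R=\tO(\eps^{-2}\psi_\lambda)$ columns of $SA$ and, by the embedding, $\norm{SA}_F^2=O(\norm A_F^2)$ and $\norm{SA}^2=O(\sigma_1^2)$, the perturbation $\Delta=SARR^\top A^\top S^\top-SAA^\top S^\top$ (which lives on $\colspan((SA)^\top)$) is small enough that changing $A^\top S^\top(SAA^\top S^\top+\lambda\Iden)^{-1}SB$ to $A^\top S^\top(SARR^\top A^\top S^\top+\lambda\Iden)^{-1}SB$ stays within the same $O(\eps\gamma)\norm{X^*}_F+\frac{O(\eps)}{\sqrt\lambda}\norm{U_{\lambda,\perp}B}_F$ budget, the part of $\Delta$ along the bottom singular space of $SA$ again being charged to the second term. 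A union bound over the $O(1)$ failures of $S$, $R$, and the constructions in Lemma~\ref{lem dynsamp}, plus a rescaling of $\eps$, then gives $\norm{Y-X^*}_F\le\norm{Y-\bar X}_F+\norm{\bar X-X^*}_F\le(\eps+2\gamma\eps)\norm{X^*}_F+\frac{\eps}{\sqrt\lambda}\norm{U_{\lambda,\perp}B}_F$.

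For the running-time and query claims: building $S,R$ costs $O((m_S+m_R)\lognd)$ by Lemma~\ref{lem dynsamp}; forming $SAR$ (an $m_S\times m_R$ matrix of rescaled entries of $A$) costs $O(m_Sm_R\lognd)$; and solving $(SARR^\top A^\top S^\top+\lambda\Iden)\tilde X=SB$ for its $\colms'$ right-hand sides by conjugate gradient takes $\tO(\kappa_\lambda)$ iterations (the effective condition number of the system is $O(\kappa_\lambda)$), each a matvec with $SAR$ and $(SAR)^\top$ costing $O(m_Sm_R)$, for a total $\tO(\colms'm_Sm_R\kappa_\lambda)=\tO(\colms'\eps^{-4}\hkap^2\psi_\lambda^2\kappa_\lambda\log\colms)$; and each output entry $Y_{ij}=\sum_{l\le m_S}(SA)_{l,i}\tilde X_{l,j}$ costs $O(m_S\lognd)=O(\eps^{-2}\hkap^2\psi_\lambda(\lognd)^2)$ since each $(SA)_{l,i}$ is a rescaled entry of $A$ available in $O(\lognd)$ time via $\DynSamp(A)$. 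I expect the main obstacle to be the small-$\lambda$ (small-$\sigma_\rA$) regime of the third paragraph: there the length-squared distribution is only within a factor $\kappa_\lambda$ of the ridge-leverage one and $\Delta$ is not small relative to $\lambda$ on the bottom singular directions of $SA$, so one must isolate those directions, charge their unavoidable error to $\frac{\eps}{\sqrt\lambda}\norm{U_{\lambda,\perp}B}_F$, and verify that a \emph{ridge} (rather than ordinary) subspace embedding of $A$ still suffices for sketch-and-solve; the precise accounting of the two factors of $\hkap$ — one spent on $\eps_0=\eps/\hkap$, one on the prediction-to-solution conversion, with the further $\sqrt{\gamma^2-1}$ coming from $\norm{\xi^*}_F$ — is where the stated error bound is won.
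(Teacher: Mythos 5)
Your overall architecture matches the paper's: reformulate via $\Al$ (Lemma~\ref{lem Al}), treat the length-squared row sample $S$ as an oversampled ridge leverage-score sample so that the row-sketched optimum $X_1 = \argmin_X\norm{SAX-SB}_F^2+\lambda\norm{X}_F^2$ is close to $X^*$, convert prediction error to solution error via Lemma~\ref{lem p->x} with $\eps_p=\eps^2/\hkap^2$, and then account for the column sketch $R$. Up through the prediction-to-solution step you track the paper closely; where you write "standard ridge sketch-and-solve analysis," the paper makes this precise by invoking Lemma~17 of \cite{ridge_reg}, whose two hypotheses --- a subspace-embedding bound on $U_1 = U\Sigma D$ and an approximate matrix product bound against the residual $B-AX^*$ --- are exactly the conditions you allude to, and which the paper verifies via the Rudelson--Vershynin operator-norm bound and the elementary length-squared matrix-multiplication variance bound.

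The genuine gap is in how you treat the column sketch $R$ and where the additive $\frac{\eps}{\sqrt\lambda}\norm{U_{\lambda,\perp}B}_F$ term is supposed to come from. You sketch a direct perturbation argument on the normal-equation matrix, controlling $\Delta=SARR^\top A^\top S^\top - SAA^\top S^\top$ by spectral AMM, and you yourself flag that on the bottom singular directions of $SA$ (singular values below $\sqrt\lambda$) $\Delta$ is not small relative to $\lambda$, leaving exactly the hard part unresolved. The paper avoids this entirely by observing that $A^\top S^\top\tX$ is the $t=1$ output of Algorithm~1 of \cite{chowdhury2018iterative} and invoking their Theorem~2, which is precisely the statement that splits at the $\sqrt\lambda$ threshold and yields the bound $\norm{A^\top S^\top\tX - X_1}_F \le \eps\bigl(\norm{X_1}_F + \frac{1}{\sqrt\lambda}\norm{U_{k,\perp}B}_F\bigr)$, provided $R$ gives $O(\eps^{-2}\sd_\lambda\log\sd_\lambda)$ ridge leverage-score samples of the columns of $SA$; the algorithm's choice of $m_R$ is calibrated via Lemma~\ref{prelim:lem Lsquare} to deliver exactly that. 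Your route is plausible in outline, but as written you have not established the perturbation bound on that bottom subspace, and that accounting is the substance of the cited theorem, not a footnote to it. To make your version go through you would need to reprove that split: show the column sample preserves the regression on the top-$p$ subspace of $SA$ (where $p$ is the number of singular values above $\sqrt\lambda$), and separately bound the contribution of the bottom subspace unconditionally by $\frac{1}{\sqrt\lambda}\norm{U_{\lambda,\perp}B}_F$.
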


\begin{proof} 
Let 
\[
X_1 = \argmin_{X\in\R^{\colms\times d'} }\norm{SAX-SB}_F^2 + \lambda \norm{X}_F^2.
\]
We first show that
\begin{equation}\label{eq x1close}
 \norm{X_1 - X^*}_F \le \eps\gamma_{\Al, \hB}\norm{X^*}_F,
\end{equation}
which follows from Lemma~\ref{lem p->x}, applied to $\Al$ and $\hB$,
after showing that, for $\eps_p = \eps^2/\hkap^2$,
$X_1$ satisfies
\begin{equation}\label{eq Sworks}
\norm{AX_1-B}_F^2 + \lambda \norm{X_1}_F^2 \le   (1+\eps_p/4)\Delta_*, \textrm{ where } \Delta_* = \norm{AX^*-B}_F^2 + \lambda \norm{X^*}_F^2,
\end{equation}
which in turn follows from Lemma~17 of \cite{ridge_reg}. That lemma considers a matrix $U_1$, comprising the
first $\rows$ rows of the left singular matrix of  $\hAl= \twomat{A}{\sqrt{\lambda} \Iden_\colms}$,
noting that the ridge objective can be expressed as $\min_X \norm{\hAl X - \twomat{B}{\mathbf{0}}}_F^2$.
The matrix $U_1=U\Sigma D$ in our terminology, as in Lemma~\ref{lem Al}, so the observations
of that lemma apply.

Lemma~17 of \cite{ridge_reg} requires that $S$ satisfies
\begin{equation}\label{eq embed mr}
\norm{U_1^\top S^\top S U_1 - U_1^\top U_1}\le 1/4,
\end{equation}
and
\begin{equation}\label{eq prod mr}
\norm{U_1^\top S^\top S(B-AX^*) - U_1^\top (B-AX^*)}_F \le \sqrt{\eps_p \Delta_*}.
\end{equation}
We have $\norm{\Al^+}^2 = 1/(\lambda + 1/\norm{A^+}^2) \le Z_\lambda$.
With the given call to \textsc{LenSqSample} to construct $S$, the number of rows
sampled is $m_S = O(\eps_p^{-1}Z_\lambda^2 \norm{A}_F^2\log(\colms))$,
so the expected number of times
that row $i$ of $A$ is sampled is, up to a factor of $O(\log\colms)$,
\[
\eps_p^{-1}Z_\lambda^2 \norm{A}_F^2 \frac{\norm{A_{i,*}}^2}{\norm{A}_F^2}
	= \eps_p^{-1}Z_\lambda^2 \norm{A_{i,*}}^2
	\ge \eps_p^{-1} \norm{(U_1)_{i,*}}^2
	= \eps_p^{-1}  \norm{U_1}_F^2 \frac{\norm{(U_1)_{i,*}}^2}{\norm{U_1}_F^2},
\]
and so row $i$ is sampled at least the expected number of times it would be sampled under
$\eps_p^{-1}  \norm{U_1}_F^2\log\colms$ rounds of  length-squared sampling of $U_1$.
As shown by Rudelson and Vershynin (\cite{rv07}, see also \cite{kannan2017randomized}, Theorem 4.4),
this suffices to have, with high probability, a bound on the normed expression in \eqref{eq embed mr}
of
\[
\frac{\norm{U_1}\norm{U_1}_F}{\sqrt{\eps_p^{-1}  \norm{U_1}_F^2}} = \sqrt{\eps_p}\norm{U_1}\le \sqrt{\eps_p},
\]
so by adjusting constant factors in sample size, \eqref{eq embed mr} holds, for small enough $\eps_p$.


To show that \eqref{eq prod mr} holds, we use the discussion of the basic matrix multiplication
algorithm discussed in \cite{kannan2017randomized}, Section 2.1, which implies that
\[
\E[\norm{U_1^\top S^\top S(B-AX^*) - U_1^\top (B-AX^*)}_F^2] \le \frac{\norm{U_1}_F^2\norm{B-AX^*}_F^2}{s}
\]
where $s$ is the number of length-squared samples of $U_1$.
Here $s=\eps_p^{-1}  \norm{U_1}_F^2\log\colms$, so \eqref{eq prod mr} follows
with constant probability by Chebyshev's inequality, noting that
$\norm{B-AX^*}_F\le \sqrt{\Delta_*}$.

Thus \eqref{eq embed mr} and  \eqref{eq prod mr} hold, so that by Lemma~17 of \cite{ridge_reg},
\eqref{eq Sworks} holds. We now apply Lemma~\ref{lem p->x}, which with \eqref{eq Sworks} and
$\eps_p = \eps^2/\hkap^2$, implies \eqref{eq x1close}.

Next we show that the (implicit) returned solution is close to the solution of \eqref{eq Sworks},
that is, \begin{equation}\label{eq Rworks}
\norm{A^\top S^\top \tX - X_1}_F^2 \le \eps \norm{X_1}_F^2.
\end{equation}
This is implied by Theorem~2 of \cite{chowdhury2018iterative}, since $A^\top S^\top \tX$ is
the output for $t=1$ of their Algorithm~1. (Or rather, it is their output for each column of $\tX$ and
corresponding column of $B$.) To invoke their Theorem~2, we need to show that their equation~(8)
holds, which per their discussion following Theorem~3, holds for ridge leverage score
sampling, with $O(\eps^{-2}\sd_\lambda\log\sd_\lambda)$ samples,
which our given $m_R$ yields.


When we invoke their Theorem~2,
we obtain
\begin{align}\label{eq drin}
\norm{A^\top S^\top \tX - X_1}_F
        & \le \eps (\norm{X_1}_F + \frac1{\sqrt{\lambda}}\norm{U_{k,\perp}B}_F).
\end{align}

Combining with \eqref{eq x1close} and using the triangle inequality,
we have that, abbreviating $\gamma_{\hAl,\hB}$ as $\gamma$,
up to the additive $U_{k,\perp}$ term in \eqref{eq drin}, we have
\begin{align*}
\norm{A^\top S^\top \tX - X^*}_F
 	   & \le \norm{A^\top S^\top \tX - X_1}_F + \norm{X_1 - X^*}_F
	\\ & \le \eps \norm{X_1}_F + \norm{X_1 - X^*}_F + \frac{\eps}{\sqrt{\lambda}}\norm{U_{k,\perp}B}_F
	\\ & \le \eps \norm{X_*}_F + (1+\eps)\norm{X_1 - X^*}_F + \frac{\eps}{\sqrt{\lambda}}\norm{U_{k,\perp}B}_F
	\\ & \le \eps\norm{X^*}_F + 2 \eps \gamma \norm{X_*}_F + \frac{\eps}{\sqrt{\lambda}}\norm{U_{k,\perp}B}_F
	\\ & \le \eps\left(1 + 2\gamma \right) \norm{X^*}_F + \frac{\eps}{\sqrt{\lambda}}\norm{U_{k,\perp}B}_F
\end{align*}

for small enough $\eps$,
as claimed.

The time is dominated by that for computing
$\tA^{-1} SB$, where $\tA = SARR^\top A^\top S^\top$,
which we do via the conjugate gradient method. Via standard
results (see, e.g., \cite{vishnoi}, Thm 1.1), 
in $O((T+m_S)\sqrt{\kappa(\tA)}\log(1/\alpha)) \colms'$
time, where $T$ is the time to compute the product of $\tA$ with a vector,
we can obtain $\tX$ with $\norm{\tX - \tA^{-1} SB}_\tA\le \alpha\norm{\tA^{-1} SB}_\tA$,
where the $\tA$-norm is $\norm{x}_\tA = x^\top\tA x$.
Since $S$ and $R$ are (at least) constant-factor subspace embeddings,
the singular values of $SAR$ are within a constant factor of those of $A$,
and so $\kappa(\tA)$ is within a constant factor of
\[
\kappa(AA^\top + \lambda\Iden)= (\lambda + \sigma_1(A)^2)/(\lambda + \sigma_1(A)^2) = \kappa_\lambda^2.
\]
We have
\begin{align*}
T   & = O(m_R m_S)
    = \tO(
        \eps^{-2}\log m_S Z_\lambda^2 \norm{A}_F^2
        \eps^{-2}\hkap^2 Z_\lambda^2 \norm{A}_F^2\log(\colms)
        )
    \\ & = \tO(\eps^{-4} \hkap^2 Z_\lambda^4 \norm{A}_F^4 \log(\colms))
\end{align*}
Our running time is $\tO(T\kappa_\lambda\log(1/\eps)) \colms'$, with $T$
as above. Translating to the notation using $\psi$ terms, the result follows.
\end{proof}

\section{Sampling from a Low-Rank Approximation}
\label{sec:LRA_proof}

We need the following lemma, implied by the algorithm and analysis
in Section 5.2 of \cite{boutsidis2016optimal}; for completeness we include a proof.

\begin{lemma}\label{lem PCP}
If $S\in\R^{m_S\times \rows}$ and $R$ are such that $SA$ is a PCP of $A\in\R^{\rows\times\colms}$,
and $SAR$ is a PCP of $SA$, for error $\eps$ and rank $\rkt$,
and $U\in\R^{m_S\times k}$ has orthonormal columns
such that
$\norm{(\Iden-UU^\top)SAR}_F \le (1+\eps)\norm{SAR - [SAR]_\rkt}$,
then 
\[
Y^* = \argmin_Y \norm{YU^\top SA - A}_F 
\]
has
\begin{equation}\label{eq pcp}
    \norm{Y^* U^\top SA - A}_F\le (1+O(\eps))\norm{A-A_{\rkt}}_F.
\end{equation}
We also have
\[
\norm{U^\top SA}_F^2 \ge \norm{A_\rkt}_F^2 - O(\eps)\norm{A}_F^2.
\]
\end{lemma}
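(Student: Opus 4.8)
The plan is to reduce the whole statement to two applications of the PCP guarantee, one "going down" from $A$ to $SAR$ and one "coming back up." First I would set up the projection language: write $P = UU^\top$, a rank-$k$ projection in $\R^{m_S\times m_S}$. The near-optimality hypothesis on $U$ says $\norm{(\Iden-P)SAR}_F \le (1+\eps)\norm{SAR-[SAR]_\rkt}_F$, i.e.\ $P$ captures almost the best rank-$\rkt$ subspace \emph{of $SAR$}. Since $SAR$ is an $(\eps,\rkt)$-PCP of $SA$, for every rank-$\rkt$ projection $P'$ we have $\norm{(\Iden-P')SAR}_F^2 = (1\pm\eps)\norm{(\Iden-P')SA}_F^2$; applying this to $P' = P$ and to $P' = $ the optimal rank-$\rkt$ projection for $SA$, and combining with the displayed hypothesis, gives $\norm{(\Iden-P)SA}_F \le (1+O(\eps))\norm{SA-[SA]_\rkt}_F$. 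Then repeat the same trick one level up: since $SA$ is an $(\eps,\rkt)$-PCP of $A$, and $\norm{(\Iden-P)SA}_F^2$ is a projection cost for $SA$ with a rank-$\rkt$ projection sitting on the \emph{left}, I would use the PCP inequality again (in the column-PCP form, transposing if needed so the projection acts on the correct side) to transfer the bound to $A$: there is some rank-$\rkt$ projection $\Pi$ acting on the row space of $A$ with $\norm{A(\Iden-\Pi)}_F$ or $\norm{(\Iden-\Pi)A}_F$ controlled, and ultimately $\min_{Y}\norm{YU^\top SA - A}_F \le (1+O(\eps))\norm{A - A_\rkt}_F$, which is exactly \eqref{eq pcp}. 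The point is that $Y^*U^\top SA$ is the best approximation of $A$ whose rows lie in $\mathrm{rowspan}(U^\top SA)$, and that rowspan is (up to PCP distortion) at least as good as the top-$\rkt$ row space of $A$.

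For the second conclusion, $\norm{U^\top SA}_F^2 \ge \norm{A_\rkt}_F^2 - O(\eps)\norm{A}_F^2$, I would argue via the Pythagorean identity. We have $\norm{U^\top SA}_F^2 = \norm{PSA}_F^2 = \norm{SA}_F^2 - \norm{(\Iden-P)SA}_F^2$. By the PCP property applied with the trivial (zero) projection, $\norm{SA}_F^2 = (1\pm\eps)\norm{A}_F^2$, so $\norm{SA}_F^2 \ge \norm{A}_F^2 - \eps\norm{A}_F^2$. For the subtracted term, the bound from the first part gives $\norm{(\Iden-P)SA}_F^2 \le (1+O(\eps))\norm{SA-[SA]_\rkt}_F^2 = (1+O(\eps))(\norm{SA}_F^2 - \norm{[SA]_\rkt}_F^2)$, and by PCP $\norm{[SA]_\rkt}_F^2$ and $\norm{SA}_F^2$ are each $(1\pm O(\eps))$ times their $A$-counterparts (for $\norm{[SA]_\rkt}_F^2$ this follows since the optimal rank-$\rkt$ projection of $SA$ has cost within $1+\eps$ of that of $A$'s optimal projection evaluated on $SA$, and vice versa). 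Substituting and collecting the $O(\eps)\norm{A}_F^2$ error terms yields $\norm{U^\top SA}_F^2 \ge \norm{A}_F^2 - \norm{A-A_\rkt}_F^2 - O(\eps)\norm{A}_F^2 = \norm{A_\rkt}_F^2 - O(\eps)\norm{A}_F^2$.

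The main obstacle I anticipate is bookkeeping the sidedness of the two PCP applications: Definition~\ref{def PCP} is stated for \emph{column} PCPs, where the projection multiplies $A$ on the right, so I must be careful that "$SA$ is a PCP of $A$" here is being used in the form where $S$ compresses the $n$ rows and the rank-$\rkt$ projection acts on the $m_S$-dimensional (row-count) side — this means the relevant PCP inequality is $\norm{(\Iden-P)SA}_F^2 = (1\pm\eps)\norm{(\Iden - \text{(pulled-back }P))A}_F^2$, which is not literally Definition~\ref{def PCP} but rather the standard transpose/"rank-$\rkt$ subspace" variant from \cite{cohen2017input, boutsidis2016optimal}. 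I would handle this by invoking the two-sided PCP guarantee of \cite{boutsidis2016optimal} Section~5.2 directly (which the lemma statement already cites), rather than re-deriving it, and then the remaining steps are just the triangle-inequality and Pythagorean manipulations sketched above with all constants absorbed into $O(\eps)$.
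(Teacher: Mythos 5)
Your plan is structurally the same as the paper's proof: both arguments are a chain of inequalities that pass from $A$ to $SA$, then to $SAR$, then back, using the two PCP guarantees together with optimality of the truncated SVD at each level, plus the hypothesis on $U$. However, two specific steps in your sketch need repair before the argument closes.

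First, you claim $\norm{SA}_F^2 = (1\pm\eps)\norm{A}_F^2$ ``by the PCP property applied with the trivial (zero) projection.'' That does not follow from Definition~\ref{def PCP}, which only constrains rank-$\rkt$ projections, not the rank-$0$ one. The paper instead partitions the $\colms$ coordinates into blocks of size $\rkt$, applies the PCP guarantee to the coordinate projection onto each block, and sums; this yields $\norm{SA}_F^2 = (1\pm\eps)\norm{A}_F^2$ without assuming anything about rank-$0$ projections. Without that (or an analogous) device, the Pythagorean bookkeeping in your second part does not go through from the stated hypotheses alone. Relatedly, your intermediate claim that $\norm{[SA]_\rkt}_F^2$ is ``$(1\pm O(\eps))$ times'' $\norm{[A]_\rkt}_F^2$ is not what the PCP gives; one only obtains the additive bound $\norm{[SA]_\rkt}_F^2 \ge \norm{[A]_\rkt}_F^2 - O(\eps)\norm{A}_F^2$, which happens to suffice here but should be stated as such.

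Second, the sidedness issue you flag is real, but ``transposing if needed'' does not resolve it: the left projector $P=UU^\top \in\R^{m_S\times m_S}$ lives in the compressed space and has no transpose living in $\R^{\rows\times\rows}$ to which the PCP of $SA$ for $A$ could apply. The mechanism the paper uses is not a transposition but an \emph{inequality}: since $UU^\top SA$ has rows in $\mathrm{rowspan}(U^\top SA)$, the best approximation of $SA$ with rows in that span is at least as good, i.e.\ $\norm{SA(\Iden - (U^\top SA)^+U^\top SA)}_F \le \norm{(\Iden - UU^\top)SA}_F$. The $\colms\times\colms$ projector $(U^\top SA)^+U^\top SA$ has rank at most $k$, so it is the object to which the PCP of $SA$ for $A$ applies, giving $\norm{A(\Iden-(U^\top SA)^+U^\top SA)}_F \le (1+\eps)\norm{SA(\Iden-(U^\top SA)^+U^\top SA)}_F$. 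Your concluding sentence (``$Y^*U^\top SA$ is the best approximation of $A$ whose rows lie in $\mathrm{rowspan}(U^\top SA)$'') is the right intuition, but you need this explicit change of projector, not a transpose, to make the PCP step valid. Once those two repairs are made, the rest of your chain matches the paper's argument.
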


\begin{proof}
Note that for matrix $Y$,
$Y(\Iden - Y_\rkt^+Y_\rkt) = (\Iden - Y_\rkt Y_\rkt^+) Y$,
and that $UU^\top SA$ is no closer to $SA$ than
is the projection of $SA$ to the rowspace of $U^\top SA$,
and that $UU^\top = (SAR)_{\rkt}(SAR)_{\rkt}^+$
we have
\begin{align*}
\norm{A - Y^*U^\top SA}_F
        = \norm{A(\Iden - (U^\top SA)^+U^\top SA)}_F
     & \le (1+\eps)\norm{SA(\Iden - (U^\top SA)^+U^\top SA)}_F
    \\ & \le (1+\eps)\norm{(\Iden - UU^\top)SA}_F
    \\ & \le (1+\eps)^2\norm{(\Iden - UU^\top)SAR}_F
    \\ & \le (1+\eps)^3\norm{(\Iden - (SAR)_{\rkt}(SAR)_{\rkt}^+)SAR}_F
    \\ & \le (1+\eps)^3\norm{(\Iden - (SA)_{\rkt}(SA)_{\rkt}^+ )SAR}_F
    \\ & \le (1+\eps)^4 \norm{(\Iden - (SA)_{\rkt}(SA)_{\rkt}^+)SA}_F
    \\ & = (1+\eps)^4 \norm{SA(\Iden - (SA)_{\rkt}^+(SA)_{\rkt})}_F
    \\ & \le (1+\eps)^4 \norm{SA(\Iden - A_{\rkt}^+ A_{\rkt})}_F
    \\ & \le (1+\eps)^5 \norm{A(\Iden - A_{\rkt}^+ A_{\rkt})}_F
    \\ & = (1+\eps)^5 \norm{A-A_{\rkt}}_F = (1+O(\eps)) \norm{A-A_{\rkt}}_F,
\end{align*}
as claimed.

For the last statement:
we have $\norm{SA}_F^2 \ge (1-\eps)\norm{A}_F^2$,
since $SA$ is a PCP, and by considering the projection of $A$ onto the
rowspans of blocks of $\rkt$ of
its rows. We have also $\norm{SA-[SA]_\rkt}_F^2 \le (1+\eps)\norm{A-[A]_\rkt}^2$,
using that $SA$ is a PCP. Using these observations, we have
\begin{align*}
\norm{[SA]_\rkt}_F^2
       & = \norm{SA}_F^2 - \norm{SA-[SA]_\rkt}_F^2
    \\ & \ge (1-\eps)\norm{A}_F^2 - (1+\eps) \norm{A-[A]_\rkt}_F^2
    \\ & = \norm{[A]_\rkt}_F^2 - \eps(\norm{A}_F^2 + \norm{A-[A]_\rkt}_F^2)
    \\ & \ge \norm{[A]_\rkt}_F^2 - 3\eps\norm{A}_F^2.
\end{align*}
Similarly, $\norm{[SAR]_\rkt}_F^2 \ge \norm{[SA]_\rkt}_F^2 - 3\eps\norm{SA}_F^2$,
using that $SAR$ is a PCP of $SA$.
We then have, using these inequalities, the PCP properties, and the hypothesis for $U$, that
\begin{align*}
\norm{U^\top SA}_F^2
       & = \norm{UU^\top SA}_F^2
    \\ & = \norm{SA}_F^2 - \norm{(I-UU^\top)SA}_F^2
    \\ & \ge (1-\eps)\norm{SAR}_F^2 - (1+\eps)^2\norm{SAR - [SAR]_\rkt}_F^2
    \\ & \ge \norm{[SAR]_\rkt}_F^2 - 4\eps\norm{SAR}_F^2
    \\ & \ge (\norm{[SA]_\rkt}_F^2 - 3\eps\norm{SA}_F^2) - 4(1+\eps)\eps\norm{SA}_F^2
    \\ & \ge (\norm{[A]_\rkt}_F^2 - 3\eps\norm{A}_F^2) - 3\eps(1+\eps)\norm{A}_F^2 - 4(1+\eps)^2\eps\norm{A}_F^2
    \\ & \ge \norm{[A]_\rkt}_F^2 - 13\eps\norm{A}_F^2,
\end{align*}
for small enough $\eps$,
and the last statement of the lemma follows.
\end{proof}


Before a proof, we give a re-statement of Theorem~\ref{thm lra inf}.

\begin{theorem}[Dynamic Data Structure for LRA]\label{thm dyn LRA}
Given a matrix  $A\in\R^{\rows\times \colms}$,
 target rank $k$, and 
 estimate 
 $\hsig_{k} \le \sigma_{k}(A)$,
error parameter $\eps>0$, and 
estimate $\tau$ of $\norm{A-A_{k}}_F^2$, there exists a data structure representing a matrix $Z \in \R^{\rows\times \colms}$ with rank $k$ such that if $\norm{A_{k}}_F^2\ge \eps\norm{A}_F^2$, with probability at least $99/100$, 
 $\norm{A-Z}^2_F \le (1+O(\eps))\norm{A-A_{k}}^2_F$, where $A_{k}$ is the best rank-$k$ approximation
to $A$. Further, the time taken to construct the representation of $Z$ is 
\[
\tO(\eps^{-6}k^3 + \eps^{-4} \psi_\lambda (\psi_\lambda + k^2 + k \psi_{k})),
\]
where $\psi_\lambda= \norm{A}_F^2/(\tau/k + \hsig_\rA^2)$
and $\psi_{k} = \norm{A}_F^2/\sigma_{k}(A)$.
Given $j\in [\colms]$, $i\in [\rows]$ can be generated with probability
$(Z)_{ij}^2/\norm{Z)_{*,j}}^2$ in expected time $O(\norm{A}_F^2/\hsig_\rA^2 + m_R^2 \kappa^2)$, where $\kappa$ is the condition number of $A$,
and $m_R = O(\rA\log\rA + \eps^{-1}\rA)$.
\end{theorem}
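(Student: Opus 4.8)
The plan is to analyze Algorithm~\ref{alg:dyn lowrank}, whose output represents $Z = ARWSA$, by chaining projection-cost-preserving (PCP) sketches and then viewing the final factorization as a sketched multiple-response regression. Setting $\lambda = \tau/k$, which is within a constant factor of $\norm{A-A_k}_F^2/k$, a ridge-leverage-score sample of $O(\eps^{-2}k\log k)$ rows at scale $\lambda$ is an $(\eps,k)$-PCP of $A$ (the standard ridge-leverage construction of~\cite{cohen2017input,cohen2015dimensionality}), and by Lemma~\ref{prelim:lem Lsquare} the length-squared row sample $S$ of size $m_S = O(\eps^{-2}\psi_\lambda\log k)$ contains such a sample, so $SA$ is an $(\eps,k)$-PCP of $A$. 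Using the PCP property to control $\norm{SA}_F$, $\sigma_k(SA)$, and the statistical dimension of $SA$ at scale $\lambda$, the same reasoning applied to columns shows $SAR_1$ is an $(\eps,k)$-PCP of $SA$, and by Theorem~1 of~\cite{cohen2017input} the sketches $R_2$ and $S_2$ make $SAR_1R_2$ a PCP of $SAR_1$ and $S_2SAR_1R_2$ a PCP of $SAR_1R_2$; composing, $SAR_1R_2$ is an $(O(\eps),k)$-PCP of both $A$ and $SA$. Since $S_2SAR_1R_2$ is a PCP of $SAR_1R_2$, its top-$k$ right singular space $V$ satisfies $\norm{SAR_1R_2(I-VV^\top)}_F \le (1+\eps)\norm{SAR_1R_2 - [SAR_1R_2]_k}_F$, and since $U$ from $\textsc{QR}(SAR_1R_2V)$ spans the columns of $SAR_1R_2V$, the matrix $UU^\top$ meets the hypothesis of Lemma~\ref{lem PCP} (with its $R$ taken to be $R_1R_2$). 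That lemma yields $\norm{Y^*U^\top SA - A}_F \le (1+O(\eps))\norm{A-A_k}_F$ for $Y^* = A(U^\top SA)^+$, together with $\norm{U^\top SA}_F^2 \ge \norm{A_k}_F^2 - O(\eps)\norm{A}_F^2$; under the assumption $\norm{A_k}_F^2 \ge \eps\norm{A}_F^2$ (shrinking $\eps$ by a constant if needed) this forces $\rank(U^\top SA) = \rank(Z) = k$ with controlled condition number.

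It then remains to show $Z = AR(U^\top SAR)^+U^\top SA$ approximates $Y^*U^\top SA$. Writing $G = U^\top SA$, we have $Z = \hat N G$ with $\hat N = AR(GR)^+ = \argmin_N \norm{NGR - AR}_F$, while $Y^* = AG^+ = \argmin_N \norm{NG - A}_F$; so it suffices that $R^\top$ be an $O(\sqrt\eps)$-accurate sketch for the regression with design matrix $G^\top$ (which is $d\times k$ of rank $\le k$): an $\ell_2$ subspace embedding of $\colspan(G^\top)$ together with the approximate-matrix-product bound $\norm{G(RR^\top-I)(A-Y^*G)^\top}_F \le \sqrt{\eps/k}\,\norm{G}_F\norm{A-Y^*G}_F$. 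I would show $R_3$ (length-squared column sampling) already supplies such a sketch: since the column norms of $SA$ dominate those of $G = U^\top SA$ and $\sigma_k(G) = \Omega(\sigma_k(A))$, length-squared sampling of $m_{R_3} = O(\eps^{-1}\psi_k(\eps_0^{-2}\log k + \eps^{-1}))$ columns contains a leverage-score sample of $G^\top$ of the required size, the overhead $\psi_k = \norm{A}_F^2/\sigma_k(A)^2$ being the price of converting length-squared probabilities into leverage scores of the rank-$k$ matrix $G^\top$; then $R_4$, the exact leverage-score sample of $(U^\top SAR_3)^\top$ from $\textsc{LevSample}$ with $m_{R_4} = O(\eps_0^{-2}k\log k + \eps^{-1}k)$ samples (Theorem~\ref{thm levsample}), is a subspace embedding with the sharpened matrix-product error for $\colspan((GR_3)^\top)$, which pulls back through $R_3$ to $\colspan(G^\top)$. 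Hence $R = R_3R_4$ is the desired sketch and $\norm{Z-A}_F \le (1+O(\eps))\norm{Y^*G - A}_F \le (1+O(\eps))\norm{A-A_k}_F$, and rescaling $\eps$ by a constant completes the error bound.

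For the running time I would add the costs of the two $\textsc{LenSqSample}$ calls (Lemma~\ref{lem dynsamp}), the two applications of Theorem~1 of~\cite{cohen2017input} to matrices of size at most $m_S\times m_R$, the SVD of the $O(\eps^{-2}k\log k)$-square matrix $S_2SAR_1R_2$ (the $\tO(\eps^{-6}k^3)$ term), the QR step, the $\textsc{LevSample}$ call on the $m_{R_3}\times k$ matrix $(U^\top SAR_3)^\top$ (Theorem~\ref{thm levsample}), and the multiplications forming $U^\top SAR$ and $W$; collecting terms gives $\tO(\eps^{-6}k^3 + \eps^{-4}\psi_\lambda(\psi_\lambda + k^2 + k\psi_k))$. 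For the sampler, the $j$-th column of $Z$ is $Z_{*,j} = Aw_j$ where $w_j = RW(SA)_{*,j}\in\R^\colms$ has at most $m_R$ nonzeros (as $R$ is a column selector); forming $(SA)_{*,j}$ via $\DynSamp(A)$ in $O(m_S\lognd)$ time and then $w_j$ from the stored $W$ is a per-query cost subsumed by the claimed bound. To draw $i$ with probability $(Aw_j)_i^2/\norm{Aw_j}^2$ I would use the rejection sampler of~\cite{tang2019quantum}: pick a nonzero coordinate $l$ of $w_j$ with probability $\propto w_{j,l}^2\norm{A_{*,l}}^2$, then a row $i$ with probability $A_{il}^2/\norm{A_{*,l}}^2$, and accept with probability $(Aw_j)_i^2/(m_R\sum_l w_{j,l}^2A_{il}^2)$, which lies in $[0,1]$ by Cauchy--Schwarz; each trial costs $O(m_R\lognd)$ to evaluate $(Aw_j)_i = \sum_l w_{j,l}A_{il}$, and after replacing $w_j$ by its projection onto $\colspan(A^\top)$ (which leaves $Aw_j$ unchanged) the expected number of trials is $O(m_R\sum_l w_{j,l}^2\norm{A_{*,l}}^2/\norm{Aw_j}^2) = O(m_R\kappa^2)$, giving expected time $O(\norm{A}_F^2/\hsig_\rA^2 + m_R^2\kappa^2)$. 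The main obstacle is the composition in the second paragraph --- proving that length-squared column sampling followed by exact leverage-score sampling on the rank-$k$ matrix $U^\top SAR_3$ really is a relative-error regression sketch with exactly the stated sample sizes and the $\psi_k$ overhead, and checking that the error stays $(1+O(\eps))$ through the entire PCP chain, especially near the boundary $\norm{A_k}_F^2 \approx \eps\norm{A}_F^2$, where one must instead argue that the trivial solution $0$ already achieves relative error.
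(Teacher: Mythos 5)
Your proposal follows essentially the same route as the paper's proof: the same PCP chain ($S$, $R_1$, $R_2$, $S_2$ via Lemma~\ref{prelim:lem Lsquare} and \cite{cohen2017input}), the same application of Lemma~\ref{lem PCP}, the same two-stage $R_3$ (length-squared, with the $\eps^{-1}\psi_k$ overhead from the $\norm{U^\top SA}_F^2\ge \eps\norm{SA}_F^2$ and $\sigma_k(U^\top SA)\gtrsim\sigma_k(A)$ bounds) followed by $R_4$ (leverage-score) composition for the regression sketch, the same running-time accounting, and the same \cite{tang2019quantum}-style rejection sampler for queries. The step you flag as the ``main obstacle'' is handled in the paper exactly as you sketch it, via Lemma~32 and Theorem~36 of \cite{CW13}; so the proposal is correct and matches the paper.
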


\begin{proof} 
The matrix $Z$ is the implicit output of Algorithm~\ref{alg:dyn lowrank}.
In that algorithm, the choice of $m_{S} = O(\hat m_{\SS } Z_\lambda^2 \norm{A}_F^2)$ rows constitutes
an effective $k\hat m_{S}= O(\eps^{-2}k\log k)$ ridge-leverage score
samples of the rows of $A$.
We assume that the input $\tau$ is within a constant factor 
of $\norm{A-A_{k}}^2_F$, so that $\lambda = \tau/k$
is within a constant factor of $\norm{A-A_k}_F^2/k$.
Theorem~6 of \cite{cohen2017input}
implies that under these conditions,
${\SS }A$ will be a rank-$k$ Projection-Cost Preserving (PCP) sketch of $A$ with error parameter $\eps$,
a $(k,\eps)$-PCP.

Similarly to $\SS $, $R_1$ will be a (column) rank-$k$ PCP of ${\SS }A$,
here using that the PCP properties of $SA$ imply that
$\norm{(S(A-A_{k})}_F^2 = (1\pm\eps)\norm{A-A_{k}}_F^2$, and
so the appropriate $\lambda$, and $Z_\lambda$, for $SA$ are within constant factors
of those for $A$. 
Let $\tA= \SS AR_1$.
Lemma~16 and Theorem~1 of \cite{cohen2017input} imply
that applying their Algorithm~1 to $\tA$ yields
$S_2\in\R^{m_{S_2}\times m_{\SS }}$ so that $S_2\tA $
is a $(k,\eps)$-PCP for $\tA $, and similarly
$S_2\tA R_2$ is a $(k,\eps)$-PCP for $S_2\tA $.

We apply Lemma~\ref{lem PCP} with $\tA^\top$, $R_2^\top$,
$S_2^\top$, and $V^\top$ in the roles of $A$, $S$, $R$, and $U$ in the lemma.
We obtain that $\tY = (\tA R_2 V)^+\tA = \argmin_Y \norm{\tA R_2 V Y - \tA }_F $ has
$\norm{\tA R_2 V \tY - \tA }_F
\le (1+O(\eps))\norm{\tA  - \tA_{k}}_F$,
that is, $U$ as constructed in Algorithm~\ref{alg:dyn lowrank}
has $UU^\top \tA = \tA R_2 V \tY$, and
therefore satisfies the conditions of
Lemma~\ref{lem PCP} for $A$, $\SS $, $R_1$.
This implies that
$Y^* = A(U^\top SA)^+=  \argmin_Y \norm{YU^\top SA - A}_F  $ has
\begin{equation}\label{eq pcp2}
    \norm{Y^* U^\top SA - A}_F\le (1+O(\eps))\norm{A-A_{k}}_F.
\end{equation}
It remains to solve
the multiple-response regression problem
$min_Y \norm{YU^\top SA - A}_F$,
which we do more quickly using the samplers $R_3$ and $R_4$.

We next show that $R_3^\top$ is a subspace
$\eps_0$-embedding of $(U^\top SA)^\top$,
and supports low-error matrix product estimation,
so that Thm.~36 of \cite{CW13} can be applied.
Per Lemma~\ref{lem embed} and per Lemma~32 of \cite{CW13},
$k \hat m_{R_3} = O(\eps_0^{-2}k\log k + \eps^{-1}k)$
leverage-score samples of the columns of $U^\top SA$
suffice for these conditions to hold.

To obtain $k \hat m_{R_3}$ leverage score samples,
we show that $1/\eps$ length-squared samples of
the columns of $SA$ suffice to contain one length-squared 
sample of $U^\top SA$, and also
that
$\norm{(U^\top SA)^+}\le 1/\hsig_{k}$, using the input condition
on $\hsig_{k}$ that $\sigma_{k}(A)\ge \hsig_{k}$,
so that the given value of $m_{R_3}$ in the call to $\textsc{LenSqSample}$
for $R_3$ is valid.

For the first claim, by hypothesis
$\norm{A_{k}}_F^2\ge \eps\norm{A}_F^2$,
and by adjusting constants, $U$ as computed satisfies
the conditions of Lemma~\ref{lem PCP} for some $\eps'=\alpha\eps$
for constant $\alpha>0$, so by that lemma and by hypothesis
\begin{align*}
\norm{U^\top SA}_F^2
        & \ge \norm{A_{k}} - O(\alpha\eps)\norm{A}_F^2 
    \\ & \ge \eps(1-O(\alpha))\norm{A}_F^2
    \\ & \ge \eps(1-O(\alpha))(1-\eps)\norm{SA}_F^2,
\end{align*}
so adjusting constants, we have $\norm{U^\top SA}_F^2 \ge \eps\norm{SA}_F^2$.
Using that $U$ has orthonormal columns, we have for $j\in[\colms]$
that $\norm{U^\top SA_{*,j}}/\norm{U^\top SA}_F^2 \le \norm{SA_{*,j}}/\eps\norm{SA}_F^2$, so the probability
of sampling $j$ using length-squared probabilities for
$SA$ is least $\eps$ times that for $U^\top SA$.

For the claim for the value of $m_{R_3}$ used for $R_3$, using the
PCP properties of $SA$ and $SAR_1$, we have
\[
\sigma_{k} (U^\top SA) = \sigma_{k} (UU^\top SAR_1) = \sigma_{k}(SAR_1) \ge (1-\eps) \sigma_{k}(SA) \ge (1-O(\eps)) \sigma_{k}(A).
\]
so the number of length-squared samples returned by \textsc{LenSqSample} suffice.

So using Thm.~36 of \cite{CW13},
$\tY_3 = \argmin_Y \norm{(Y U^\top SA - A)R_3}_F$
satisfies
\[
\norm{\tY_3 U^\top SA - A}_F
    \le (1+\eps)\min_Y \norm{Y U^\top SA - A}_F
    \le (1+O(\eps))\norm{A-A_k}_F,
\]
where the last inequality follows from \eqref{eq pcp}.

Similar conditions and results can be applied
to direct leverage-score sampling of the columns
of $U^\top SAR_3$, resulting in
$\tY_4 = \min_Y \norm{(Y U^\top SAR_3 - AR_3)R_4}_F$,
where there is $m_{R_4} = O(\eps_0^{-2}k\log k + \eps^{-1}k)$ such that these
conditions hold for $R_4$.
This implies $\tY_4$ is an approximate solution
to $\min_Y \norm{(YU^\top SA-A)R_3}_F$,
and therefore
$AR\tY_4 = AR(U^\top SAR)^+$
has $\norm{AR\tY_4 U^\top SA - A}_F \le (1+O(\eps))\norm{A-A_k}_F$,
as claimed. We have $W \gets (U^\top SAR)^+ U^\top = \tY_4 U^\top $,
so the claimed output condition on $ARW\SS A$ holds.

Turning to the time needed,
Lemma~16 and Theorem~1 of \cite{cohen2017input} imply
that the time needed to construct
$S_2$ and $R_2$ is
\begin{equation}\label{eq LR S_2}
    O(m_{R_1}m_{\SS } + k^2 m_{\SS }) = O(m_S(m_S + k^2))
\end{equation}

The time needed to construct $V$ from $S_2SAR_1R_2\in\R^{m_{S_2}\times m_{S_2}}$ is
\begin{equation}\label{eq LR V}
O(m_{S_2}^3) = \tO(\eps^{-6}k^3)
\end{equation}

The time needed to construct $U$ from $V\in\R^{m_{R_2}\times k}$
and $\SS AR_1R_2\in \R^{m_{\SS}\times m_{R_2}}$
by multiplication and QR factorization is
\begin{equation}\label{eq LR U}
    O(k m_{R_2} m_{\SS } + k^2 m_{\SS }) = \tO(m_S k^2 \eps^{-2}  )
\end{equation}

Computation of $U^\top SAR_3$ requires
$O(k m_S m_{R_3})$ time, where
$m_{R_3} = O(\hat m_{R_3}\eps^{-1} Z_{k}^2\norm{A}_F^2$,
and $\hat m_{R_3} = O(\log k + \eps^{-1})$,
that is,
\begin{equation}\label{eq LR R3}
    \tO(m_S k \eps^{-2} Z_{k}^2\norm{A}_F^2)
\end{equation}
time.
Leverage-score sampling of the rows of
$(U^\top SAR_3)^\top \in\R^{m_{R_3}\times k}$
takes, applying Theorem~\ref{thm levsample}
and using $m_{R_4} = O(k(\log k + \eps^{-1}))$,
time at most
\begin{align}\label{eq LR LS}
O(log(m_{R_3})k & m_{R_3} + k^\omega\log\log(k m_{R_3}) + k^2\log m_{R_3} +  m_{R_4} k m_{R_3}^{1/log(m_{R_3})})
\\ & = \tO(k\eps^{-2}Z_{k}^2\norm{A}_F^2  + k^\omega +  k^2 \eps^{-1})
\end{align}
Computation of $(U^\top SAR)^+$ from $U^\top SAR$
requires $O(k^2 m_{R_4}) = \tO(\eps^{-1}k^3)$
time. (With notation that $R$ has $m_R = m_{R_4}$ columns.)
Given $(U^\top SAR)^+$, computation of $W=(U^\top SAR)^+U^\top$ requires
\begin{equation}\label{eq LR U+}
    O(k m_R m_S) = \tO(m_S k^2 \eps^{-1})
\end{equation}
 time.
Putting together
\eqref{eq LR S_2},\eqref{eq LR U}, \eqref{eq LR R3}, \eqref{eq LR U+},
\eqref{eq LR V}, \eqref{eq LR LS}, we have
\begin{align*}
    & O(m_S(m_S + k^2))
    + \tO(m_S k^2 \eps^{-2})
    + \tO(m_S k \eps^{-2} Z_{k}^2\norm{A}_F^2)
    + \tO(m_S k^2 \eps^{-1})
    \\ & + \tO(\eps^{-6}k^3)
         + tO(k\eps^{-2}Z_{k}^2\norm{A}_F^2  + k^\omega +  k^2 \eps^{-1})
    \\ & = \tO(m_S(m_S + k^2 \eps^{-2} + k \eps^{-2} Z_{k}^2\norm{A}_F^2)
            + \eps^{-6}k^3)
\end{align*}
Here
$m_S = O(\hat m_S Z_\lambda^2\norm{A}_F^2) 
= \tO(\eps^{-2} Z_\lambda^2\norm{A}_F^2)$,
so the time is
\[
\tO(\eps^{-6}k^3 + \eps^{-4} Z_\lambda^2\norm{A}_F^2 (
Z_\lambda^2\norm{A}_F^2 + k^2 + k Z_{k}^2\norm{A}_F^2))
\]

Queries as in the theorem statement for given $j\in [\colms]$ can be
answered as in \cite{tang2019quantum},
in the time given.
Briefly: given $j$, let $v\gets (WSA)_{*,j}\in \R^{m_R}$. Let $\hA$ denote $AR$.
Using $\textsc{LenSqSample}$ (and large $m_R$), generate
a sampling matrix $S_3$ with $O(Z^2\norm{\hA}^2)$ rows,
and estimate $\norm{\hA v}^2\approx \beta_v \gets \norm{S\hA v}^2$.
Generate $i\in[\rows]$ via rejection sampling as follows.
In a given trial, pick $j^*\in[\colms]$ with probability
proportional to
$\norm{\hA_{*,j}}^2 v_j^2$ using $\DynSamp(A)$,
then pick $i\in [m_R]$ with probability
$\hA_{i,j^*}^2/\norm{\hA_{*,j^*}}^2$.
This implies that $i\in[\rows]$ has been
picked with probability
$p_i= \sum_j \hA_{ij}^2v_j^2/\sum_j \norm{\hA_{*,j}}^2v_j^2$.
Now for a value $\alpha>0$,
accept with probability $q_i/\alpha p_i$,
where $q_i= (\hA_{i,*}v)^2/\beta_v$,
otherwise reject. This requires $\alpha \ge q_i/p_i$.
and takes expected trials $\alpha$. So an upper bound
is needed for
\[
\frac{q_i}{p_i}
    = \frac{(\hA_{i,*}v)^2}{\beta_v}\frac{\sum_j \norm{\hA_{*,j}}^2v_j^2}{\sum_j \hA_{ij}^2v_j^2}.
\]
We have $(\hA_{i,*}v)^2\le m_R \sum_j \hA_{ij}^2v_j^2$ using Cauchy-Schwarz, and $\beta_v \approx \norm{\hA v}^2 \ge \norm{v}^2/\norm{\hA^+}^2$, and also
$\sum_j \norm{\hA_{*,j}}^2v_j^2 \le \norm{v}^2 \max_j\norm{\hA_{*,j}}^2 \le \norm{v}^2\norm{\hA}^2$.
Putting this together $\alpha \ge m_R \norm{\hA}^2\norm{\hA^+}^2 = O(m_R\kappa(A))$
will do. The work per trial is $O(m_R\lognd)$
,and putting that together with the time to compute $\beta_v$, the theorem follows.
\end{proof}

\section{Additional Experiments}
\label{sec more exp}

\subsection{Low-Rank Approximation}
\label{sec:appendix_exp_lra}

As we stated in Section~\ref{exp:lra}, although the accuracy of our algorithm is slightly worse, by increasing the sample size slightly, our algorithm achieves a similar accuracy as~\cite{quantum_practice}, but still has a faster runtime. The results are shown in Table~\ref{tab:appendix_lra}. Here we set $(r,c) = (500, 800)$ for MovieLens 100K and $(r,c) = (700, 1100)$ for KOS data for our algorithms, but do not change $(r,c)$ for the algorithm in~\cite{quantum_practice} as in Section~\ref{exp:lra}.

\begin{table}[h]
\caption{Performance of our algorithm and ADBL on MovieLens 100K and KOS data, respectively.}
\label{tab:appendix_lra}
\centering
\begin{tabular}{|c|c|c|c|}
\hline
& $k=10$ & $k=15$ & $k=20$ \\
\hline
$\eps$(Ours) & 0.0323 & 0.0439 & 0.0521\\
\hline
$\eps$(ADBL) & 0.0262 & 0.0424 & 0.0538\\
\hline
\hline
 Runtime  & \multirow{2}{*}{0.341s} & \multirow{2}{*}{0.365s} & \multirow{2}{*}{0.370s} \\
(Ours, Query) & & & \\
\hline
Runtime & \multirow{2}{*}{0.412s} & \multirow{2}{*}{0.417s} & \multirow{2}{*}{0.415s} \\
(Ours, Total) & & & \\
\hline
Runtime& \multirow{2}{*}{0.863s} & \multirow{2}{*}{0.917s} & \multirow{2}{*}{1.024s} \\
(ADBL, Query) & & & \\
\hline
Runtime & \multirow{2}{*}{0.968s} & \multirow{2}{*}{1.003s} & \multirow{2}{*}{1.099s} \\
(ADBL, Total)& & & \\
\hline
Runtime of SVD & \multicolumn{3}{c|}{2.500s}\\
\hline
\end{tabular}
\begin{tabular}{|c|c|c|c|}
\hline
& $k=10$ & $k=15$ & $k=20$ \\
\hline
$\eps$(Ours) & 0.0291 & 0.0390 & 0.0476\\
\hline
$\eps$(ADBL)  & 0.0186 & 0.0295 & 0.0350\\
\hline
\hline
 Runtime  & \multirow{2}{*}{0.826s} & \multirow{2}{*}{0.832s} & \multirow{2}{*}{0.831s} \\
(Ours, Query) & & & \\
\hline
Runtime & \multirow{2}{*}{0.979s} & \multirow{2}{*}{0.994s} & \multirow{2}{*}{0.986s} \\
(Ours, Total) & & & \\
\hline
Runtime& \multirow{2}{*}{1.501s} & \multirow{2}{*}{1.643s} & \multirow{2}{*}{1.580s} \\
(ADBL, Query) & & & \\
\hline
Runtime & \multirow{2}{*}{1.814s} & \multirow{2}{*}{1.958s} & \multirow{2}{*}{1.897s} \\
(ADBL, Total)& & & \\
\hline
Runtime of SVD & \multicolumn{3}{c|}{36.738s}\\
\hline
\end{tabular}
\end{table}

\vfill


\end{document}